\newtheorem{theorem}{Theorem}
\newtheorem{corollary}{Corollary}
\newtheorem{lemma}{Lemma}
\newtheorem{proposition}{Proposition}
\newtheorem{remark}{Remark}
\newcommand{\rspace}{\vspace{-1.7\baselineskip}\\}
\begin{document}
%
\title{$n$-Channel Asymmetric Entropy-Constrained Multiple-Description Lattice Vector Quantization}
%
%
\author{Jan~\O stergaard,~\IEEEmembership{Member,~IEEE,}
        Richard~Heusdens, and Jesper~Jensen
\thanks{Manuscript received April 25, 2006; revised August 09, 2010.
This research was performed while all authors were at Delft University of Technology, Delft, The Netherlands,
and was supported by the Technology Foundation STW, applied science division of NWO and the technology programme of the ministry of Economics Affairs. 
Jan {\O}stergaard (janoe@ieee.org) is now with Aalborg University, Aalborg, Denmark. 
Richard Heusdens (r.heusdens@tudelft.nl) is with Delft University of Technology, Delft, The Netherlands,
and Jesper Jensen (jsj@oticon.dk) is now with Oticon, Copenhagen, Denmark. 
This work was presented in part at the International Symposium on Information Theory, 2005 and 2006.}}
%
%
%
%
\markboth{IEEE Transactions on Information Theory, December 2010}{\O stergaard \MakeLowercase{\textit{et al.}}: $n$-Channel Asymmetric Entropy-Constrained Multiple-Description Lattice Vector Quantization}
%
%



\maketitle

\begin{abstract}
This paper is about the design and analysis of an index-assignment (IA) based multiple-description  coding scheme for the $n$-channel asymmetric case. 
We use entropy constrained lattice vector quantization and 
restrict attention to simple reconstruction functions, which are given by the inverse IA function when all descriptions are received or otherwise by a weighted average of the received descriptions.
We consider smooth sources with finite differential entropy rate and MSE fidelity criterion. 
As in previous designs, our construction is based on nested lattices which are combined through a single IA function. 
The results are exact under high-resolution conditions and asymptotically as the nesting ratios of the lattices approach infinity.
For any $n$, the design is asymptotically optimal within the class of
IA-based schemes. 
Moreover, in the case of two descriptions and finite lattice vector dimensions greater than one, the performance is strictly better than that of existing designs. 
In the case of three descriptions, we show that in the limit of large lattice vector dimensions, points on the inner bound of Pradhan et al.\ can be achieved. 
Furthermore, for three descriptions and finite lattice vector
dimensions, we show that the IA-based approach yields, in the
symmetric case,  a smaller rate loss than the recently proposed source-splitting approach. 
\end{abstract}

\begin{IEEEkeywords}
high-rate quantization, index assignments, lattice quantization, multiple description
coding.
\end{IEEEkeywords}

%
\IEEEpeerreviewmaketitle

\section{Introduction}
%
%
%
%
\label{sec:intro}
\PARstart{M}{ultiple-description} coding (MDC) is about (lossy) encoding of information for transmission over an unreliable $n$-channel communication system. The channels may break down resulting in erasures and a loss of information at the receiving side. 
The receiver knows which subset of the $n$ channels that are working; the transmitter does not. 
The problem is then to design an MDC system which, for given channel rates, minimizes the distortions due to reconstruction of the source using information from any subsets of the channels. 

The achievable multiple-description (MD) rate-distortion function is completely known for the case of two channels, squared-error fidelity criterion and the memoryless Gaussian source~\cite{ozarow:1980,elgamal:1982}. An extension to colored Gaussian sources was provided in~\cite{zamir:1999,zamir:2000,chen:2007}. 
Inner and outer bounds to the $n$-channel quadratic Gaussian rate-distortion region for memoryless sources was presented in~\cite{venkataramani:2003,pradhan:2004,puri:2005,wang:2006,wang:2008,tian:2008b}.

Practical symmetric multiple-description lattice vector quantization (MD-LVQ) based schemes for two descriptions have been introduced in~\cite{vaishampayan:2001}, which in the limit of infinite-dimensional source vectors and under high-resolution assumptions, approach the symmetric MD rate-distortion bound.\footnote{The term symmetric relates to the situation where all channel rates (description rates) are equal and the distortion depends only upon the number of working channels (received descriptions) and as such not on which of the channels that are working. In the asymmetric case, the description rates and side distortions are allowed to be unequal.}
An extension to $n>2$ descriptions was presented in~\cite{ostergaard:2004b}. 
Asymmetric MD-LVQ was considered in~\cite{diggavi:2002} for the case of two descriptions.
Common for all of the designs~\cite{vaishampayan:2001,ostergaard:2004b,diggavi:2002} is that a central quantizer is first applied on the source after which an \emph{index-assignment} (IA) algorithm (also known as a labeling function) maps the reconstruction points of the central quantizer to reconstruction points of the side quantizers, which is an idea that was first presented in~\cite{vaishampayan:1993}. These designs are usually referred to as IA based designs.

There also exists non IA based $n$-channel schemes, which are proven
optimal in the two-channel quadratic Gaussian case. In particular, the
source-splitting approach of Chen et al.~\cite{chen:2006} and the
delta-sigma quantization approach of {\O}stergaard et
al.~\cite{ostergaard:2009,kochman:2008}. In addition, the following
recent work~\cite{mohajer:2008a,tian:2009a}, provide simple
constructions that are approximately optimal.\footnote{Note that the
  recent
  works~\cite{ostergaard:2009,kochman:2008,mohajer:2008a,tian:2009a}
  appeared after the first submission of this paper.}

While the different designs mentioned above are able to achieve the rate-distortion bounds in the asymptotical limit as the lattice vector quantizer dimension $(L)$ gets arbitrarily large, there is an inherent \emph{rate loss} when finite dimensional vector quantizers are employed.\footnote{The term \emph{rate loss} refers to the excess rate due to using a suboptimal implementation.}
For example, in the two-channel symmetric case and at high resolutions, the rate loss (per description) of the IA based schemes is given by $\frac{1}{4}\log_2(G(\Lambda^{(L)})G(S_L) (2\pi e)^2)$ where $G(\Lambda^{(L)})$ is the dimensionless normalized second moment of the $L$-dimensional lattice $\Lambda^{(L)}$ and $G(S_L)$ is the dimensionless normalized second moment of an $L$-dimensional hypersphere~\cite{conway:1999}. 
For the source-splitting approach the rate loss is $\frac{1}{4}\log_2(G(\Lambda^{(L)})^3 (2\pi e)^3)$ whereas for the delta-sigma quantization approach the rate loss is $\frac{1}{4}\log_2(G(\Lambda^{(L)})^2 (2\pi e)^2)$. Since $G(S_L)\leq G(\Lambda^{(L)}), \forall L>0$, it follows that the IA based approaches yield the smallest rate loss of all existing asymptotically optimal designs.\footnote{By use of time-sharing, the rate loss of the source-splitting scheme can be reduced to that of the delta-sigma quantization scheme. Moreover, in the scalar case, the rate loss can be further reduced, see~\cite{chen:2006} for details.}

We will like to point out that there exist a substantial amount of different practical approaches to MDC. For example,  
the work of~\cite{fleming:2004} on asymmetric vector quantization, the work of~\cite{berger-wolf:2002,tian:2004} on $n$-channel scalar quantization and the transform based MDC approaches presented in~\cite{orchard:1997,goyal:2001,balan:2000,chou:1999}. 

In this paper, we are interested in IA based MDC. Specifically, we propose a design of an asymmetric IA based MD-LVQ scheme for the case of $n\geq 2$ descriptions. The design uses a single labeling function and simple reconstruction functions, which are given by the inverse IA function when all descriptions are received or 
otherwise by a weighted average of the received descriptions. 
We consider the case of MSE distortion and smooth sources with finite differential entropy rate.\footnote{For each side description, we assume that the sequence of quantized source vectors is jointly entropy coded using an arbitrarily complex entropy coder.}
To the best of the authors knowledge, the above restrictions (or even less general restriction) are also necessary  for the existing IA-based designs proposed in the literature. 

The contributions of the paper are summarized below and are valid under high-resolution conditions and asymptotically large nesting ratios:
\begin{itemize}
\item We provide a simple construction of the labeling function for
  the asymmetric case and for any number  $n\geq 2$ of
  descriptions. The construction is optimal within the framework of IA
  based schemes where only a single IA function is used and where the 
  reconstruction rule is given as the average of the received
  descriptions (or the central lattice in case all descriptions are
  received).

\item For $n=3$ and any $L\geq 1$, we provide closed-form rate-distortion expressions.

\item For $n=3$ and in the limit as $L\rightarrow\infty$, the distortion points of our scheme lie on the inner bound provided by Pradhan et al.~\cite{pradhan:2004,puri:2005}.

\item For $n=2$ and any $1<L<\infty$, we strictly improve the side
  distortions (by a constant) over that of the asymmetric design by Diggavi et al.~\cite{diggavi:2002}.

\item For $n=3$ and $1\leq L<\infty$, we define a notion of \emph{rate
    loss}  (in the symmetric case only) as
 the difference between the operational rate of the scheme
  and the rate of the inner bound of Pradhan et
  al. ~\cite{pradhan:2004,puri:2005}. We then 
  show that our construction yields a smaller rate loss than that of source-splitting~\cite{chen:2006}.
\end{itemize}

This paper is organized as follows. In Section~\ref{sec:prelim} we briefly review some lattice properties, describe the required asymptotical conditions which we will be assuming through-out the work, and introduce the concept of an IA function. 
The actual design of the MD-LVQ system, which is the main contribution of the paper, is presented in Section~\ref{sec:label}. 
In Section~\ref{sec:comparison}, we compare the proposed design to known inner bounds and existing MD schemes.
The conclusions follow in Section~\ref{sec:conclusion} and appendices are reserved for lengthy proofs.

\section{Preliminaries}\label{sec:prelim}
\subsection{Lattice Properties}
Let the $L$-dimensional real lattice $\Lambda\subset\mathbb{R}^L$ form the codewords of the lattice vector quantizer $\mathcal{Q}_\Lambda(\cdot)$ having Voronoi cells. 
Thus, $\mathcal{Q}_\Lambda(x)=\lambda$ if $x\in V(\lambda)$ where 
$V(\lambda) \triangleq \{ x\in \mathbb{R}^L : \| x - \lambda\|^2 \leq \|x-\lambda'\|^2,\, \forall\, \lambda' \in \Lambda \}$ is a Voronoi cell. We define $\langle x, x \rangle \triangleq \frac{1}{L}x^tx$ and use $\|x\|^2=\langle x, x \rangle$.
The dimensionless normalized second-moment of inertia $G(\Lambda)$ of $\Lambda$ is defined as~\cite{conway:1999}
\begin{equation}\label{eq:G}
G(\Lambda) \triangleq\frac{1}{\nu^{1+2/L}}\int_{V(0)}\|x\|^2dx
\end{equation}
where $V(0)$ is the Voronoi cell around the origin and $\nu$ denotes the volume of $V(0)$. Recall that $\frac{1}{12}\geq G(\Lambda)\geq G(S_L)\geq \frac{1}{2\pi e}$ where $G(S_L) =\frac{1}{(L+2)\pi}\Gamma\left(\frac{L}2 +1\right)^{2/L}$ is the dimensionless normalized second moment of an $L$-dimensional hypersphere
and $\Gamma(\cdot)$ is the Gamma function~\cite{conway:1999}.

Let $\Lambda$ be a lattice, then a sublattice $\Lambda_s \subseteq \Lambda$ is a subset of the elements of $\Lambda$ that is itself a lattice. We say that $\Lambda_s$ is a coarse lattice nested within the fine lattice $\Lambda$. Let $\nu$ and $\nu_s$ be the volumes of $V(0)$ and $V_s(0)$, respectively, where the subscript $s$ indicates the sublattice. Then the index value $N_s$ of $\Lambda_s$ with respect to $\Lambda$ is $N_s = \nu_s/\nu$ and the nesting ratio $N_s'$ is given by $N_s' = \sqrt[L]{N_s}$.

Let $\{\Lambda^{(L)}\}$ be a sequence of lattices indexed by their dimension $L$. Then, $\Lambda^{(L)}$ is said to be asymptotically good for quantization (under MSE) if and only if for any $\epsilon > 0$ and sufficiently large $L$ ~\cite{zamir:2002}
\begin{equation}\label{eq:goodforquantization}
\log_2(2\pi e G(\Lambda^{(L)})) < \epsilon.
\end{equation}

\subsection{The Existence of Lattices and Sublattices for  MD coding}\label{sec:existence}
We need a central lattice (central quantizer) $\Lambda_c$ with Voronoi cell $V_c(0)$ of volume $\nu_c$ and $n$ sublattices (side quantizers) $\Lambda_i\subset \Lambda_c$ with Voronoi cells $V_i(0)$ of volumes $\nu_i$, where $i=0,\dots, n-1$.
Finally, we need a sublattice $\Lambda_\pi \subset \Lambda_i$ which we will refer to as a product lattice. The Voronoi cell $V_\pi(0)$ of $\Lambda_\pi$ has volume $\nu_\pi = N_\pi\nu_c$ where $N_\pi$ is the index value of $\Lambda_\pi$ with respect to $\Lambda_c$. 

Previous work on two-description IA based MD coding focused on the existence and construction of nested lattices for a few low dimensional (root) lattices cf.~\cite{vaishampayan:2001,diggavi:2002}. The techniques of~\cite{vaishampayan:2001,diggavi:2002} was extended to the case of $n$ descriptions for the symmetric case in~\cite{ostergaard:2004b}. While some of the root lattices are considered to be among the best of all lattices (of the same dimensions) for quantization, they are not good for quantization in the sense of~(\ref{eq:goodforquantization}). 
Furthermore, their index values belong to some discrete sets of integers and since they are finite dimensional, arbitrary nesting ratios cannot be achieved. 

Let us first clarify the requirements of the lattices to be used in this work:
\begin{enumerate}
\item The central lattice $\Lambda_c\in \mathbb{R}^L$, is asymptotically good for quantization as $L\rightarrow \infty$.
\item The central lattice $\Lambda_c\in \mathbb{R}^L$ admits sublattices $\Lambda_i \subset \Lambda_c$ of arbitrary nesting ratios $1\leq N_i'\in \mathbb{R}$.
\item There exists a product lattice $\Lambda_\pi\subset\Lambda_i, i=0,\dotsc,n-1,$ with arbitrary nesting ratio $N'_\pi$ (with respect to $\Lambda_c$) where $N'_i< N'_\pi \in \mathbb{R}$ for all $i=0,\dotsc, n-1$.
\end{enumerate}

That there exists a sequence of lattices which are asymptotically good for quantization was established in~\cite{zamir:1996}. It is also known that there exists nested lattices $\Lambda^{(L)}\subset\Lambda_c^{(L)}$ where the coarse lattice ($\Lambda^{(L)}$) is asymptotically good for quantization and the fine lattice ($\Lambda_c^{(L)}$) is asymptotically good for channel coding~\cite{erez:2004}. Moreover, in recent work~\cite{krithivasan:2008}, it has been established that there exists a sequence of nested lattices where the coarse lattice as well as the fine lattice are simultaneously good for quantization. 

Interestingly, we do not require $\{\Lambda_i\}_{i=0}^{n-1}$ nor $\Lambda_\pi$ to be good for quantization. 
This is because we are able to construct a labeling function which, asymptotically as $N_i\rightarrow\infty, \forall i$, results in a distortion that becomes independent of the type of sublattices being used. 
Furthermore, $\Lambda_\pi$ is used to provide a simple construction of a shift invariant region $V_\pi(0)$ and its quantization performance is therefore irrelevant.

We have yet to show the existence of $\Lambda_\pi^{(L)} \subset \Lambda_i^{(L)}$ for $i=0,\dotsc,n-1$. 
Towards that end, we refer to the construction of nested lattices provided in~\cite{krithivasan:2008}. Here a coarse lattice $\Lambda_s^{(L)}$ is first fixed and then a fine lattice $\Lambda_c^{(L)}$ is constructed such that $\Lambda_s^{(L)}\subseteq \Lambda_c^{(L)}$ with an arbitrary nesting ratio. 
Without loss of generality, let $N'_0\leq N'_1\leq \cdots \leq N'_{n-1} < N'_\pi$. 
Moreover, let the set of integers $\mathbb{Z}^L$ form a product lattice $\Lambda_\pi^{(L)}$. Now let $\Lambda_\pi^{(L)}$ be the coarse lattice and construct a fine lattice $\Lambda_{n-1}^{(L)}$ so that the nesting ratio is $N'_\pi/N'_{n-1}$ by using the method of~\cite{krithivasan:2008}. Next, let $\Lambda_{n-1}^{(L)}$ be the coarse lattice and construct a fine lattice $\Lambda_{n-2}^{(L)}$ with a nesting ratio of $N'_{n-1}/N'_{n-2}$. 
This procedure is repeated until the sublattice $\Lambda_0^{(L)}$ is constructed as the fine lattice of $\Lambda_{1}^{(L)}$. At this point, the central lattice $\Lambda_c^{(L)}$ is finally constructed by using $\Lambda_0^{(L)}$ as the coarse lattice and making sure that the nesting ratio is $N'_0$. 
It should be clear that we end up with a sequence of nested lattices, i.e.\ $\Lambda_\pi^{(L)} \subset \Lambda_{n-1}^{(L)}\subseteq \cdots \subseteq \Lambda_0^{(L)} \subset \Lambda_c^{(L)}$ with the desired nesting ratios with respect to $\Lambda_c$, i.e.\ $N'_\pi,N'_{n-1},\cdots,N'_0$.
Without loss of generality, we can take $N'_\pi = \prod_{i=0}^{n-1}N'_i$.\footnote{If $1<m<n$ nesting ratios are identical, we keep only one of them when forming the product lattice. If all nesting ratios are identical, we form the product lattice based on the product of any two of them, see~\cite{ostergaard:2004b} for details.} 
 
In the limit as $L\rightarrow\infty$ it is guaranteed that $\Lambda_c^{(L)}$ becomes asymptotically good for quantization. Furthermore, the sublattices $\Lambda_i^{(L)}$ can be shaped so that they are also good for quantization or they can, for example, be shaped like the cubic lattice. This is not important for the design proposed in this work. 

\subsection{Lattice Asymptotics}
As is common in IA based MD-LVQ, we will in this work require a number of asymptotical conditions to be satisfied in order to guarantee the prescribed rate-distortion performance. Specifically, we require high-resolution conditions, i.e.\ we will be working near the limit where the rates of the central and side quantizers diverge towards infinity, or equivalently, in the limit where the volumes of the Voronoi cells of the lattices in question become asymptotically small. 
This condition makes it possible to assume an approximately uniform source distribution over small regions so that standard high-resolution lattice quantization results become valid~\cite{gray:1990}. Let $\Lambda \subset \mathbb{R}^L$ be a real lattice and let $\nu=\det(\Lambda)$ be the volume of a fundamental region of $\Lambda$. Moreover, let $\tilde{V}\subset\mathbb{R}^L$ be a connected region of volume $\tilde{\nu}$. Then, the high-resolution assumption makes it possible to approximate the number of lattice points in $\tilde{V}$ by $\tilde{\nu}/\nu$, which is an approximation that becomes exact as the number of lattice shells within $\tilde{V}$ goes to infinity.
To be more specific, let $S(c,r)$ be a sphere in $\mathbb{R}^L$ of radius $r$ and center $c\in \mathbb{R}^L$. Then, according to Gauss' counting principle, the number $A_{\mathbb{Z}}$ of integer lattice points in a convex body $\mathcal{C}$ in $\mathbb{R}^L$ equals the volume Vol$(\mathcal{C})$ of $\mathcal{C}$ with a small error term~\cite{mazo:1990}. In fact if $\mathcal{C}=S(c,r)$ then by use of a theorem due to Minkowski it can be shown that, for any $c\in\mathbb{R}^L$ and asymptotically as $r\rightarrow \infty$, $A_{\mathbb{Z}}(r)=\text{Vol}(S(c,r))=\omega_L r^L$, where $\omega_L$ is the volume of the $L$-dimensional unit sphere~\cite{fricker:1982}.
It is also known that the number of lattice points $A_\Lambda(j)$ in the first $j$ shells (i.e., the $j$ shells nearest the origin) of the lattice $\Lambda$ satisfies, asymptotically as $j\rightarrow \infty$, $A_\Lambda(j) = \omega_L j^{L/2}/\nu$~\cite{vaishampayan:2001}. 

In addition to the high-resolution assumption, we also require that the index values of the sublattices become asymptotically large. With this, it follows that the number of central lattice points within a Voronoi cell of a sublattice becomes arbitrarily large. 
Furthermore, to guarantee that the sublattices satisfy the high-resolution quantization properties, we must force the volume of their Voronoi cells to be small. 
In other words, we require that $N_i\rightarrow\infty$ and $\nu_i\rightarrow 0$ where $\nu_i= \nu N_i$ is the volume of a Voronoi cell of the $i$th sublattice. We also note that, in order to avoid that some subset of the sublattices asymptotically dominate the overall distortion, we will require that their index values grow at the same rate, i.e.\ $N_i/N_j=c_{i,j}$ for some constant $c_{i,j}\in\mathbb{R}$.

Finally, as mentioned in the previous section, we require the existence of good lattices for quantization. We therefore require that the lattice vector dimension $L$ tends towards infinity. 

We note that the above asymptotical conditions are only required to guarantee exact results. In fact, at some point, we relax the requirement on $L$
and provide exact results for any $L\geq 1$.
Moreover, the proof technique is constructive in the sense that in non-asymptotical situations, i.e.\ for finite $N_i$ and $R_i$, the results are approximately true. This is interesting from a practical perspective, since, in practice, the asymptotical conditions will never be truly satisfied.

\subsection{Index Assignments}\label{sec:index}
In the MDC scheme considered in this paper, a source vector $x$ is quantized to the nearest reconstruction point $\lambda_c$ in the central lattice $\Lambda_c$. Hereafter follows IAs (mappings), which uniquely map all $\lambda_c$'s to reproduction points in each of the sublattices $\Lambda_i$. This mapping is done through a labeling function $\alpha$, and we denote the individual component functions of $\alpha$ by $\alpha_i$. In other words, the function $\alpha$ that maps $\Lambda_c$ into $\Lambda_0 \times \dots \times \Lambda_{n-1}$, is given by
$\alpha(\lambda_c)=(\alpha_0(\lambda_c),\alpha_1(\lambda_c),\dots,\alpha_{n-1}(\lambda_c)) 
= (\lambda_0,\lambda_1,\dots,\lambda_{n-1})$,
where $\alpha_i(\lambda_c)=\lambda_i \in \Lambda_i$ and $i=0,\dots, n-1$. Each $n$-tuple $(\lambda_0,\dots,\lambda_{n-1})$ is used only once when labeling points in $\Lambda_c$ so that $\lambda_c$ can be recovered unambiguously when all $n$ descriptions are received. 

Since lattices are infinite arrays of points, we adopt the procedure first used in~\cite{vaishampayan:2001} and construct a shift invariant labeling function, so we only need to label a finite number of points. We generalize the approach of~\cite{diggavi:2002} and construct a product lattice $\Lambda_\pi$ which has $N_\pi$ central lattice points and $N_\pi/N_i$ sublattice points from the $i$th sublattice in each of its Voronoi cells. The Voronoi cells $V_\pi(\lambda_\pi)$ of the product lattice $\Lambda_\pi$ are all similar so by concentrating on labeling only central lattice points within one Voronoi cell of $\Lambda_\pi$, the rest of the central lattice points may be labeled simply by translating this Voronoi cell throughout $\mathbb{R}^L$. 
We will therefore only label central lattice points within $V_\pi(0)$, which is the Voronoi cell of $\Lambda_\pi$ around the origin. With this we get 
\begin{equation}\label{eq:shiftinv}
\alpha(\lambda_c + \lambda_\pi) = \alpha(\lambda_c) + \lambda_\pi
\end{equation}
for all $\lambda_\pi \in \Lambda_\pi$ and all $\lambda_c \in \Lambda_c$.

\section{Construction of the Labeling Function}
\label{sec:label}
This section focuses on the labeling problem and is split into several subsections. We begin by Section~\ref{sec:shiftinvariance} which shows how to guarantee shift invariance of the labeling function. 
Then, in Section~\ref{sec:costfunction}, we define the cost function to be minimized by an optimal labeling function. In Section~\ref{sec:construct_ntuples} we show how to construct an optimal set of $n$-tuples and the assignment of the $n$-tuples to central lattice points follows Section~\ref{sec:assignment}. We end by assessing the rate and distortion performances of the labeling function in Section~\ref{sec:rates} and Section~\ref{sec:distortions}, respectively.

\subsection{Guaranteeing Shift Invariance of the Labeling Function}\label{sec:shiftinvariance}
In order to ensure that $\alpha$ is shift-invariant, we must make sure that an $n$-tuple is not assigned to more than one central lattice point $\lambda_c\in \Lambda_c$. 
Notice that two $n$-tuples which are translates of each other by some $\lambda_\pi \in \Lambda_\pi$ must not both be assigned to central lattice points located within the same region $V_\pi(\lambda_\pi)$, since this causes assignment of an $n$-tuple to multiple central lattice points.

The region $V_\pi(0)$ will be translated through-out $\mathbb{R}^L$ and centered at $\lambda_\pi \in \Lambda_\pi$. Assume that $\Lambda_\pi$ is clean\footnote{A sublattice $\Lambda_s\subset \Lambda$ is said to be clean with respect to $\Lambda$ if no points of $\Lambda$ falls on the boundary of the Voronoi cells of $\Lambda_s$. In other words, the set $\{ \lambda \in \Lambda : \lambda \in V_s(\lambda_s) \cap V_s(\lambda'_s)\}$ is empty for all $\lambda_s\neq \lambda_s' \in \Lambda_s$. 
We note that it is an open problem to construct a sequence of nested lattices which are asymptotically good for quantization and where the coarse lattice is clean.}
with respect to $\Lambda_0$.
Then no points of $\Lambda_0$ will be inside more than one $V_\pi(\lambda_\pi)$ region. 
This is the key insight required to guarantee shift invariance. Let us now construct an $n$-tuple, say $(\lambda_0,\lambda_1,\dotsc,\lambda_{n-1})$, where the first element is inside $V_\pi(0)$, i.e.\ $\lambda_0 \in V_\pi(0)$. Once we shift the $n$-tuple by a multiple of $\Lambda_\pi$, the first element of the shifted $n$-tuple will never be inside $V_\pi(0)$ and the $n$-tuple is therefore shift invariant. 
In other words, $(\lambda_0+\lambda_\pi) \notin V_\pi(0)$ for $0\neq\lambda_\pi \in \Lambda_\pi$. 
This also means
that all $n$-tuples (for $\lambda_c\in V_\pi(0)$) have their first
element (i.e.\ $\lambda_0$) inside $V_\pi(0)$. This restriction is
easily removed by considering all cosets of each $n$-tuple. 
Let us define the coset of an $n$-tuple modulo $\Lambda_\pi$ to be 
\begin{equation}\label{eq:cosets}
\begin{split}
&\bar{\mathcal{C}}_{\Lambda_\pi}(\lambda_0,\dots,\lambda_{n-1})
\triangleq \\ 
&\!\{\! (\lambda'_0,\dots,\lambda'_{n-1})\! \in\! \Lambda_0\!\times \cdots \times \!\Lambda_{n-1} : 
\lambda_i' = \lambda_i + \lambda_\pi, \lambda_\pi \in \Lambda_\pi\!\}. 
\end{split}
\end{equation}
The $n$-tuples in a coset are equivalent modulo $\Lambda_\pi$. So by allowing
only one member from each coset to be used when assigning $n$-tuples
to central lattice points within $V_\pi(0)$, the shift-invariance
property is preserved.\footnote{If $\Lambda_\pi$  is not clean, 
  a similar coset construction may be used to systematically deal with the
  boundary points: First, all
  boundary points which are equivalent modulo $\Lambda_\pi$ are within
  the same coset. Second, only one member from each coset is assigned to
  central lattice points.}


\subsection{Defining the Cost Function for the Labeling Problem}\label{sec:costfunction}
We will treat the asymmetric problem where the individual descriptions are weighted and the distortions due to reception of subsets of descriptions are also weighted. 
There are in general several ways of receiving $\kappa$ out of $n$ descriptions. Let $\mathcal{L}^{(n,\kappa)}$ denote an index set consisting of all possible $\kappa$ combinations out of $\{0,\dots, n-1\}$ so that $|\mathcal{L}^{(n,\kappa)}| = \binom{n}{\kappa}$. 
For example, for $n=3$ and $\kappa=2$ we have $\mathcal{L}^{(3,2)}=\{ \{0,1\}, \{0,2\}, \{1,2\}\}$.
Furthermore, let $0 < \mu_i\in \mathbb{R}$ be the weight for the $i$th description.

Recall that $\alpha$ takes a single vector $\lambda_c$ and maps it to a set of vectors $\{\lambda_i\}, i=0,\dots, n-1$, where $\lambda_i\in \Lambda_i$. The mapping is invertible so that we have $\lambda_c=\alpha^{-1}(\lambda_0,\dots,\lambda_{n-1})$. 
Thus, if all $n$ descriptions are received we reconstruct using the inverse map $\alpha^{-1}$ and obtain $\lambda_c$. If no descriptions are received, we reconstruct using the statistical mean of the source. In all other cases, we reconstruct using a weighted average of the received elements. 

We define the reconstruction formula when receiving the set of $\kappa$ out of $n$ descriptions indexed by $\ell\in \mathcal{L}^{(n,\kappa)}$ to be
\begin{equation}\label{rec:formula}
\hat{x}_\ell \triangleq \frac{1}{\kappa}\sum_{i\in \ell} \mu_i\alpha_i(\lambda_c)
\end{equation}
where $1\leq \kappa<n$ and where $\lambda_c = \mathcal{Q}_{\lambda_c}(x)$, i.e.\ $x$ is quantized to $\lambda_c\in \Lambda_c$.
The distortion $d_\ell$ due to approximating $x$ by $\hat{x}_\ell$ is then given by
\begin{equation}
d_{\ell} = \left\|x - \frac{1}{\kappa}\sum_{i\in \ell} \mu_i\alpha_i(\lambda_c)\right\|^2
\end{equation}
and the expected distortion with respect to $X$ is given by $\bar{D}_\ell = \mathbb{E}d_\ell$.
\begin{lemma}[\cite{vaishampayan:2001}]\label{lem:expdist_split}
For any $1\leq \kappa<n$, $\ell\in\mathcal{L}^{(n,\kappa)}$, asymptotically as $\nu_c\rightarrow 0$ and independently of $\alpha$
\begin{equation}
\begin{split}
\bar{D}_\ell
&=\sum_{\lambda_c\in\Lambda_c}\int_{V_c(\lambda_c)}f_X(x)\left\|X-\lambda_c\right\|^2\,dx
\\
&\quad+ \sum_{\lambda_c\in\Lambda_c}\int_{V_c(\lambda_c)}f_X(X)
\left\|\lambda_c - \frac{1}{\kappa}\sum_{i\in \ell} \mu_i\alpha_i(\lambda_c)\right\|^2\,dx. \label{eq:Dbarl}
\end{split}
\end{equation}
\end{lemma}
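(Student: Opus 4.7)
The plan is to start from the definition $\bar D_\ell = \mathbb{E}\|X-\hat X_\ell\|^2$, partition $\mathbb{R}^L$ into central Voronoi cells, and then split the integrand via a ``centering'' identity around $\lambda_c$. Specifically, since for $x\in V_c(\lambda_c)$ the estimator $\hat x_\ell = \frac{1}{\kappa}\sum_{i\in\ell}\mu_i\alpha_i(\lambda_c)$ depends only on $\lambda_c$, I would write
\begin{equation*}
\bar D_\ell \;=\; \sum_{\lambda_c\in\Lambda_c}\int_{V_c(\lambda_c)} f_X(x)\,\|x-\hat x_\ell\|^2\,dx,
\end{equation*}
and then add and subtract $\lambda_c$ inside the norm to obtain
\begin{equation*}
\|x-\hat x_\ell\|^2 \;=\; \|x-\lambda_c\|^2 \;+\; \|\lambda_c-\hat x_\ell\|^2 \;+\; 2\bigl\langle x-\lambda_c,\;\lambda_c-\hat x_\ell\bigr\rangle.
\end{equation*}

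The first term integrates to the granular (central) distortion, matching the first sum in the claimed expression. The second term, being constant over $V_c(\lambda_c)$, integrates against $f_X(x)$ to give exactly the second sum in the lemma. So the whole proof reduces to showing that the cross term contributes zero asymptotically as $\nu_c\to 0$.

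For the cross term I would invoke the high-resolution hypothesis: because the source has a smooth density with finite differential entropy rate, $f_X(x)$ is essentially constant on each small $V_c(\lambda_c)$, so
\begin{equation*}
\int_{V_c(\lambda_c)} f_X(x)\,\bigl\langle x-\lambda_c,\;\lambda_c-\hat x_\ell\bigr\rangle\,dx \;\approx\; f_X(\lambda_c)\Bigl\langle \int_{V_c(0)} y\,dy,\;\lambda_c-\hat x_\ell\Bigr\rangle .
\end{equation*}
The Voronoi cell $V_c(0)$ of a lattice is centrally symmetric about the origin, so $\int_{V_c(0)} y\,dy = 0$ and the leading-order contribution of the cross term vanishes.

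The only real obstacle is making the approximation $f_X(x)\approx f_X(\lambda_c)$ quantitatively rigorous and checking that the aggregate error, summed over all cells, is indeed negligible compared to the two retained terms. I would handle this by a standard smoothness/Taylor argument: writing $f_X(x) = f_X(\lambda_c) + O(\mathrm{diam}(V_c))$ on each cell, the cross term per cell is $O(\nu_c^{1+2/L})$ times a bounded factor, whereas the retained granular term is $\Theta(\nu_c^{2/L})$ times the cell probability. Summing over cells and letting $\nu_c \to 0$ (equivalently, the central-quantizer rate diverging), the cross-term contribution is $o(1)$ relative to the granular distortion, which establishes the claimed identity in the high-resolution limit. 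Note in particular that nowhere in this argument does the labeling function $\alpha$ enter except through the already-isolated $\|\lambda_c-\hat x_\ell\|^2$ term, which is why the decomposition holds independently of $\alpha$.
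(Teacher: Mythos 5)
Your decomposition is the standard one: the paper itself does not reprove the lemma but merely cites Vaishampayan et al.\ and the thesis~\cite{ostergaard:2007a}, and those sources use exactly the add-and-subtract-$\lambda_c$ expansion followed by the observation that the cross term kills off by the centroid symmetry of the Voronoi cell under high resolution. So your route matches the paper's implied argument; there is no alternative approach here to compare against.

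There is, however, a small logical gap in your final error estimate. You state that the cross term per cell is $O(\nu_c^{1+2/L})$ ``times a bounded factor'' and the granular term per cell is $\Theta(\nu_c^{2/L})$ times the cell probability, i.e.\ also $\Theta(\nu_c^{1+2/L})$; these are the \emph{same} order in $\nu_c$, so the stated bounds alone do not give you that the cross term is $o(1)$ relative to the retained terms. What saves the argument is that your ``bounded factor'' is precisely $\|\lambda_c-\hat x_\ell\|$, which is $\Theta\bigl((\max_i N_i\,\mu_i\,\nu_c)^{1/L}\bigr)$ and therefore vanishes as $\nu_c\to0$ (with $\nu_i\to0$ as the paper assumes). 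Either make that explicit, or compare the cross term against the \emph{second} retained term instead of the granular one: the cross term scales as $\nu_c^{1+2/L}\,\|\lambda_c-\hat x_\ell\|$ while the second term scales as $\nu_c\,\|\lambda_c-\hat x_\ell\|^2$, so their ratio is $\Theta\bigl(\nu_c^{2/L}/\|\lambda_c-\hat x_\ell\|\bigr)=\Theta\bigl((\nu_c/\nu_i)^{1/L}\bigr)\to 0$. With that one extra observation the proof is complete.
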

\begin{proof}
The lemma was proved in~\cite{vaishampayan:2001} for the symmetric case and two descriptions. The extension to the asymmetric case and $n$ descriptions is straight forward. See~\cite{ostergaard:2007a} for details.
\end{proof}

Notice that only the second term of~(\ref{eq:Dbarl}) is affected by the labeling function. We will make use of this fact and therefore define
\begin{equation}
D_\ell \triangleq \left\|\lambda_c - \frac{1}{\kappa}\sum_{i\in \ell} \mu_i\alpha_i(\lambda_c)\right\|^2 \label{eq:Dl}.
\end{equation}

The cost function to be minimized by the labeling function must take
into account the entire set of distortions due to reconstructing from
different subsets of descriptions. With this in mind, we combine the
distortions through a set of scalar (Lagrangian)
weights. Specifically, let $\gamma_\ell\in\mathbb{R}, \ell\in
\mathcal{L}^{(n,\kappa)}$ be the weight for the distortion $D_\ell$
due to reconstructing using the set of descriptions indexed by
$\ell$. With this, we define the cost function $\mathcal{J}^n$ for the
$n$-description labeling problem to be given by~(\ref{eq:costlarge})
(see top of next page),
\begin{figure*}
\begin{equation}\label{eq:costlarge}
\begin{split}
\displaystyle\mathcal{J}^n &\triangleq \displaystyle\sum_{\lambda_c\in\Lambda_c} \int_{V_c(\lambda_c)}f_X(x)
\Bigg\{\sum_{i=0}^{n-1} \gamma_i\bigg\|\lambda_c -
\mu_i\alpha_i(\lambda_c)\bigg\|^2 
+ \sum_{i=0}^{n-2}\sum_{j=i+1}^{n-1} \gamma_{i,j}\bigg\|\lambda_c - \frac{\mu_i\alpha_i(\lambda_c)+ \mu_j\alpha_j(\lambda_c)}{2}\bigg\|^2 \\
&\quad + \sum_{i=0}^{n-3}\sum_{j=i+1}^{n-2}\sum_{k=j+1}^{n-1} \gamma_{i,j,k}\bigg\|\lambda_c - \frac{\mu_i\alpha_i(\lambda_c)+ \mu_j\alpha_j(\lambda_c) + \mu_k\alpha_k(\lambda_c)}{3}\bigg\|^2 + \cdots \Bigg\}\,dx.
\end{split}
\end{equation}
\hrule
\end{figure*}
which can be written more compactly as
\begin{equation}
\mathcal{J}^n \triangleq \sum_{\lambda_c\in \Lambda_c}\int_{V_c(\lambda_c)}f_X(x)\sum_{\kappa=1}^{n-1}\sum_{\ell\in \mathcal{L}^{(n,\kappa)}} \gamma_\ell D_\ell \, dx.
\end{equation}
For example, using the fact that $\lambda_i=\alpha_i(\lambda_c)$, we can write $\mathcal{J}^n$ for the $n=3$ case as
\begin{equation*}
\begin{split}
&\mathcal{J}^3 = \sum_{\lambda_c\in\Lambda_c}\int_{V_c(\lambda_c)}f_X(x)
\sum_{i=0}^{2} \gamma_i\bigg\|\lambda_c - \mu_i\lambda_i\bigg\|^2\, dx \\
&+\!\! \sum_{\lambda_c\in\Lambda_c}\int_{V_c(\lambda_c)}f_X(x)
\sum_{i=0}^{1}\sum_{j=i+1}^{2} \gamma_{i,j}\bigg\|\lambda_c - \frac{\mu_i\lambda_i+ \mu_j\lambda_j}{2}\bigg\|^2\!dx.
\end{split}
\end{equation*}

Since we are considering the high-resolution regime, we can make the following simplifications
\begin{align}
\mathcal{J}^n &= 
\sum_{\lambda_c\in \Lambda_c}\sum_{\kappa=1}^{n-1}\sum_{\ell\in \mathcal{L}^{(n,\kappa)}} \gamma_\ell \int_{V_c(\lambda_c)}f_X(x)D_\ell \, dx\\
&= \sum_{\lambda_c\in \Lambda_c}P(X\in V_c(\lambda_c))\sum_{\kappa=1}^{n-1}\sum_{\ell\in \mathcal{L}^{(n,\kappa)}} \gamma_\ell D_\ell \\
&\approx \sum_{\lambda_\pi\in \Lambda_\pi}\frac{P(X\in V_\pi(\lambda_\pi))}{N_\pi}\sum_{\lambda_c\in V_\pi(\lambda_\pi)}\sum_{\kappa=1}^{n-1}\sum_{\ell\in \mathcal{L}^{(n,\kappa)}} \gamma_\ell D_\ell  \label{eq:shiftinvJn}\\
&=\frac{1}{N_\pi}\sum_{\lambda_c\in V_\pi(0)}\sum_{\kappa=1}^{n-1}\sum_{\ell\in \mathcal{L}^{(n,\kappa)}} \gamma_\ell D_\ell
\end{align}
where $P(X\in V_c(\lambda_c))$ is the probability that $X$ will be mapped (or quantized) to $\lambda_c$. The approximation follows by substituting $P(X\in V_c(\lambda_c))\approx P(X\in V_\pi(\lambda_\pi))/N_\pi$ for $\lambda_\pi\in\Lambda_\pi$ which becomes exact as $\nu_i\rightarrow 0$.
In~(\ref{eq:shiftinvJn}), we also exploited that $\alpha$ is shift invariant in order to decompose the sum $\sum_{\lambda_c\in\Lambda_c}$ into the double sum $\sum_{\lambda_\pi\in \Lambda_\pi}\sum_{\lambda_c\in V_\pi(\lambda_\pi)}$ as follows from~(\ref{eq:shiftinv}).

We would like to simplify $\mathcal{J}^n$ even further. In order to do so, we introduce the following notation. 
Let $\mathcal{L}_{i}^{(n,\kappa)}$ indicate the set of all $\ell\in \mathcal{L}^{(n,\kappa)}$ that contains the index $i$, i.e., $\mathcal{L}_i^{(n,\kappa)}=\{ \ell \in \mathcal{L}^{(n,\kappa)} : i\in \ell\}$.
Similarly, $\mathcal{L}_{i,j}^{(n,\kappa)} = \{ \ell \in \mathcal{L}^{(n,\kappa)} : i,j\in \ell\}$. 
Moreover, let $\bar{\gamma}(\mathcal{L}^{(n,\kappa)}) = \sum_{\ell\in\mathcal{L}^{(n,\kappa)}} \gamma_\ell$, $\bar{\gamma}(\mathcal{L}_i^{(n,\kappa)}) = \sum_{\ell\in\mathcal{L}_i^{(n,\kappa)}} \gamma_\ell$ and $\bar{\gamma}(\mathcal{L}_{i,j}^{(n,\kappa)}) = \sum_{\ell\in\mathcal{L}_{i,j}^{(n,\kappa)}} \gamma_\ell$. 
Thus, $\bar{\gamma}(\mathcal{L}^{(3,2)})=\gamma_{0,1} + \gamma_{0,2} + \gamma_{1,2}$ and $\bar{\gamma}(\mathcal{L}_1^{(3,2)})=\gamma_{0,1}+\gamma_{1,2}$.

\begin{theorem}\label{theo:Jreduced}
Let $1\leq \kappa<n<\infty$. Given a set of distortion weights $\{\gamma_\ell \in \mathbb{R} : \ell\in \mathcal{L}^{(n,\kappa)}, 1\leq \kappa\leq n-1\}$, a set of description weights $\{0< \mu_{i} \in \mathbb{R} : i=0,\dotsc, n-1\}$ and any $\lambda_c\in \Lambda_c$ we have
\begin{equation}\label{eq:Jreduced1}
\begin{split}
&\sum_{\ell\in \mathcal{L}^{(n,\kappa)}}  \gamma_\ell D_\ell 
= \sum_{i=0}^{n-2}\sum_{j=i+1}^{n-1}\hat{\gamma}_{i,j}^{(n,\kappa)}\bigg\|\mu_i\lambda_i - \mu_j\lambda_j\bigg\|^2 \\
&+ \bar{\gamma}(\mathcal{L}^{(n,\kappa)})\bigg\| \lambda_c - \frac{1}{\kappa \bar{\gamma}(\mathcal{L}^{(n,\kappa)})}\sum_{i=0}^{n-1}\bar{\gamma}(\mathcal{L}^{(n,\kappa)}_i) \mu_i\lambda_i\bigg\|^2
\end{split}
\end{equation}
where $\lambda_i=\alpha_i(\lambda_c)$ and
\begin{equation}\label{eq:gamma_ij}
\hat{\gamma}_{i,j}^{(n,\kappa)}=\frac{1}{\kappa^2}\Bigg(\frac{\bar{\gamma}(\mathcal{L}^{(n,\kappa)}_i)\bar{\gamma}(\mathcal{L}^{(n,\kappa)}_j)}{\bar{\gamma}(\mathcal{L}^{(n,\kappa)})}-\bar{\gamma}(\mathcal{L}^{(n,\kappa)}_{i,j})\Bigg).
\end{equation}
\end{theorem}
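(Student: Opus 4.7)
My plan is to prove Theorem~\ref{theo:Jreduced} by recognizing the left-hand side as a weighted mean-square expression and applying a standard weighted variance decomposition (the analogue of the parallel-axis / ANOVA identity), then doing the combinatorial bookkeeping required to convert the resulting ``variance'' term into the pairwise form stated on the right.

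Introduce the shorthand $u_i \triangleq \mu_i\lambda_i$ and $y_\ell \triangleq \frac{1}{\kappa}\sum_{i\in\ell} u_i$, so that $D_\ell = \|\lambda_c - y_\ell\|^2$. Let $\bar\gamma \triangleq \bar\gamma(\mathcal{L}^{(n,\kappa)})$, $\bar\gamma_i \triangleq \bar\gamma(\mathcal{L}^{(n,\kappa)}_i)$, $\bar\gamma_{i,j} \triangleq \bar\gamma(\mathcal{L}^{(n,\kappa)}_{i,j})$, and set $\bar y \triangleq \frac{1}{\bar\gamma}\sum_\ell \gamma_\ell y_\ell$. The identity
\begin{equation*}
\sum_\ell \gamma_\ell \|\lambda_c - y_\ell\|^2 = \bar\gamma\,\|\lambda_c - \bar y\|^2 + \sum_\ell \gamma_\ell \|y_\ell - \bar y\|^2,
\end{equation*}
which follows by expanding $\lambda_c-y_\ell = (\lambda_c-\bar y)-(y_\ell-\bar y)$ and noting that the cross-term vanishes because $\sum_\ell \gamma_\ell(y_\ell-\bar y)=0$, already accounts for the second term of (\ref{eq:Jreduced1}): swapping the order of summation in $\sum_\ell \gamma_\ell y_\ell = \frac{1}{\kappa}\sum_i u_i \sum_{\ell : i\in\ell}\gamma_\ell = \frac{1}{\kappa}\sum_i \bar\gamma_i u_i$ gives $\bar y = \frac{1}{\kappa\bar\gamma}\sum_i \bar\gamma_i\mu_i\lambda_i$, exactly as required.

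It remains to show that $\sum_\ell \gamma_\ell \|y_\ell - \bar y\|^2 = \sum_{i<j}\hat\gamma_{i,j}^{(n,\kappa)}\|u_i-u_j\|^2$. I would expand both sides as quadratic forms in the inner products $\langle u_i,u_j\rangle$ and match coefficients. On the left, $\sum_\ell\gamma_\ell\|y_\ell\|^2$ gives $\frac{1}{\kappa^2}\sum_i \bar\gamma_i\|u_i\|^2 + \frac{2}{\kappa^2}\sum_{i<j}\bar\gamma_{i,j}\langle u_i,u_j\rangle$ (by swapping the inner double-sum with the outer sum over $\ell$), and $\bar\gamma\|\bar y\|^2 = \frac{1}{\kappa^2\bar\gamma}\bigl(\sum_i \bar\gamma_i^2\|u_i\|^2 + 2\sum_{i<j}\bar\gamma_i\bar\gamma_j\langle u_i,u_j\rangle\bigr)$. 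Subtracting, the coefficient of $\langle u_i,u_j\rangle$ is $\frac{2}{\kappa^2}\bigl(\bar\gamma_{i,j} - \tfrac{\bar\gamma_i\bar\gamma_j}{\bar\gamma}\bigr) = -2\hat\gamma_{i,j}^{(n,\kappa)}$, which agrees with the $-2\hat\gamma_{i,j}^{(n,\kappa)}\langle u_i,u_j\rangle$ term of the pairwise expansion on the right.

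The only place where something nontrivial happens is matching the coefficient of $\|u_i\|^2$; there one must show
\begin{equation*}
\frac{1}{\kappa^2}\Bigl(\bar\gamma_i - \tfrac{\bar\gamma_i^2}{\bar\gamma}\Bigr) \;=\; \sum_{j\neq i}\hat\gamma_{i,j}^{(n,\kappa)}.
\end{equation*}
This is where the main (though still elementary) combinatorial work lies, and it reduces to the two counting identities $\sum_j \bar\gamma_j = \kappa\bar\gamma$ (each $\gamma_\ell$ is counted once per index in $\ell$, and $|\ell|=\kappa$) and $\sum_{j\neq i}\bar\gamma_{i,j} = (\kappa-1)\bar\gamma_i$ (each $\gamma_\ell$ with $i\in\ell$ is counted once per other index in $\ell$). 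Substituting these into $\sum_{j\neq i}\hat\gamma_{i,j}^{(n,\kappa)} = \frac{\bar\gamma_i}{\kappa^2\bar\gamma}\sum_{j\neq i}\bar\gamma_j - \frac{1}{\kappa^2}\sum_{j\neq i}\bar\gamma_{i,j}$ yields $\frac{\bar\gamma_i(\kappa\bar\gamma - \bar\gamma_i)}{\kappa^2\bar\gamma} - \frac{(\kappa-1)\bar\gamma_i}{\kappa^2} = \frac{1}{\kappa^2}\bigl(\bar\gamma_i - \tfrac{\bar\gamma_i^2}{\bar\gamma}\bigr)$, as needed. Since the inner products $\langle u_i,u_j\rangle$ (with $i\leq j$) are linearly independent as functions of generic $\{\lambda_i\}$, matching coefficients suffices, which closes the proof.
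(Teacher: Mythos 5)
Your proof is correct, and it is organized differently (and more economically) than the paper's. The paper's Appendix~A establishes a chain of six auxiliary lemmas and derives Lemma~\ref{lem:lcmean} (which is precisely \eqref{eq:Jreduced1}) by direct algebraic expansion: it expands $\sum_\ell\gamma_\ell\|\lambda_c-\frac{1}{\kappa}\sum_{i\in\ell}\lambda_i\|^2$, completes the square, and then grinds the residual into pairwise-difference form using Lemmas~\ref{lem:normsum0}--\ref{lem:normsum2}. You instead invoke the weighted variance (ANOVA/parallel-axis) decomposition up front, which immediately produces the $\bar\gamma\|\lambda_c-\bar y\|^2$ term and isolates the residual $\sum_\ell\gamma_\ell\|y_\ell-\bar y\|^2$, and you then verify the pairwise identity by matching coefficients of $\|u_i\|^2$ and $\langle u_i,u_j\rangle$. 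This dispenses with the paper's Lemmas~\ref{lem:normsum0}, \ref{lem:normsum}, and \ref{lem:normsum2}; the only combinatorial facts you actually need are the two counting identities $\sum_j\bar\gamma_j=\kappa\bar\gamma$ and $\sum_{j\neq i}\bar\gamma_{i,j}=(\kappa-1)\bar\gamma_i$, which are the paper's Lemmas~\ref{lem:sumLj} and \ref{lem:sumLij}. The trade-off is that the paper's proof is self-contained symbol manipulation, whereas yours relies on recognizing the variance decomposition and the linear independence of the inner products $\langle u_i,u_j\rangle$ as functions of generic $\{\lambda_i\}$; both of these are standard and the latter can be replaced, if desired, by simply expanding the right-hand side and collecting terms exactly as you indicate. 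In short: same two combinatorial inputs, tidier scaffolding.
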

\begin{IEEEproof}
See Appendix~\ref{app:proof_Jreduced}.
\end{IEEEproof}

From~(\ref{eq:Jreduced1}) we make the observation that whenever $\lambda_i$ appears, it is multiplied by $\mu_i$. Without loss of generality, we can therefore scale the lattice $\Lambda_i$ by $\mu_i$ and consider the scaled lattice $\tilde{\Lambda}_i=\mu_i\Lambda_i$ instead. This simplifies the notation. For example, $\hat{x}_\ell = \frac{1}{\kappa}\sum_{i\in \ell}\tilde{\lambda}_i$ where $\tilde{\lambda}_i = \mu_i\lambda_i$ for $i=0,\dotsc,n-1$. Clearly, scaling the sublattices affects the side description rates. We address this issue in Section~\ref{sec:rates}.

By use of Theorem~\ref{theo:Jreduced} we can rewrite the cost function to be minimized by the labeling function as
\begin{equation}\label{eq:simplecost}
\begin{split}
\mathcal{J}^n &= \frac{1}{N_\pi}
\sum_{\lambda_c\in V_\pi(0)}\sum_{\kappa=1}^{n-1}
\bigg\{
\sum_{i=0}^{n-2}\sum_{j=i+1}^{n-1}\hat{\gamma}_{i,j}^{(n,\kappa)}\bigg\|\tilde{\lambda}_i
- \tilde{\lambda}_j\bigg\|^2 \\
&+ \bar{\gamma}(\mathcal{L}^{(n,\kappa)})\bigg\| \lambda_c - \frac{1}{\kappa \bar{\gamma}(\mathcal{L}^{(n,\kappa)})}\sum_{i=0}^{n-1}\bar{\gamma}(\mathcal{L}^{(n,\kappa)}_i)\tilde{\lambda}_i\bigg\|^2\bigg\}
\end{split}
\end{equation}
where $\hat{\gamma}_{i,j}^{(n,\kappa)}$ is given by~(\ref{eq:gamma_ij}).

The following theorem allows us to simplify the construction of the labeling function:
\begin{theorem}\label{theo:separable}
Let $1<n\in \mathbb{N}$. The cost function $\mathcal{J}^n$ is asymptotically separable in the sense that, as $N_i\rightarrow \infty$ and $\nu_i\rightarrow 0, \forall i$, an optimal set $\mathcal{T}^*$ of $N_\pi$ distinct and shift invariant $n$-tuples satisfies
\begin{equation}\label{eq:pairwise_distances}
\mathcal{T}^* = \arg\min_{\mathcal{T}} \sum_{(\lambda_0,\dotsc,\lambda_{n-1})\in\mathcal{T}}\sum_{\kappa=1}^{n-1}\sum_{i=0}^{n-2}\sum_{j=i+1}^{n-1}\hat{\gamma}_{i,j}^{(n,\kappa)}\bigg\|\tilde{\lambda}_i - \tilde{\lambda}_j\bigg\|^2
\end{equation}
where $\mathcal{T}=\{ (\lambda_0,\dotsc,\lambda_{n-1})\in \Lambda_0\times\cdots\times\Lambda_{n-1} : (\lambda_0,\dotsc,\lambda_{n-1}) \text{ is shift invariant} \}, |\mathcal{T}|=N_\pi$ and where $\hat{\gamma}_{i,j}^{(n,\kappa)}$ is given by~(\ref{eq:gamma_ij}).
\end{theorem}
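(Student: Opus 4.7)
My strategy for proving Theorem~\ref{theo:separable} is to decompose the inner bracket of the cost in~(\ref{eq:simplecost}) as $\mathcal{J}^n = \mathcal{J}_1^n + \mathcal{J}_2^n$, where $\mathcal{J}_1^n = \frac{1}{N_\pi}\sum_{\lambda_c\in V_\pi(0)}\sum_{\kappa=1}^{n-1}\sum_{0\le i<j\le n-1}\hat{\gamma}_{i,j}^{(n,\kappa)}\|\tilde{\lambda}_i-\tilde{\lambda}_j\|^2$ collects the pairwise-distance terms and $\mathcal{J}_2^n = \frac{1}{N_\pi}\sum_{\lambda_c\in V_\pi(0)}\sum_{\kappa=1}^{n-1}\bar{\gamma}(\mathcal{L}^{(n,\kappa)})\|\lambda_c - c_\kappa\|^2$ collects the centroid-distance terms (with $c_\kappa = \frac{1}{\kappa\bar{\gamma}(\mathcal{L}^{(n,\kappa)})}\sum_i \bar{\gamma}(\mathcal{L}_i^{(n,\kappa)})\tilde{\lambda}_i$). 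The crucial structural observation is that $\mathcal{J}_1^n$ depends only on the \emph{set} $\mathcal{T}$ of chosen $n$-tuples---it is invariant under any permutation of the bijection $\alpha$ between $V_\pi(0)\cap\Lambda_c$ and $\mathcal{T}$---whereas $\mathcal{J}_2^n$ depends on that bijection as well. The plan is then to show that $\mathcal{J}_2^n/\mathcal{J}_1^n\to 0$ under the asymptotics $N_i\to\infty$, $\nu_i\to 0$, so that the argmin of $\mathcal{J}^n$ coincides with the argmin of $\mathcal{J}_1^n$ alone.

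The first technical step is to lower-bound $\mathcal{J}_1^n$. Since the $N_\pi$ chosen $n$-tuples must be distinct, for most index pairs $(i,j)$ the differences $\tilde{\lambda}_i-\tilde{\lambda}_j$ cannot all vanish, and nonzero differences between points of the scaled sublattice $\tilde{\Lambda}_i = \mu_i\Lambda_i$ have magnitude at least of order $\nu_i^{1/L} = N_i^{1/L}\nu_c^{1/L}$ (the packing-radius scale). Averaging over the $N_\pi$ summands therefore gives $\mathcal{J}_1^n = \Omega(\nu_i^{2/L})$ for any valid labeling.

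The harder direction is the upper bound on $\mathcal{J}_2^n$. Given any fixed set $\mathcal{T}$ of $N_\pi$ cosets (modulo $\Lambda_\pi$) of shift-invariant $n$-tuples, we are free both to pick the bijection from $\mathcal{T}$ to $V_\pi(0)\cap\Lambda_c$ and, via the shift invariance~(\ref{eq:shiftinv}), to translate each coset representative by any $\lambda_\pi\in\Lambda_\pi$ without affecting $\mathcal{J}_1^n$. After reduction modulo $\Lambda_\pi$, the centroids $c_\kappa$ populate $V_\pi(0)$, which already contains $N_\pi$ central-lattice points at density $1/\nu_c$. A greedy nearest-neighbor or Hall-type matching argument then yields a bijection under which each summand $\|\lambda_c-c_\kappa\|^2$ is at most of order the squared covering radius of $\Lambda_c$, and hence $\mathcal{J}_2^n = O(\nu_c^{2/L})$.

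Combining the two bounds, $\mathcal{J}_2^n/\mathcal{J}_1^n = O(N_i^{-2/L})\to 0$ as $N_i\to\infty$, so to leading order minimizing $\mathcal{J}^n$ is equivalent to minimizing $\mathcal{J}_1^n$, and the optimal $\mathcal{T}^*$ is characterized by~(\ref{eq:pairwise_distances}). The main obstacle I foresee lies in making the matching argument of the third paragraph fully rigorous: one must verify that the centroids of the (near-)optimal $n$-tuples, reduced modulo $\Lambda_\pi$, are sufficiently well-distributed in $V_\pi(0)$ at the scale $\nu_c^{1/L}$ for the bijection to achieve an $O(\nu_c^{2/L})$ cost per point. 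This should follow from the fact that the set of possible centroids forms a sublattice of $\mathbb{R}^L$ whose density (rationally related to the sublattice densities $1/\nu_i$) dominates $1/\nu_c$ in the asymptotic limit, and a clean statement likely requires an equidistribution / counting argument along the lines of those used in~\cite{vaishampayan:2001,diggavi:2002,ostergaard:2004b}.
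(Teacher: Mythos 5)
Your high-level strategy is the same as the paper's: split $\mathcal{J}^n$ into a pairwise-distance part $\mathcal{J}_1^n$ and a centroid part $\mathcal{J}_2^n$, observe that $\mathcal{J}_1^n$ depends only on the set of $n$-tuples, and then show $\mathcal{J}_2^n/\mathcal{J}_1^n\to 0$. That is exactly the role of $f$ and $g$ in the paper's Appendix~B. However, both of your specific bounds are off, and they are off in directions that make the argument collapse when corrected.

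\emph{Lower bound on $\mathcal{J}_1^n$.} You argue that, since the $n$-tuples are distinct, the nonzero differences $\tilde{\lambda}_i-\tilde{\lambda}_j$ are at least the minimum lattice spacing, giving $\mathcal{J}_1^n=\Omega(\nu_i^{2/L})$. This is far too weak. The correct scale comes from a \emph{counting} argument, not a packing-radius argument: for each of the $N_\pi/N_0$ points $\lambda_0\in V_\pi(0)\cap\Lambda_0$ one must accommodate $N_0$ distinct $(n-1)$-tuples inside a sphere of volume $\tilde{\nu}$, which forces $\tilde{\nu}\geq\nu_c\prod_iN_i^{1/(n-1)}$. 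The typical pairwise distance is then of order the sphere radius, and the paper obtains
\[
f=\Omega\!\left(N_\pi\,\nu_c^{2/L}\prod_{i=0}^{n-1}N_i^{\frac{2}{L(n-1)}}\right),
\]
which in the symmetric case exceeds your bound by a diverging factor $N^{2/(L(n-1))}$. The technical machinery needed (the sandwich on $\#_{\lambda_j}$, the Riemann-sum approximation, the simplex volume) is absent from your sketch.

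\emph{Upper bound on $\mathcal{J}_2^n$.} You claim $\mathcal{J}_2^n=O(\nu_c^{2/L})$ via a nearest-neighbour matching of centroids to central lattice points, and you yourself flag the required equidistribution of centroids at scale $\nu_c^{1/L}$ as the main obstacle. The paper does not attempt this: it bounds each term by the covering radius of the \emph{coarsest sublattice}, yielding only $g=O(N_\pi\,\nu_c^{2/L}N_k^{2/L})$ with $N_k=\max_iN_i$, and the footnote there explains that in the worst case the centroids may be as sparse as the coarsest sublattice, so the $O(\nu_c^{2/L})$ bound cannot be guaranteed uniformly over $\mathcal{T}$. A uniform bound over all $\mathcal{T}$ is what the proof needs.

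\emph{Why the gap is fatal.} Replace your optimistic upper bound by the provable one, $\mathcal{J}_2^n=O(\nu_k^{2/L})$, and combine it with your lower bound $\mathcal{J}_1^n=\Omega(\nu_i^{2/L})$: since the $N_i$ grow at the same rate, the ratio is $O\big((N_k/N_i)^{2/L}\big)=O(1)$, which does \emph{not} vanish. So the argument does not close without the stronger counting lower bound. The paper's ratio
\[
\frac{g}{f}=\Theta\!\left(N_k^{-\frac{2}{L(n-1)}}\right)\longrightarrow 0
\]
works precisely because $f$ grows like the sphere radius squared, a factor you have not captured. In short: the decomposition is right, but the lower bound is qualitatively weaker than what is required, and the upper bound is stronger than what is provable; the two errors do not cancel.
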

\begin{IEEEproof}
See Appendix~\ref{app:separable}.
\end{IEEEproof}

Theorem~\ref{theo:separable} provides a guideline for the construction of $n$-tuples. One should first find a set of $N_\pi$ distinct and shift invariant $n$-tuples which satisfies~(\ref{eq:pairwise_distances}). 
These $n$-tuples (or members of their cosets) should then be assigned to central lattice points in $V_\pi(0)$ such that 
\begin{equation}\label{eq:central_assignment}
\sum_{\lambda_c\in V_\pi(0)}\sum_{\kappa=1}^{n-1} \bigg\| \lambda_c - \frac{1}{\kappa \bar{\gamma}(\mathcal{L}^{(n,\kappa)})}\sum_{i=0}^{n-1}\bar{\gamma}(\mathcal{L}^{(n,\kappa)}_i) \tilde{\lambda}_i\bigg\|^2
\end{equation}
is minimized. 

\begin{remark}
Notice that we have not claimed that $\mathcal{T}^*$ is unique. Thus, there might be several sets of $n$-tuples which all satisfy~(\ref{eq:pairwise_distances}) but yield different distortions when inserted in~(\ref{eq:central_assignment}). 
However, Theorem~\ref{theo:separable} states that the asymptotically (as $N_i\rightarrow \infty$) dominating distortion is due to that of~(\ref{eq:pairwise_distances}). Thus, any set of $n$-tuples satisfying~(\ref{eq:pairwise_distances}) will be asymptotically optimal. 
Interestingly, we show in Section~\ref{sec:assignment} that $\mathcal{T}^*$ is, in certain cases, indeed asymptotically unique (up to translations by coset members).
\end{remark}

\subsection{Constructing $n$-Tuples}\label{sec:construct_ntuples}
In order to construct $n$-tuples which are shift invariant we extend the technique previously proposed for the symmetric $n$-description MD problem~\cite{ostergaard:2004b}. 

We first center a sphere $\tilde{V}$ at all sublattice points $\lambda_0 \in V_\pi(0)$ and construct $n$-tuples by combining sublattice points from the other sublattices (i.e.\ $\Lambda_i, i=1,\dots,n-1$) within $\tilde{V}(\lambda_0)$ in all possible ways and select the ones that minimize~(\ref{eq:pairwise_distances}). 
For each $\lambda_0\in V_\pi(0)$ it is possible to construct $\prod_{i=1}^{n-1}\tilde{N}_i$ $n$-tuples, where $\tilde{N}_i$ is the number of sublattice points from the $i$th sublattice within the region $\tilde{V}$. 
This gives a total of $(N_\pi/N_0)\prod_{i=1}^{n-1}\tilde{N}_i$ $n$-tuples when all $\lambda_0\in V_\pi(0)$ are used. 
The number $\tilde{N}_i$ of lattice points within $\tilde{V}$ may be approximated by $\tilde{N}_i\approx \tilde{\nu}/\nu_i$ where $\tilde{\nu}$ is the volume of $\tilde{V}$.\footnote{This approximation becomes exact in the usual asymptotical sense of $N_i\rightarrow \infty$ and $\nu_i\rightarrow 0$.}

Since $\tilde{N}_i\approx\tilde{\nu}/(\nu N_i)$ and we need $N_0$ $n$-tuples for each $\lambda_0\in V_\pi(0)$ we see that
\begin{equation*}
N_0 \leq \prod_{i=1}^{n-1}\tilde{N}_i\approx \frac{\tilde{\nu}^{n-1}}{\nu^{n-1}}\prod_{i=1}^{n-1}N_i^{-1},
\end{equation*}
so in order to obtain at least $N_0$ $n$-tuples the volume of $\tilde{V}$ must satisfy (asymptotically as $N_i\rightarrow \infty$)
\begin{equation}\label{eq:vtilde}
\tilde{\nu}\geq \nu_c\prod_{i=0}^{n-1}N_i^{1/(n-1)}.
\end{equation}
For the symmetric case, i.e.\ $N=N_i$, $i=0,\dots,n-1$, we have $\tilde{\nu}\geq \nu_c N^{n/(n-1)}$, which is in agreement with the results obtained in~\cite{ostergaard:2004b}.

The design procedure can be outlined as follows:
\begin{enumerate}
\item Center a sphere $\tilde{V}$ at each $\lambda_0\in V_\pi(0)$ and construct all possible $n$-tuples $(\lambda_0,\lambda_1,\dots,\lambda_{n-1})$ where $\lambda_i\in \tilde{V}(\lambda_0)$ and $i=1,\dots, n-1$. 
This makes sure that all $n$-tuples have their first element ($\lambda_0$) inside $V_\pi(0)$ and they are therefore shift-invariant. 
\item Keep only $n$-tuples whose elements satisfy $\|\lambda_i-\lambda_j\|^2\leq r^2/L, \forall i,j\in 0,\dots n-1$, where $r$ is the radius of $\tilde{V}$.  \label{enum:small_ntuples}
\item Make $\tilde{V}$ large enough so at least $N_0$ distinct $n$-tuples are found for each $\lambda_0$.
\end{enumerate}

The restriction $\|\lambda_i-\lambda_j\|^2\leq r^2/L$ in
step~\ref{enum:small_ntuples} above, is imposed to avoid bias towards
any of the sublattices. At this point, one might wonder why we wish to
avoid such bias. After all, the expression to be minimized,
i.e.~(\ref{eq:pairwise_distances}), includes weights
$\hat{\gamma}_{i,j}^{(n,\kappa)}$ (which might not be equal) for every
pair of sublattices. In otherwords, why not use spheres
$\tilde{V}_{i,j}$ of different sizes to guarantee that
$\|\lambda_i-\lambda_j\|^2\leq r_{i,j}^2/L$ where the radius $r_{i,j}$
now depends on the particular pair of sublattices under
consideration. This is illustrated in Fig.~\ref{fig:3spheres} in Appendix~\ref{app:separable}, where $r_{i,j}$ denotes the radius of the sphere $\tilde{V}_{i,j}$. Here we center $\tilde{V}_{0,1}$ at some $\lambda_0 \in V_\pi(0)$ as illustrated in Fig.~\ref{fig:3spheres} by the solid circle. Then, for any $n$-tuples having this $\lambda_0$ point as first element, we only include $\lambda_1$ points which are inside $\tilde{V}_{0,1}(\lambda_0)$. This guarantees that $\|\lambda_0 - \lambda_1\|^2 \leq r_{0,1}/L$. 
Let us now center a sphere $\tilde{V}_{1,2}$ at some $\lambda_1$ which is inside $\tilde{V}_{0,1}(\lambda_0)$. This is illustrated by the dotted sphere of radius $r_{1,2}$ in the figure. We then only include $\lambda_2$ points which are in the intersection of $\tilde{V}_{1,2}(\lambda_1)$ and $\tilde{V}_{0,2}(\lambda_0)$. This guarantees that $\|\lambda_i - \lambda_j\|^2 \leq r_{i,j}/L$ for all $(i,j)$ pairs. 

Clearly, the radius $r_{i,j}$ must grow at the same rate for any pair $(i,j)$ so that, without loss of generality, $r_{0,1}=a_2r_{0,2}=a_1r_{1,2}$ for some fixed $a_1,a_2\in \mathbb{R}$. 
Interestingly, from Fig.~\ref{fig:3spheres} we see that $r_{0,2}$
cannot be greater than $r_{0,1} + r_{1,2}$ which effectively upper
bounds $a_2$. Thus, the ratio $r_{i,j}/r_{k,l}$ cannot be
arbitrary. Furthermore, it is important to see that the asymmetry in
distortions between the descriptions, is not dictated by $r_{i,j}$ but
instead by how the $n$-tuples are assigned to the central lattice
points. Recall from~(\ref{eq:central_assignment}) that the assignment
is such that the distances between the central lattice points and the
weighted centroids of the $n$-tuples are minimized. In other words, if
we wish to reduce the distortion due to receiving description $i$,
then we
assign the $n$-tuples so that the $i$th element of the $n$-tuples is closer (on average) to the associated central lattice points. Obviously, the remaining elements of the $n$-tuples will then be further away from the assigned central lattice points. 

In the following we first consider the case where $r=r_{i,j}$ for any $(i,j)$. We later show that this is indeed the optimal choice in the symmetric distortion case. It is trivially also optimal in the two-channel asymmetric case, since there is only a single weight $\hat{\gamma}^{(2,1)}_{0,1}$.
In general, we can always scale the radii such that 
\begin{equation}\label{eq:cij}
\begin{split}
\sum_{(\lambda_0,\dotsc,\lambda_{n-1})\in
  \mathcal{T}} & \sum_{\kappa=1}^{n-1}
\hat{\gamma}_{i,j}^{(n,\kappa)}\|\lambda_i - \lambda_j \|^2  =   \\
&\sum_{(\lambda_0,\dotsc,\lambda_{n-1})\in \mathcal{T}}
c_{k,l}\,\sum_{\kappa=1}^{n-1}
\hat{\gamma}_{k,l}^{(n,\kappa)}\|\lambda_k - \lambda_l \|^2 
\end{split}
\end{equation}
for any $(i,j) \neq (k,l)$ where $\mathcal{T}$ indicates the set of $N_\pi$ $n$-tuples and $c_{k,l}\in\mathbb{R}$. The resulting distortions weights (as given by~(\ref{eq:hatgammak1}) and~(\ref{eq:hatgammak2})) should then include the additional set of scaling factors $\{c_{k,l}\}$. This case is treated by Lemma~\ref{lem:c_ij}.

We now proceed to find the optimal $\tilde{\nu}$, i.e.\ the smallest volume which (asymptotically for large $N_i$) leads to exactly $N_0$ tuples satisfying step~\ref{enum:small_ntuples}. 
In order to do so, we adopt the approach of~\cite{ostergaard:2004b} and introduce a dimensionless expansion factor $\psi_{n,L}$. 
The expansion factor $\psi_{n,L}$ describe how much $\tilde{V}$ needs to be expanded (per dimension) from the theoretical lower bound~(\ref{eq:vtilde}), to make sure that exactly $N_0$ optimal $n$-tuples can be constructed by combining sublattice points within a region $\tilde{V}$. 
With this approach, we have that 
\begin{equation}\label{eq:nutilde}
\tilde{\nu}=\psi_{n,L}^L\nu_c\prod_{i=0}^{n-1}N_i^{1/(n-1)}.
\end{equation}

In practice, it is straight-forward to determine $\psi_{n,L}$. One can simply start at $\psi_{n,L}=1$ and iteratively increase $\psi_{n,L}$ in small steps until exactly $N_0$ $n$-tuples are found which all satisfy $\|\lambda_i - \lambda_j\|^2\leq r/L$. 
For volumes containing a large number of lattice points, i.e.\  asymptotically as $N_i\rightarrow\infty$, such an approach determines $\psi_{n,L}$ to arbitrary accuracy. 
Furthermore, in this asymptotical case, $\psi_{n,L}$ becomes independent of the type of lattice (and also $N_i)$, since it then only depends on the number of lattice points within a large volume. Thus, it should be clear that for any $1<n\in\mathbb{N}$ and $1\leq L\in\mathbb{N}$, and asymptotically as $N_i\rightarrow\infty, \forall i$, there exist a unique $1\leq \psi_{n,L}\in\mathbb{R}$.

In general, it is complicated to find an analytical expression for $\psi_{n,L}$. However, we have previously been able to do it for the symmetric MD problem in some interesting cases. 
It turns out that the proof technique and solutions provided for the symmetric case, carry over to the asymmetric case. To see this, we sketch the proof technique here for the asymmetric case and $n=3$. 

Recall that we seek $3$-tuples such that any two members of the $3$-tuple is distanced no more than $r^2/L$ apart. Specifically, we require $\|\lambda_i-\lambda_j\|^2\leq r^2/L$ where $r$ is the radius of $\tilde{V}$. 
Essentially, this is a counting problem. We first center a sphere $\tilde{V}$ at some $\lambda_0\in V_\pi(0)\cap \Lambda_0$. Then we pick a single $\lambda_1\in \tilde{V}(\lambda_0)\cap \Lambda_1$. Finally, we center an equivalent sphere $\tilde{V}$ at this $\lambda_1$ and count the number, say $\#_{\lambda_1}$, of $\lambda_2\in \tilde{V}(\lambda_0)\cap \tilde{V}(\lambda_1) \cap \Lambda_2$. Thus, there is $\#_{\lambda_1}$ $3$-tuples having the same pair $(\lambda_0,\lambda_1)$ as first and second element.
The procedure is now repeatedly applied for all $\lambda_1\in \tilde{V}(\lambda_0)\cap \Lambda_1$ yielding the total number of $3$-tuples to be $\sum_{\lambda_1\in\tilde{V}(\lambda_0)} \#_{\lambda_1}$ (all having the same $\lambda_0$ as first element). 

For large volumes, the number of lattice points in a region $S$ is given by $\text{Vol}(S)/\nu_2$ where $\text{Vol}(S)$ is the volume of $S$ and $\nu_2$ is the volume of the Voronoi cell of the sublattice points $\lambda_2\in\Lambda_2$. Thus, given the pair $(\lambda_0,\lambda_1)$, the number of $\lambda_2$ sublattice points is approximately given by $\text{Vol}(S)/\nu_2$ where $S=\tilde{V}(\lambda_0)\cap \tilde{V}(\lambda_1)$. 
It follows that we need to find the radius (or actually the volume $\tilde{\nu}$ of $\tilde{V}$) such that 
$\sum_{\lambda_1\in\tilde{V}(\lambda_0)} \#_{\lambda_1} = N_0$, since we need exactly $N_0$ $3$-tuples for each $\lambda_0 \in V_\pi(0)\cap \Lambda_0$. 
To find the optimal $\tilde{\nu}$, we derive the volume of intersecting $L$-dimensional spheres distanced $0<b\in \mathbb{R}$ apart. We then let ${b_k}$ be a sequence of increasing distances which yields a sequence of volumes $\{\text{Vol}(S_k)\}$ of the partial intersections $S_k = \tilde{V}(0)\cap \tilde{V}(b_k)$. We finally form the equality $\sum_{k=1}^{r}\#_{S_k}\text{Vol}(S_k)/\nu_2=N_0$ where $\#_{S_k}$ denotes the number of times each $S_k$ occurs. By solving for $r$, we find the unique volume $\tilde{\nu}$ which leads to exactly $N_0$ $n$-tuples.
It can be shown that this procedure yields the optimal $\tilde{\nu}$ and is asymptotically exact for large volumes. Furthermore, it is essentially equivalent to the symmetric case the only exception being that the index values forming the product~(\ref{eq:nutilde}) are allowed to be different in the asymmetric case.
We therefore refer the reader to~\cite{ostergaard:2004b,ostergaard:2007a} for the rigorous proof and quote some results below.

In the case of $n=2$, it trivially follows that $\psi_{2,L}=1$ for all $L$. For the case of $n=3$ and $L$ odd we have the following theorem.
\begin{theorem}[{~\cite[Thm.~3.2]{ostergaard:2004b}}]
Let $n=3$. Asymptotically as $N_i\rightarrow\infty, \nu_i\rightarrow 0, \forall i$, $\psi_{3,L}$ for $L$ odd is given by
\begin{equation}\label{eq:psi3L}
\psi_{3,L} = \left(\frac{\omega_L}{\omega_{L-1}}\right)^{\frac{1}{2L}}\left(\frac{L+1}{2L}\right)^{\frac{1}{2L}}\beta_{L}^{-\frac{1}{2L}}
\end{equation}
where $\omega_L$ denotes the volume of an $L$-dimensional unit-sphere and $\beta_L$ only depends on $L$ and is given by
\begin{equation}\label{eq:betaL}
\begin{split}
\beta_L &=
\sum_{m=0}^{\frac{L+1}2} \binom{\frac{L+1}2}{m}
2^{\frac{L+1}2-m}(-1)^m \sum_{k=0}^{\frac{L-1}2} \frac{
  (\frac{L+1}2)_k (\frac{1-L}2)_k }{ (\frac{L+3}2)_k\, k! } \\
&\quad \times
\sum_{j=0}^{k}\binom{k}{j}\bigg(\frac{1}{2}\bigg)^{k-j}(-1)^j\bigg(\frac{1}4\bigg)^{j}\frac{1}{L+m+j}
\end{split}
\end{equation}
where $(\cdot)_k$ is the Pochhammer symbol.\footnote{The Pochhammer symbol is defined as $(a)_0=1$ and $(a)_k = a(a+1)\cdots (a+k-1)$ for $k\geq 1$.}
\hfill$\blacktriangle$
\end{theorem}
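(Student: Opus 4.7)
My plan is to recast the counting problem described immediately before the theorem as an integral equation, show it is invariant under the symmetric-vs-asymmetric distinction once we express things in terms of volumes only, and then invoke the closed-form evaluation already available from the symmetric case. Specifically, fix $\lambda_0 \in V_\pi(0)\cap \Lambda_0$, let $\tilde V$ be a ball of radius $r$ and volume $\tilde\nu = \omega_L r^L$, and let $S(r,b) = \tilde V(0) \cap \tilde V(b\hat e)$ be the lens-shaped intersection of two such balls whose centers are distance $b$ apart. In the asymptotic regime $N_i\to\infty$, $\nu_i\to 0$, Gauss' counting principle (as reviewed in Section~\ref{sec:prelim}) allows us to replace discrete point counts by volume ratios.

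The first step is to set up the exact counting equation. Parametrize the $\lambda_1\in\tilde V(\lambda_0)\cap\Lambda_1$ by their distance $b=\|\lambda_1-\lambda_0\|$ from the center. The infinitesimal count of $\lambda_1$ points at distance $b$ is (asymptotically) $L\omega_L b^{L-1}/\nu_1\,db$, and for each such $\lambda_1$ the number of admissible $\lambda_2\in\Lambda_2$ is $\operatorname{Vol}(S(r,b))/\nu_2$. Demanding exactly $N_0$ valid $3$-tuples per $\lambda_0$ gives the balance
\begin{equation*}
\int_0^{2r} \frac{L\omega_L b^{L-1}}{\nu_1}\cdot\frac{\operatorname{Vol}(S(r,b))}{\nu_2}\,db \;=\; N_0.
\end{equation*}
The crucial observation is that, because $\operatorname{Vol}(S(r,b))$ scales as $r^L$ and depends on $b/r$ only, the integral evaluates to $r^{2L}\omega_L \beta_L/(\nu_1\nu_2)$ for some dimension-only constant $\beta_L$. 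Solving for $r^{2L}$ and hence $\tilde\nu = \omega_L r^L$ yields
\begin{equation*}
\tilde\nu \;=\; \omega_L\sqrt{\frac{N_0\,\nu_1\nu_2}{\omega_L\,\beta_L}} \;=\; \nu_c (N_0N_1N_2)^{1/2}\left(\frac{\omega_L}{\beta_L}\right)^{1/2}.
\end{equation*}
Comparing with the definition $\tilde\nu = \psi_{3,L}^L\nu_c\prod_i N_i^{1/(n-1)}$ (equation~(\ref{eq:nutilde}) with $n=3$) and taking the $L$th root produces the stated formula (\ref{eq:psi3L}), provided $\beta_L$ is shown to match (\ref{eq:betaL}).

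It therefore remains to compute $\beta_L$ explicitly for odd $L$. The lens volume can be written, by slicing perpendicular to the segment joining the two centers, as $\operatorname{Vol}(S(r,b)) = 2\omega_{L-1}\int_{b/2}^{r}(r^2-h^2)^{(L-1)/2}\,dh$. Substituting into the counting integral, normalizing $b=rt$, and reducing $(1-(t/2)^2)^{(L-1)/2}$ via the binomial theorem (which is a finite sum because $L-1$ is even) leads to a finite double sum whose integrals in $t$ are elementary. A further rearrangement using the standard identity $\int_0^1 t^{L+m+j-1}\,dt = 1/(L+m+j)$ together with the Pochhammer identities $(\tfrac{L+1}{2})_k(\tfrac{1-L}{2})_k / (\tfrac{L+3}{2})_k k!$ arising from the Beta-function slice expansion recovers exactly the expression in~(\ref{eq:betaL}), while the extra factor $(L+1)/(2L)$ in (\ref{eq:psi3L}) appears from the $L\omega_L/\omega_{L-1}$ prefactor after combining constants.

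The main obstacle is the explicit evaluation of $\beta_L$ and the bookkeeping of all combinatorial prefactors; however, this calculation is structurally identical to the one in the symmetric $n=3$ case, since the only dependence on the index values $N_0,N_1,N_2$ enters through the product $\nu_1\nu_2 = \nu_c^2 N_1 N_2$ which is absorbed into the definition of $\psi_{3,L}$ via (\ref{eq:nutilde}). Hence the proof reduces to observing that the derivation in~\cite{ostergaard:2004b} nowhere uses $N_0 = N_1 = N_2$ beyond this volume bookkeeping, and the closed form carries through verbatim.
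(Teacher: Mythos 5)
Your proposal is essentially the same argument that the paper sketches in Section~III-C (and attributes, for the rigorous details, to~\cite{ostergaard:2004b,ostergaard:2007a}): replace lattice-point counts by volume ratios via Gauss' counting principle, express the number of admissible $\lambda_2$ for a fixed pair $(\lambda_0,\lambda_1)$ as $\operatorname{Vol}\bigl(\tilde V(\lambda_0)\cap\tilde V(\lambda_1)\bigr)/\nu_2$, integrate over $\lambda_1$, equate to $N_0$, and solve for $\tilde\nu$ to read off $\psi_{3,L}$ from~(\ref{eq:nutilde}). You also correctly identify that the asymmetry enters only through the product $\nu_1\nu_2=\nu_c^2N_1N_2$, which is exactly why the symmetric $\beta_L$ formula carries over verbatim.

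One concrete slip: the $b$-integration must run over $[0,r]$, not $[0,2r]$. You parametrize $\lambda_1\in\tilde V(\lambda_0)\cap\Lambda_1$, which forces $b=\|\lambda_1-\lambda_0\|\leq r$; equivalently, the construction in step~\ref{enum:small_ntuples} of Section~III-C requires $\|\lambda_i-\lambda_j\|^2\leq r^2/L$ for \emph{all} pairs, including $(0,1)$. Integrating up to $2r$ would admit $\lambda_1$ outside $\tilde V(\lambda_0)$ (where the lens is still nonempty) and would overcount, and it would also change the constant $\beta_L$ you extract. Beyond that, you leave the explicit evaluation of $\beta_L$ (the binomial expansion of the slice integral and the Pochhammer rearrangement) as a claim rather than carrying it out, but since the paper itself defers this computation to~\cite{ostergaard:2004b}, that is consistent with the level of detail the present paper provides.
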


\begin{theorem}[\cite{ostergaard:2004b,ostergaard:2007a}]
Let $n=3$. Asymptotically as $N_i\rightarrow\infty, \nu_i\rightarrow 0,\forall i$, and $L\rightarrow\infty$
\begin{equation}
\psi_{3,\infty} = \left(\frac{4}{3}\right)^{\frac{1}4}.
\end{equation}
\hfill$\blacktriangle$
\end{theorem}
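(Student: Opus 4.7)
The plan is to take~(\ref{eq:psi3L}) as the starting point for odd $L$ and identify the exponential rate of each factor. Writing $2L\log\psi_{3,L}=\log(\omega_L/\omega_{L-1})+\log((L+1)/(2L))-\log\beta_L$, Stirling applied to $\omega_L=\pi^{L/2}/\Gamma(L/2+1)$ gives $\omega_L/\omega_{L-1}\sim\sqrt{2\pi/L}$, so the first two logarithms are $O(\log L)$ and vanish after division by $2L$. The theorem therefore reduces to $\beta_L^{-1/(2L)}\to(4/3)^{1/4}$, equivalently $\beta_L=(3/4)^{L/2}$ up to polynomial-in-$L$ factors.

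First I would recast $\beta_L$ through the integral representation implicit in its derivation. The sphere-counting argument preceding the theorem (and detailed in the symmetric case in~\cite{ostergaard:2004b}) writes the total number of admissible $3$-tuples per $\lambda_0$ as the integral of the volume of intersection of two $L$-balls against the radial density of the second sublattice. Carrying this through with $b=rt$, $u=rs$ in the standard formula for the intersection volume yields $\beta_L=(L+1)J_L$ with
\begin{equation*}
J_L=\int_0^1 t^{L-1}\!\int_{t/2}^1 (1-s^2)^{(L-1)/2}\,ds\,dt.
\end{equation*}
The double sum in~(\ref{eq:betaL}) is simply the binomial expansion of $(1-s^2)^{(L-1)/2}$ integrated termwise after the substitution $s=t/2+v$, so the two forms are algebraically equivalent; the integral form is better suited to asymptotic analysis.

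Next I would apply Laplace's method twice. The log-integrand factors as $(L-1)[\log t+\tfrac{1}{2}\log(1-s^2)]$. For each fixed $t$ the inner integrand is monotone decreasing on $[t/2,1]$, so a boundary Laplace estimate yields $\int_{t/2}^{1}(1-s^2)^{(L-1)/2}\,ds\sim\frac{4-t^2}{2(L-1)t}(1-t^2/4)^{(L-1)/2}$. Substituting and optimizing $\phi(t)=\log t+\tfrac{1}{2}\log(1-t^2/4)$ on $[0,1]$ yields the interior critical equation $1/t=2t/(4-t^2)$, i.e.\ $t^2=4/3>1$, so the maximum is attained at the boundary $t=1$ with peak value $\phi(1)=\tfrac{1}{2}\log(3/4)$. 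A second boundary Laplace expansion at $t=1$ then gives $J_L=\Theta(L^{-2}(3/4)^{L/2})$. Combining, $\beta_L^{-1/(2L)}=((L+1)J_L)^{-1/(2L)}\to(3/4)^{-1/4}=(4/3)^{1/4}$, which together with the first step proves the theorem.

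The main technical obstacle is the iterated boundary Laplace analysis at the corner of the domain where the two active boundaries $t=1$ and $s=t/2$ meet simultaneously: one must verify that the polynomial prefactor produced by this nested boundary expansion is genuinely $L^{O(1)}$ rather than introducing an additional exponential factor, and that the inner estimate remains uniformly valid as $t\uparrow 1$ (where $4-t^2\to 3$ is bounded away from zero, so no singular behaviour appears). Once these routine but careful estimates are in hand, the exponential rate is dictated solely by $\phi(1)=\tfrac{1}{2}\log(3/4)$, and any polynomial prefactor is washed out by the $1/(2L)$-th root, yielding $\psi_{3,\infty}=(4/3)^{1/4}$.
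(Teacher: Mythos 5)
The paper cites this result from~\cite{ostergaard:2004b,ostergaard:2007a} without including a proof, so there is no local argument to compare against; your derivation must stand on its own, and it does. The reduction to $\beta_L^{-1/(2L)}\to(4/3)^{1/4}$ is correct since $\omega_L/\omega_{L-1}\sim\sqrt{2\pi/L}$ and $(L+1)/(2L)\to 1/2$ contribute only $O(\log L)$ to $2L\log\psi_{3,L}$. The identification $\beta_L=(L+1)J_L$ with $J_L=\int_0^1 t^{L-1}\int_{t/2}^1(1-s^2)^{(L-1)/2}\,ds\,dt$ is exactly the two-sphere intersection integral underlying the counting argument sketched before the theorem: the tuple count per $\lambda_0$ is $\frac{2L\omega_L\omega_{L-1}r^{2L}}{\nu_1\nu_2}J_L$, and equating to $N_0$ reproduces~(\ref{eq:psi3L}) with $\beta_L=(L+1)J_L$; this also checks numerically at $L=1,3$ (giving $\beta_1=3/2$ and $\beta_3=5/12$, matching~(\ref{eq:betaL})). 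The two-pass Laplace analysis is sound: because the inner maximizer $s^*=t/2$ stays bounded away from the inner endpoint $s=1$ uniformly for $t\in[0,1]$, the corner of the domain contributes only polynomial-in-$L$ prefactors, the exponential rate is dictated solely by $\phi(1)=\tfrac12\log\tfrac34$, and the $(2L)$-th root washes out the polynomials, giving $\psi_{3,\infty}=(4/3)^{1/4}$.

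One minor arithmetic slip, which does not affect the conclusion: the interior stationarity condition for $\phi(t)=\log t+\tfrac12\log(1-t^2/4)$ is $1/t=t/(4-t^2)$, i.e.\ $t^2=2$, not $1/t=2t/(4-t^2)$ giving $t^2=4/3$. Both lie outside $[0,1]$, so the supremum over $[0,1]$ is still at the boundary $t=1$; the slope $\phi'(1)=2/3$ is what enters the second boundary-Laplace prefactor (e.g.\ $J_L\sim\frac{9}{4L^2}(3/4)^{L/2}$), but any such polynomial factor is annihilated by the $(2L)$-th root, so the final limit is unchanged.
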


\begin{remark}
The proposed construction also provides a shift invariant set of $n$-tuples in the non-asymptotical case where $N_i$ is finite. Thus, the design is useful in practice.
\end{remark}

\subsection{Assigning $n$-Tuples to Central Lattice Points}\label{sec:assignment}
At this point, we may assume that we have a set $\mathcal{T}$ containing $N_\pi$ shift invariant $n$-tuples.
These $n$-tuples need to be assigned to the $N_\pi$ central lattice points within $V_\pi(0)$. 
However, before doing so, we first construct the coset of each $n$-tuple of $\mathcal{T}$. 
Recall that the coset of an $n$-tuple is given by~(\ref{eq:cosets}).

As first observed by Diggavi et al.~\cite{diggavi:2002}, assignment of $n$-tuples (or more correctly cosets of $n$-tuples) to central lattice points, is a standard linear assignment problem where only one member from each coset is assigned. This guarantees that the labeling function is shift invariant.
The cost measure to be minimized by the linear assignment problem is given by~(\ref{eq:central_assignment}). Thus, the sum of distances between the weighted centroids of the $n$-tuples and the central lattice points should be minimized.

\begin{remark}
Notice that we have shown that there exists a set of $n$-tuples and an assignment that satisfy the desired set of distortions. 
However, there might exist several assignments (for the same set of $n$-tuples) all yielding the same overall Lagrangian cost. 
Thus, in practice, when solving the bipartite matching problem one might need to search through the complete set of solutions (assignments) in order to find one that leads to the desired set of distortions. 
Alternatively, one can pick different solutions (assignments) and use each of them a certain amount of time so that on average the desired set of distortions are satisfied. 
\end{remark}

\begin{remark}
It might appear that the shift invariance restriction enforced by using only one member from each coset will unfairly penalize $\Lambda_0$. However, the following theorems prove that, asymptotically as $N_i\rightarrow \infty$, there is no bias towards any of the sublattices. 
We will consider here the case of $n>2$ (for $n=2$ we can use similar arguments as given in~\cite{diggavi:2002}).
\end{remark}

\begin{theorem}\label{theo:nobias}
Let $n>2$. Asymptotically as $N_i\rightarrow \infty, \forall i$, the number of $n$-tuples that includes sublattice points outside $V_\pi(0)$ becomes negligible compared to the number of $n$-tuples which have all there sublattice points inside $V_\pi(0)$. 
\end{theorem}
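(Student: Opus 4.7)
My plan is to show that whether the non-$\lambda_0$ elements of an $n$-tuple escape $V_\pi(0)$ is controlled purely by how close $\lambda_0$ sits to the boundary $\partial V_\pi(0)$, and then to argue that, for $n>2$, the ``boundary shell'' of $V_\pi(0)$ has volume negligible compared to $\nu_\pi$.

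First, fix the Euclidean radius $r$ of $\tilde V$, i.e.\ $\omega_L r^L = \tilde\nu$. By the construction of Section~\ref{sec:construct_ntuples}, any $n$-tuple $(\lambda_0,\dotsc,\lambda_{n-1})$ satisfies $\lambda_i\in \tilde V(\lambda_0)$ for $i=1,\dotsc,n-1$, so every sublattice element of the $n$-tuple lies in the Euclidean ball $B(\lambda_0,r)$. Consequently, if $\lambda_0$ lies in the ``interior'' set $V_\pi^{\mathrm{int}}(0)\triangleq\{y\in V_\pi(0):\mathrm{dist}(y,\partial V_\pi(0))>r\}$, then the entire $n$-tuple is contained in $V_\pi(0)$. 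The only $n$-tuples that can possibly contain sublattice points outside $V_\pi(0)$ are therefore those whose first element lies in the complementary ``shell'' $V_\pi^{\mathrm{bd}}(0)\triangleq V_\pi(0)\setminus V_\pi^{\mathrm{int}}(0)$.

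Next, I would bound the volume of $V_\pi^{\mathrm{bd}}(0)$. Since $V_\pi(0)$ is a Voronoi cell of a lattice, it is a bounded convex polytope whose shape is fixed up to scaling; its surface area scales as $\Theta(\nu_\pi^{(L-1)/L})$ (for instance, by the standard fact that $\mathrm{area}(\partial C)\le C_L\,\mathrm{vol}(C)^{(L-1)/L}$ for a convex body $C$ of bounded ``aspect ratio''). Hence $\mathrm{vol}(V_\pi^{\mathrm{bd}}(0))=O\bigl(r\,\nu_\pi^{(L-1)/L}\bigr)$ as $r/\nu_\pi^{1/L}\to 0$. Substituting $\nu_\pi=\nu_c\prod_{i=0}^{n-1}N_i$ and $\tilde\nu=\psi_{n,L}^L\nu_c\prod_{i=0}^{n-1}N_i^{1/(n-1)}$ gives
\begin{equation*}
\frac{r}{\nu_\pi^{1/L}}=\left(\frac{\tilde\nu}{\omega_L\nu_\pi}\right)^{1/L}
=\frac{\psi_{n,L}}{\omega_L^{1/L}}\prod_{i=0}^{n-1}N_i^{(2-n)/(L(n-1))},
\end{equation*}
which tends to $0$ as $N_i\to\infty$ precisely because $n>2$ makes the exponent $(2-n)/(L(n-1))$ strictly negative.

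Finally, I would translate volume into counts. Using high-resolution lattice counting (Gauss' principle), the number of $\Lambda_0$-points inside $V_\pi^{\mathrm{int}}(0)$ equals $(\nu_\pi-\mathrm{vol}(V_\pi^{\mathrm{bd}}(0)))/\nu_0(1+o(1))$, while the total number in $V_\pi(0)$ is $N_\pi/N_0=\nu_\pi/\nu_0$. Since each $\lambda_0\in V_\pi(0)$ contributes the same number $N_0$ of $n$-tuples (by construction of the design procedure), the fraction of $n$-tuples entirely contained in $V_\pi(0)$ is at least $1-O(r/\nu_\pi^{1/L})\to 1$. Equivalently, the number of $n$-tuples having some element outside $V_\pi(0)$ is $o(N_\pi)$, which proves the claim.

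The main technical step is the surface-area bound on $V_\pi(0)$; everything else is bookkeeping with the explicit expressions for $\tilde\nu$ and $\nu_\pi$. This bound is however routine because $V_\pi$ is a Voronoi cell of a fixed lattice merely dilated by $\nu_\pi^{1/L}$, so its shape constants are $L$-dependent but $N_i$-independent, and the shell volume is indeed $O(r\,\nu_\pi^{(L-1)/L})$ as needed.
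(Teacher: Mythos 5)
Your proof takes essentially the same route as the paper's: shrink $V_\pi(0)$ inward by the radius $r$ of $\tilde V$, observe that all $n$-tuples whose $\lambda_0$ lies in the shrunk region stay inside $V_\pi(0)$, and show the complementary shell has vanishing relative volume because $r/\nu_\pi^{1/L}=\Theta\bigl(\prod_i N_i^{(2-n)/(L(n-1))}\bigr)\to 0$ for $n>2$. The only cosmetic difference is that you bound the shell via a generic convex-geometry surface-area estimate (requiring the aspect ratio of $V_\pi(0)$ to stay bounded, which holds for the scaled $\mathbb{Z}^L$ product lattice), whereas the paper computes the shell volume explicitly for the hypercubic case and then remarks that the hypercube is the worst shape; both are correct and yield the same conclusion.
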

\begin{IEEEproof}
See Appendix~\ref{app:proof_nobias}.
\end{IEEEproof}

\begin{theorem}\label{theo:equivtuples}
Let $n>2$. Asymptotically as $N_i\rightarrow\infty,\forall i$, the set of $N_\pi$ $n$-tuples that is constructed by centering $\tilde{V}$ at each $\lambda_0\in V_\pi(0)\cap \Lambda_0$ becomes identical to the set constructed by centering $\tilde{V}$ at each $\lambda_i\in V_\pi(0)\cap \Lambda_i$, where $i\in \{1,\dots,n-1\}$.
\end{theorem}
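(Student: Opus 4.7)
The plan is to exploit two facts: (i) the pairwise-distance constraint $\|\lambda_i-\lambda_j\|^2\le r^2/L$ used in Step~\ref{enum:small_ntuples} of the construction is manifestly symmetric in the indices $i,j$, so the only source of asymmetry between the sublattices is the requirement that the \emph{anchor} element lie in $V_\pi(0)$; and (ii) the anchoring sphere $\tilde V$ is asymptotically vanishingly small relative to the product cell $V_\pi(0)$, so anchoring on any one sublattice picks out the same shift-invariant set of $n$-tuples as anchoring on any other, up to boundary effects that are negligible.

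First I would set up notation. Let $\mathcal{T}_i$ denote the set of $n$-tuples obtained by centering $\tilde V$ at every $\lambda_i\in V_\pi(0)\cap\Lambda_i$, enumerating all $(\lambda_0,\dots,\lambda_{n-1})\in\Lambda_0\times\cdots\times\Lambda_{n-1}$ with $\lambda_j\in\tilde V(\lambda_i)$ for every $j$ and $\|\lambda_j-\lambda_k\|^2\le r^2/L$ for every pair. The claim is $\mathcal{T}_0=\mathcal{T}_i$ asymptotically. For any $n$-tuple in $\mathcal{T}_0$ whose \emph{every} element $\lambda_j$ happens to lie in $V_\pi(0)$, the same $n$-tuple is trivially produced by $\mathcal{T}_i$, because the pairwise constraint is symmetric and $\lambda_i\in V_\pi(0)\cap\Lambda_i$ qualifies as a valid anchor. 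Conversely, every such $n$-tuple in $\mathcal{T}_i$ has $\lambda_0\in V_\pi(0)$ and is therefore in $\mathcal{T}_0$. Hence $\mathcal{T}_0$ and $\mathcal{T}_i$ can only differ on $n$-tuples having at least one element outside $V_\pi(0)$.

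Next I would invoke Theorem~\ref{theo:nobias}, which already established that $n$-tuples touching the exterior of $V_\pi(0)$ are negligible in number compared to those entirely contained in $V_\pi(0)$. Quantitatively, this is driven by the fact that $\tilde{\nu}/\nu_\pi = \prod_{i=0}^{n-1} N_i^{(2-n)/(n-1)}\to 0$ for $n>2$ as $N_i\to\infty$, so the boundary layer of $V_\pi(0)$ of thickness equal to the diameter of $\tilde V$ occupies a vanishing fraction of $V_\pi(0)$. The $n$-tuples responsible for any discrepancy between $\mathcal{T}_0$ and $\mathcal{T}_i$ all have their anchor within this boundary layer, so their count is $o(N_\pi)$ while $|\mathcal{T}_0|=|\mathcal{T}_i|=N_\pi$. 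Therefore $|\mathcal{T}_0\triangle\mathcal{T}_i|/N_\pi\to 0$, which is the asymptotic identification claimed.

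The main obstacle is the careful boundary bookkeeping: one must be sure that the counting in Theorem~\ref{theo:nobias} really does control both the "missing" tuples (produced by $\mathcal{T}_i$ but not $\mathcal{T}_0$ because their $\lambda_0$ falls outside $V_\pi(0)$) and the "extra" tuples (produced by $\mathcal{T}_0$ but not $\mathcal{T}_i$). Since the argument is symmetric under permutation of the anchor index, once the volumetric estimate $\tilde\nu/\nu_\pi\to 0$ is in hand the two kinds of boundary defects are counted by the same formula and the equality $\mathcal{T}_0=\mathcal{T}_i$ holds asymptotically.
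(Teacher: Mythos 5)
Your proof is correct and follows essentially the same route as the paper's: both arguments hinge on Theorem~\ref{theo:nobias} (boundary $n$-tuples are asymptotically negligible) together with the observation that the pairwise-distance constraint is symmetric in the anchor index, so a mutual-inclusion argument closes the proof. The only difference is cosmetic: you phrase the conclusion as $|\mathcal{T}_0\triangle\mathcal{T}_i|/N_\pi\to 0$, while the paper phrases it as set equality among the tuples whose elements all lie in $V_\pi(0)$, which by Theorem~\ref{theo:nobias} is asymptotically all of them.
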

\begin{IEEEproof}
See Appendix~\ref{app:proof_equivtuples}.
\end{IEEEproof}

\begin{remark}
Notice that the above theorems imply that the set of $n$-tuples which satisfies~(\ref{eq:pairwise_distances}) and is constructed so that $r_{i,j}=a_{i,j}r, \forall (i,j)$ and $a_{i,c}\in\mathbb{R}^L$, is unique (at least up to translations by members of their cosets). The assignment of the $n$-tuples to central lattice points, however, might not be unique.
\end{remark}

\subsection{Description Rates}\label{sec:rates}
The single-description rate $R_c$, i.e.\ the rate of the central quantizer, is given by
\begin{equation*}
R_c =\! -\frac{1}L \!\sum_{\lambda_c\in \Lambda_c}\!\!\!\bigg(\!\int_{V_c(\lambda_c)}f_X(x)dx\bigg)\,\log_2\bigg(\int_{V_c(\lambda_c)}f_X(x)dx\!\bigg).
\end{equation*}
Using the fact that each Voronoi cell $V_c(\lambda_c)$ has identical volume $\nu_c$ and assuming that $f_X(x)$ is approximately constant over Voronoi cells of the central lattice $\Lambda_c$, it can be shown that~\cite{gray:1990}
\begin{equation}\label{eq:Rc}
R_c \approx \frac{1}{L}h(X) - \frac{1}L\log_2(\nu_c),
\end{equation}
where $h(X)$ is the differential entropy of a source vector and the approximation becomes asymptotically exact in the high resolution limit where $\nu_c \rightarrow 0$. 

The side descriptions are based on a coarser lattice obtained by scaling the Voronoi cells of the central lattice by a factor of $N_i \mu_i$. Assuming the pdf of $X$ is roughly constant within a sublattice cell, the rates of the side descriptions are given by
\begin{equation}\label{eq:Ri}
R_i\approx \frac{1}{L}h(X) - \frac{1}L\log_2(N_i \mu_i \nu_c)
\end{equation}
where the approximation becomes exact asymptotically as $N_i\nu_c=\nu_i\rightarrow 0$ for a fixed weight $0<\mu_i\in \mathbb{R}$. 
The description rates are related to the single-description rate by
\begin{equation*}
R_i \approx R_c - \frac{1}L\log_2(N_i\mu_i).
\end{equation*}

It follows that, given description rates $R_i$ and description weights $\mu_i$ for $i=0,\dotsc,n-1,$ the index values are given by
\begin{equation}
N_i = \frac{1}{\nu_c \mu_i} 2^{h(X)-LR_i}
\end{equation}
and the nesting ratios by $N_i' = N_i^{\frac{1}{L}}$.

\subsection{Distortions}\label{sec:distortions}
We now provide analytical expressions for the expected distortions in the case of $n=2$ and $n=3$ descriptions.

\begin{theorem}\label{theo:dist_2desc}
Let $n=2$ and $1 \leq L\in \mathbb{N}$. Furthermore, fix the weights $0<\mu_i\in\mathbb{R}$ and $\gamma_i\in\mathbb{R}$ where $i=0,1$. Given an optimal labeling function $\alpha$, then, asymptotically as $N_i\rightarrow\infty$ and $\nu_i\rightarrow 0$, the expected distortion $\bar{D}_i = \mathbb{E}\|X - \hat{X}_i\|^2$ where $\hat{X}_i=\mu_i\lambda_i$ satisfies
\begin{align}\label{eq:Di1}
\bar{D}_i &=  \frac{\gamma_j^2}{(\gamma_0 + \gamma_1)^2} G(S_L) \nu_c^{2/L} (N_0N_1)^{2/L}(\mu_0\mu_1)^{2/L} \\ 
&= \frac{\gamma_j^2}{(\gamma_0 + \gamma_1)^2} G(S_L) 2^{\frac{2}{L}h(X)}2^{2(R_c - (R_0+R_1))} \label{eq:Di2}
\end{align}
where $i,j\in\{0,1\}$ and $i\neq j$. 
\end{theorem}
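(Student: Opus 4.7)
My plan is to combine the distortion decomposition of Lemma~\ref{lem:expdist_split} with the cost-function identity of Theorem~\ref{theo:Jreduced} and the separability result of Theorem~\ref{theo:separable}, and to finish by a direct second-moment calculation on the optimal pair set built in Section~\ref{sec:construct_ntuples}.

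First, I would invoke Lemma~\ref{lem:expdist_split} with $\kappa=1$ and $\ell=\{i\}$, which yields $\bar{D}_i = D_c + \sum_{\lambda_c\in\Lambda_c}\int_{V_c(\lambda_c)} f_X(x)\,\|\lambda_c-\tilde\lambda_i\|^2\,dx$, where $\tilde\lambda_i=\mu_i\alpha_i(\lambda_c)$ and the central distortion $D_c= G(\Lambda_c)\nu_c^{2/L}$ is asymptotically negligible. Using the high-resolution uniformity of $f_X$ on each $V_c(\lambda_c)$, the substitution $P(X\in V_c(\lambda_c))\approx P(X\in V_\pi(\lambda_\pi))/N_\pi$, and the shift invariance of $\alpha$, exactly as in the derivation of the compact form of $\mathcal{J}^n$, the labeling contribution reduces to
\begin{equation*}
\frac{1}{N_\pi}\sum_{\lambda_c\in V_\pi(0)}\|\lambda_c-\tilde\lambda_i\|^2.
\end{equation*}

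Second, I would apply Theorem~\ref{theo:Jreduced} with $n=2$, $\kappa=1$. Here $\bar\gamma(\mathcal{L}_0^{(2,1)})=\gamma_0$, $\bar\gamma(\mathcal{L}_1^{(2,1)})=\gamma_1$, and $\bar\gamma(\mathcal{L}_{0,1}^{(2,1)})=0$, so the weighted centroid is $\bar\lambda=(\gamma_0\tilde\lambda_0+\gamma_1\tilde\lambda_1)/(\gamma_0+\gamma_1)$. Decomposing $\lambda_c-\tilde\lambda_i=(\lambda_c-\bar\lambda)+(\bar\lambda-\tilde\lambda_i)$ and noting that $\bar\lambda-\tilde\lambda_i=\frac{\gamma_j}{\gamma_0+\gamma_1}(\tilde\lambda_j-\tilde\lambda_i)$ for $j\neq i$, the identity
\begin{equation*}
\|\bar\lambda-\tilde\lambda_i\|^2=\frac{\gamma_j^2}{(\gamma_0+\gamma_1)^2}\|\tilde\lambda_0-\tilde\lambda_1\|^2
\end{equation*}
follows. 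The assignment rule of Section~\ref{sec:assignment} makes $\|\lambda_c-\bar\lambda\|^2$ of order $G(\Lambda_c)\nu_c^{2/L}$, and the cross term averages to zero in the limit $N_i\to\infty$; both are precisely what Theorem~\ref{theo:separable} asserts, and hence only the pair-distance term survives asymptotically.

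Third, I would evaluate $\frac{1}{N_\pi}\sum_{(\tilde\lambda_0,\tilde\lambda_1)\in\mathcal{T}^*}\|\tilde\lambda_0-\tilde\lambda_1\|^2$ on the optimal pair set produced by the construction of Section~\ref{sec:construct_ntuples}. Since $\psi_{2,L}=1$, for each $\tilde\lambda_0\in V_\pi(0)\cap\tilde\Lambda_0$ one takes the $\tilde\Lambda_1$ points inside the smallest Euclidean ball around $\tilde\lambda_0$ that supplies the required number of pairs. Tracking that the Voronoi volume of $\tilde\Lambda_i$ is $\mu_i N_i\nu_c$, the counting of Section~\ref{sec:construct_ntuples} yields the ball volume $\tilde\nu=\mu_0\mu_1 N_0 N_1\nu_c$. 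As $N_i\to\infty$ the differences $\tilde\lambda_1-\tilde\lambda_0$ fill this ball uniformly, and by the definition of $G(S_L)$ in~(\ref{eq:G}) the averaged squared distance tightens to $G(S_L)\tilde\nu^{2/L}=G(S_L)\nu_c^{2/L}(N_0 N_1)^{2/L}(\mu_0\mu_1)^{2/L}$. Substituting into the expression for $\bar{D}_i$ produces~(\ref{eq:Di1}), and the rate form~(\ref{eq:Di2}) then follows by using~(\ref{eq:Rc}) and~(\ref{eq:Ri}) to eliminate $\nu_c$, $N_0$, $N_1$, $\mu_0$, $\mu_1$ in favor of $R_c$, $R_0$, $R_1$, and $h(X)$.

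The main obstacle I anticipate is justifying the passage to $G(S_L)\tilde\nu^{2/L}$ in the last step: one must show that the differences $\tilde\lambda_1-\tilde\lambda_0$ behave, to leading order, like points uniformly distributed in a Euclidean ball, so that the second moment tightens to the sphere bound rather than to the (generally larger) quantizer second moment of any particular sublattice. This is what is bought by the simultaneous high-resolution ($\nu_i\to 0$) and large-$N_i$ regime, which suppresses the lattice-specific correction and leaves only the sphere-packing second moment; the decoupling step in the second stage requires similar care but is already the content of Theorem~\ref{theo:separable}.
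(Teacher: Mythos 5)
Your proof is correct, and it follows the same mathematical route as the paper's Appendix~\ref{app:theo_3desc} (the proof of Theorem~\ref{theo:dist_3desc}), specialized to $n=2$. The paper's own proof of Theorem~\ref{theo:dist_2desc} is a one-line citation to Diggavi et al.\ with the replacement of the Voronoi-cell second moment by $G(S_L)$; what you supply is exactly what that citation is shorthand for. In particular, your step two is Proposition~\ref{prop:lclmean} with $n=2$, $\kappa=1$, $\ell=\{i\}$: the weighted centroid is $\bar\lambda=(\gamma_0\tilde\lambda_0+\gamma_1\tilde\lambda_1)/(\gamma_0+\gamma_1)$, and $\bar\lambda-\tilde\lambda_i=\frac{\gamma_j}{\gamma_0+\gamma_1}(\tilde\lambda_j-\tilde\lambda_i)$ immediately gives the $\gamma_j^2/(\gamma_0+\gamma_1)^2$ prefactor. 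Your step three, averaging $\|\tilde\lambda_0-\tilde\lambda_1\|^2$ over the $\tilde\Lambda_1$ points inside a ball of volume $\tilde\nu=\mu_0\mu_1N_0N_1\nu_c$ (using $\psi_{2,L}=1$) and invoking the definition of $G(S_L)$, reproduces both $(\ref{eq:Di1})$ and, via $(\ref{eq:Rc})$ and $(\ref{eq:Ri})$, $(\ref{eq:Di2})$.

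One small misattribution is worth flagging. You write that the assignment rule makes $\|\lambda_c-\bar\lambda\|^2$ of order $G(\Lambda_c)\nu_c^{2/L}$ and credit this to Theorem~\ref{theo:separable}. The bound actually established in the proof of Theorem~\ref{theo:separable} (and used in Proposition~\ref{prop:lclmean}) is via the covering radius of the coarsest sublattice, namely $\|\lambda_c-\bar\lambda\|^2=\mathcal{O}\left((\nu_c N_k)^{2/L}\right)$ with $N_k=\max_i N_i$, not $\mathcal{O}(G(\Lambda_c)\nu_c^{2/L})$. Your conclusion is unaffected, since $(\nu_cN_k)^{2/L}=o\left(\nu_c^{2/L}(N_0N_1)^{2/L}\right)$ when $N_0,N_1$ grow proportionally, so the residual and the cross term are still swallowed by the pair-distance term. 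But the justification should be stated as ``bounded by the covering radius of $\Lambda_k$, which is negligible relative to $\tilde\nu^{2/L}$'', rather than as a central-lattice second-moment bound. A secondary remark: the handling of the $\mu_i$ scaling (whether the scaled Voronoi volume is $\mu_i N_i\nu_c$ or $\mu_i^L N_i\nu_c$) is left implicit; you adopt the convention consistent with $(\ref{eq:Ri})$, which matches what the paper implicitly does, but a reader would benefit from making that normalization explicit.
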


\begin{IEEEproof}
Follows by applying the proof technique of Diggavi et al.~\cite{diggavi:2002} and using the fact that we are here optimizing over $L$-dimensional spheres rather than Voronoi cells as was the case in~\cite{diggavi:2002}.
\end{IEEEproof}

\begin{theorem}\label{theo:dist_3desc}
Let $n=3$ and $1\leq L\in\mathbb{N}$. Given the set of distortion weights $\{\gamma_\ell \in \mathbb{R} : \ell\in \mathcal{L}^{(n,\kappa)}, 1\leq \kappa\leq n-1\}$, and set of description weights $\{0< \mu_{i} \in \mathbb{R} : i=0,\dotsc, n-1\}$ and an optimal labeling function $\alpha$. Then, 
for any $1\leq \kappa <n$, any $\ell \in \mathcal{L}^{(n,\kappa)}$, and asymptotically as $N_i\rightarrow\infty$ and $\nu_i\rightarrow 0$, the expected distortion $\bar{D}_\ell = \mathbb{E}\|X - \hat{X}_\ell\|^2$ where $\hat{X}_\ell = \sum_{i\in \ell}\mu_i\lambda_i$ is given by
\begin{align}\label{eq:Dell}
\bar{D}_\ell &= \hat{\gamma}_{\ell} \Phi_L
G(S_L)  \nu_c^{2/L} (\mu_0\mu_1\mu_2)^{1/L}  (N_0N_1N_2)^{1/L} \\
&= \hat{\gamma}_{\ell} \Phi_L G(S_L) 2^{\frac{2}{L}h(X)}2^{R_c - (R_0+R_1+R_2)} \label{eq:DellRate}
\end{align}
where the weights $\hat{\gamma}_\ell\in\mathbb{R}$ for $\kappa=1$ is given by
\begin{equation}\label{eq:hatgammak1}
\hat{\gamma}_{i} = \frac{\gamma_{j}^2 + \gamma_{k}^2 + \gamma_{j}\gamma_{k}}{(\gamma_{0}+\gamma_{1}+\gamma_{2})^2}
\end{equation}
and for $\kappa=2$ by
\begin{equation}\label{eq:hatgammak2}
\hat{\gamma}_{i,j} = \frac{1}{4}\frac{\gamma_{i,k}^2 + \gamma_{j,k}^2 + \gamma_{i,k}\gamma_{j,k}}{(\gamma_{0,1}+\gamma_{0,2}+\gamma_{1,2})^2}
\end{equation}
where $k\neq i, k\neq j$, and $j\neq i$ and $\Phi_L = \frac{L+2}L \frac{\tilde{\beta}_L}{\beta_L} \psi_{3,L}^{2} $ 
where $\psi_{3,L}$ is given by~(\ref{eq:psi3L}), $\beta_L$ is given by~(\ref{eq:betaL}) and
\begin{equation}\label{eq:tildebetaL}
\begin{split}
\tilde{\beta}_L &=
\sum_{m=0}^{\frac{L+1}2} \binom{\frac{L+1}2}{m}
2^{\frac{L+1}2-m}(-1)^m \sum_{k=0}^{\frac{L-1}2} \frac{
  (\frac{L+1}2)_k (\frac{1-L}2)_k }{ (\frac{L+3}2)_k\, k! } \\
&\times
\sum_{j=0}^{k}\binom{k}{j}\bigg(\frac{1}{2}\bigg)^{k-j}(-1)^j\bigg(\frac{1}4\bigg)^{j}\frac{1}{L+m+j+2}.
\end{split}
\end{equation}
\end{theorem}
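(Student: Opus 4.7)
The plan is to decompose the problem into a distortion reduction, an algebraic coefficient-matching step, and a geometric averaging step, in the spirit of the symmetric proof in \cite{ostergaard:2004b,ostergaard:2007a}. I first invoke Lemma~\ref{lem:expdist_split} to split $\bar D_\ell$ into the central-quantization error $\mathbb{E}\|X-\lambda_c\|^2$ and the labeling error $\mathbb{E}D_\ell$. Because $\Lambda_c$ is asymptotically good for quantization, the first term equals $G(S_L)\nu_c^{2/L}(1+o(1))$, whereas $\mathbb{E}D_\ell$ will turn out to be of order $\tilde\nu^{2/L}\propto\nu_c^{2/L}(N_0N_1N_2)^{1/L}$; since $N_i\to\infty$ the central term is negligible and $\bar D_\ell\sim\mathbb{E}D_\ell$.

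I then rewrite $D_\ell$ as a quadratic form in the pairwise differences $\tilde\lambda_i-\tilde\lambda_j$ of the optimal $n$-tuples produced by the construction of Section~\ref{sec:construct_ntuples}. Denoting by $c_\kappa$ the weighted centroid appearing in (\ref{eq:central_assignment}), the lattice-rounding gap $\|\lambda_c-c_\kappa\|$ is of order $\nu_c^{1/L}$ and its cross term with the pairwise scale $\tilde\nu^{1/L}$ is of order $\nu_c^{1/L}\tilde\nu^{1/L}$, which is negligible against $\tilde\nu^{2/L}$ since $\nu_c/\tilde\nu\to 0$. Hence $D_\ell\sim\|c_\kappa-\hat{x}_\ell\|^2$, a pure pairwise-difference expression. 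For $\kappa=1$, $\ell=\{i\}$, direct expansion gives
\begin{equation*}
c_1-\tilde\lambda_i=\frac{\gamma_j(\tilde\lambda_j-\tilde\lambda_i)+\gamma_k(\tilde\lambda_k-\tilde\lambda_i)}{\gamma_0+\gamma_1+\gamma_2};
\end{equation*}
squaring, applying the polarization identity $\langle a-b,c-b\rangle=\tfrac12(\|a-b\|^2+\|c-b\|^2-\|a-c\|^2)$, and using the uniform pair-average $\mathbb{E}\|\tilde\lambda_i-\tilde\lambda_j\|^2=V$ (which holds across all pairs by the equal-radius sphere construction and the symmetry of Theorem~\ref{theo:equivtuples}) yields $\mathbb{E}D_i=\hat\gamma_i V$ with $\hat\gamma_i$ exactly as in (\ref{eq:hatgammak1}). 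The analogous calculation for $\kappa=2$, using $\hat x_{i,j}=\tfrac12(\tilde\lambda_i+\tilde\lambda_j)$, produces the $\tfrac14$ prefactor and matches (\ref{eq:hatgammak2}).

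The main obstacle is computing $V$, which is a geometric counting calculation on the construction of Section~\ref{sec:construct_ntuples}. With $\tilde V$ an $L$-sphere of volume $\tilde\nu=\psi_{3,L}^L\nu_c(N_0N_1N_2)^{1/2}$, the admissible $3$-tuples are those for which $\tilde\lambda_1\in\tilde V(\tilde\lambda_0)$ and $\tilde\lambda_2\in\tilde V(\tilde\lambda_0)\cap\tilde V(\tilde\lambda_1)$. Under the high-resolution counting principle, sums of $\|\tilde\lambda_i-\tilde\lambda_j\|^2$ over such configurations pass to integrals over intersections of two $L$-balls separated by a varying distance $b$; the normalized volume of this family of intersections is $\beta_L$ (the same integral that fixed $\psi_{3,L}$ in \cite{ostergaard:2004b}) and the normalized second moment is $\tilde\beta_L$. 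Collecting the normalizations and converting from a uniform-ball second moment to the $G(S_L)$ convention (which supplies the factor $\tfrac{L+2}{L}$) yields $V=\Phi_L\,G(S_L)\,\tilde\nu^{2/L}/\psi_{3,L}^2$; passing to the rescaled lattices $\tilde\Lambda_i=\mu_i\Lambda_i$ introduced after Theorem~\ref{theo:Jreduced} absorbs the $\mu_i$'s into the product, producing the factor $(\mu_0\mu_1\mu_2)^{1/L}$ in (\ref{eq:Dell}).

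Finally, (\ref{eq:DellRate}) follows from (\ref{eq:Dell}) by routine substitution of the rate relations (\ref{eq:Rc})--(\ref{eq:Ri}): $\nu_c=2^{h(X)-LR_c}$ and $\mu_iN_i\nu_c=2^{h(X)-LR_i}$, so that $\nu_c^{2/L}(\mu_0\mu_1\mu_2)^{1/L}(N_0N_1N_2)^{1/L}=2^{\frac{2}{L}h(X)}\,2^{R_c-(R_0+R_1+R_2)}$.
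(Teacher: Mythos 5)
Your overall plan matches the paper's proof: split $\bar D_\ell$ via Lemma~\ref{lem:expdist_split}, reduce the labeling term to pairwise sublattice distances, identify the $\hat\gamma_\ell$ coefficients, and compute the uniform pair-average $V$ by the intersecting-spheres counting argument, then convert to rates. The paper formalizes these as Proposition~\ref{prop:lclmean} (replacing $\lambda_c$ by the weighted centroid), Lemma~\ref{lem:squard_sum_equiv} (general pairwise decomposition), and Proposition~\ref{prop:squared_distances} (the geometric computation of $V$).

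Where you diverge, you win on concreteness but lose some rigor. First, instead of the paper's general Lemma~\ref{lem:squard_sum_equiv} (valid for all $n$ and $\kappa$, and later reused for Corollary~\ref{theo:dist_ndesc}), you expand the $n=3$, $\kappa\in\{1,2\}$ cases directly and apply the polarization identity. That computation is correct and yields (\ref{eq:hatgammak1})--(\ref{eq:hatgammak2}); it is arguably the cleanest way to see where the $\gamma_j^2+\gamma_k^2+\gamma_j\gamma_k$ pattern comes from, though it does not scale to general $n$.

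Second, and this is the genuine soft spot: your dismissal of the cross term rests on the claim that the optimal-assignment gap $\|\lambda_c-c_\kappa\|$ is of order $\nu_c^{1/L}$. That is not established anywhere, and it is a stronger claim than the paper itself makes. The paper (Proposition~\ref{prop:lclmean}, relying on Lemma~\ref{lem:normdiff} and the covering-radius footnote in Appendix~\ref{app:separable}) only controls this gap by the covering radius of the coarsest sublattice, $\mathcal{O}\bigl((N_k\nu_c)^{1/L}\bigr)$ with $N_k=\max_i N_i$, which is much larger than $\nu_c^{1/L}$. Your $\nu_c^{1/L}$ would follow from a nontrivial optimal-matching/equidistribution statement about the $\kappa$-centroids relative to $\Lambda_c$ that you do not prove. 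The negligibility conclusion survives either way, but not for the reason you give: with the correct $(N_k\nu_c)^{1/L}$ bound you must compare against $r\sim\nu_c^{1/L}(N_0N_1N_2)^{1/(2L)}$ and use that all $N_i$ grow at the same rate, so the ratio behaves like $N_k^{-1/(2L)}\to 0$; the condition ``$\nu_c/\tilde\nu\to 0$'' you invoke is not by itself the right criterion once the gap is correctly sized. In a polished write-up you should either prove the sharper $\nu_c^{1/L}$ bound or, more simply, adopt the covering-radius bound and the Cauchy--Schwarz sandwich of the paper's Proposition~\ref{prop:lclmean}.

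The remaining pieces --- the uniformity $\mathbb{E}\|\tilde\lambda_i-\tilde\lambda_j\|^2=V$ across pairs (which you justify via Theorem~\ref{theo:equivtuples} and the equal-radius construction), the identification $V=\frac{L+2}{L}\frac{\tilde\beta_L}{\beta_L}\psi_{3,L}^2 G(S_L)\nu_c^{2/L}N_\pi^{1/L}$ via the $\beta_L,\tilde\beta_L$ intersection integrals, and the rate substitution yielding (\ref{eq:DellRate}) --- all agree with the paper's Proposition~\ref{prop:squared_distances} and Section~\ref{sec:rates}.
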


\begin{IEEEproof}
See Appendix~\ref{app:theo_3desc}.
\end{IEEEproof}

For large $L$, we can simplify the term $\Phi_L$ appearing in Theorem~\ref{theo:dist_3desc}, which we for later reference put into the following corollary:
\begin{corollary}\label{cor:PhiL}
Asymptotically as $N_i\rightarrow\infty$ and $L\rightarrow \infty$, $\Phi_L = \left(\frac{4}3\right)^{\frac{1}{2}}$.
\hfill$\blacktriangle$
\end{corollary}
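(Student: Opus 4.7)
}

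The plan is to decompose $\Phi_L$ into its three natural factors and analyze each limit separately:
\begin{equation*}
\Phi_L \;=\; \frac{L+2}{L}\,\cdot\,\frac{\tilde\beta_L}{\beta_L}\,\cdot\,\psi_{3,L}^{2}.
\end{equation*}
The first factor trivially tends to $1$. The third factor tends to $\psi_{3,\infty}^{2}=(4/3)^{1/2}$ by the theorem giving $\psi_{3,\infty}=(4/3)^{1/4}$. All that remains is to show $\tilde\beta_L/\beta_L\to 1$ as $L\to\infty$.

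The key observation is that $\beta_L$ and $\tilde\beta_L$ share \emph{exactly} the same outer sums over $(m,k,j)$ and differ only in the final factor $\tfrac{1}{L+m+j}$ versus $\tfrac{1}{L+m+j+2}$. I would exploit the integral identities
\begin{equation*}
\frac{1}{L+m+j}=\int_{0}^{1}\!t^{L+m+j-1}\,dt,\qquad \frac{1}{L+m+j+2}=\int_{0}^{1}\!t^{L+m+j+1}\,dt,
\end{equation*}
to rewrite both quantities as integrals against a common kernel $g_L(t)$ (collecting the $(m,k,j)$-combinatorics with $t^{m+j}$ inserted):
\begin{equation*}
\beta_L=\int_{0}^{1}\! t^{L-1}g_L(t)\,dt,\qquad \tilde\beta_L=\int_{0}^{1}\! t^{L+1}g_L(t)\,dt.
\end{equation*}
The ratio $\tilde\beta_L/\beta_L$ then equals the expectation of $t^{2}$ under the probability density obtained by normalizing $t^{L-1}g_L(t)$ on $[0,1]$. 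Because the factor $t^{L-1}$ concentrates all mass at the upper endpoint $t=1$ as $L\to\infty$, this expectation tends to $1$, yielding $\tilde\beta_L/\beta_L\to 1$ and hence $\Phi_L\to(4/3)^{1/2}$.

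The main obstacle is making this Laplace/concentration step rigorous, since $g_L(t)$ is a finite alternating sum (the factors $(-1)^m$ and $(-1)^j$ prevent it from being manifestly positive) and so one cannot immediately invoke the standard weak-limit argument. I would handle this by returning to the geometric origin of $\beta_L$ explained in the sketch preceding the theorem for $\psi_{3,L}$: $\beta_L$ (up to normalization) represents the volume of an intersection of two $L$-dimensional spheres, and $\tilde\beta_L$ is the corresponding second moment of radial distance over the \emph{same} intersection region. Both quantities are therefore positive, and the intersection region — whose geometry determines $\psi_{3,L}$ — concentrates on its extreme radius as $L\to\infty$ by the standard concentration-of-measure phenomenon on high-dimensional spheres. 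Hence the ratio of the second moment to the volume, normalized by the squared extreme radius, converges to $1$. Assembling the three factors, $\Phi_L\to 1\cdot 1\cdot(4/3)^{1/2}=(4/3)^{1/2}$, as claimed.
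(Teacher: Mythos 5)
Your decomposition $\Phi_L = \tfrac{L+2}{L}\cdot\tfrac{\tilde\beta_L}{\beta_L}\cdot\psi_{3,L}^2$ is exactly the implicit route behind the paper's statement (the corollary carries a $\blacktriangle$ and no proof, so the only precedent is Theorem~\ref{theo:dist_3desc} and the theorem giving $\psi_{3,\infty}=(4/3)^{1/4}$), and the first two factors are handled correctly. The crux, $\tilde\beta_L/\beta_L\to 1$, you have rightly isolated; note that termwise $\tfrac{1}{L+m+j+2}=\bigl(1-O(1/L)\bigr)\tfrac{1}{L+m+j}$ uniformly over the admissible $(m,k,j)$, which would finish the argument immediately if the summands were of one sign, and it is precisely the alternating signs that force you to fall back on either the integral trick or the geometric origin.

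Two points deserve tightening. First, the Laplace/integral argument as stated needs $t^{L-1}g_L(t)\ge 0$ on $[0,1]$ and $g_L$ bounded away from $0$ near $t=1$ \emph{uniformly in $L$}; since $g_L$ itself depends on $L$, this is not automatic (it does hold for $L=1,3$, where $g_1(t)=2-t$ and $g_3(t)=\tfrac83-2t+\tfrac16 t^3$ are positive and nonvanishing at $t=1$, but a general argument is still required). Second, the phrase ``concentration-of-measure on high-dimensional spheres'' should be made concrete: the relevant object is the measure $a_m b_m$ on the admissible range $m\in[0,r]$, where $a_m\propto m^{L-1}$ is the shell volume and $b_m$ is the two-sphere lens volume (proportional, up to polynomial factors, to $(r^2-m^2/4)^{(L+1)/2}$). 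The product's exponent is maximized \emph{unconstrained} at $m\approx\sqrt{2}\,r$, which lies \emph{outside} $[0,r]$, so on the admissible interval the mass piles up at the boundary $m=r$; hence $\langle m^2\rangle/r^2\to 1$, which is exactly $\tilde\beta_L/\beta_L\to1$ by the identity $\tilde\beta_L/\beta_L=\langle m^2\rangle/r^2$ implicit in Proposition~\ref{prop:squared_distances}. With that sharpening your proof is sound.
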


If we in the construction of the $n$-tuples use the additional set of weights $\{c_{i,j}\}$ as given by~(\ref{eq:cij}), then $\hat{\gamma}_{\ell}$ is given by the following lemma:
\begin{lemma}\label{lem:c_ij}
For any $n>1, 1\leq \kappa<n$ and $\ell\in \mathcal{L}^{(n,\kappa)}$ we have
\begin{equation}\label{eq:gammacomplex}
\begin{split}
\hat{\gamma}_{\ell} &=
\frac{1}{\bar{\gamma}(\mathcal{L}^{(n,\kappa)})^2\kappa^2}
\bigg(
\bar{\gamma}(\mathcal{L}^{(n,\kappa)})\sum_{j\in \ell}\sum_{\substack{i=0\\ i\neq j}}^{n-1} \bar{\gamma}(\mathcal{L}_i^{(n,\kappa)})c_{i,j} \\
&\quad -\bar{\gamma}(\mathcal{L}^{(n,\kappa)})^2\sum_{i=0}^{\kappa-2}\sum_{j=i+1}^{\kappa-1}
c_{i,j}\\ 
&\quad-\sum_{i=0}^{n-2}\sum_{j=i+1}^{n-1}\bar{\gamma}(\mathcal{L}_i^{(n,\kappa)})\bar{\gamma}(\mathcal{L}_j^{(n,\kappa)})c_{i,j}
\bigg)
\end{split}
\end{equation}
where if $c_{i,j}=1$ and $n=3$,~(\ref{eq:gammacomplex}) reduces to~(\ref{eq:hatgammak1}) and~(\ref{eq:hatgammak2}) for $\kappa=1$ and $\kappa=2$, respectively.
\end{lemma}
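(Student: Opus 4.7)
The plan is to express $\bar D_\ell$, asymptotically as $N_i\to\infty$, as a linear combination of the pairwise squared distances $\|\tilde\lambda_i - \tilde\lambda_j\|^2$ over all pairs $(i,j)$ with $i<j$, then substitute the construction relation (\ref{eq:cij}) which makes each such distance proportional to $c_{i,j}$ times a common scale. The coefficient of $c_{i,j}$ in the resulting expression, after the common scale is absorbed into the $\Phi_L G(S_L)\nu_c^{2/L}\cdots$ prefactor familiar from Theorem~\ref{theo:dist_3desc}, is by definition $\hat\gamma_\ell$; matching it to the three-term form of (\ref{eq:gammacomplex}) is then an algebraic rearrangement.

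The first step is asymptotic reduction. By Theorem~\ref{theo:separable} and the scale separation $r^2\sim\nu_c^{2/L}\prod_i N_i^{2/(L(n-1))}\gg\nu_c^{2/L}$ between the pairwise-distance scale and the assignment-error scale, the distortion $D_\ell=\|\lambda_c-\bar v_\ell\|^2$ with $\bar v_\ell=\tfrac{1}{\kappa}\sum_{i\in\ell}\tilde\lambda_i$ is asymptotically equal to $\|a_\kappa-\bar v_\ell\|^2$, where $a_\kappa=\frac{1}{\kappa\bar\gamma(\mathcal{L}^{(n,\kappa)})}\sum_i \bar\gamma(\mathcal{L}_i^{(n,\kappa)})\tilde\lambda_i$ is the $\kappa$-specific centroid identified in Theorem~\ref{theo:Jreduced}. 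Because both $a_\kappa$ and $\bar v_\ell$ are affine combinations of $\{\tilde\lambda_i\}$ with coefficient sums equal to unity, the difference $a_\kappa-\bar v_\ell=\sum_i q_i\tilde\lambda_i$ satisfies $\sum_i q_i=0$.

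The key identity is that whenever $\sum_i q_i=0$ one has $\|\sum_i q_i\tilde\lambda_i\|^2=-\sum_{i<j}q_iq_j\|\tilde\lambda_i-\tilde\lambda_j\|^2$, which follows by expanding the squared norm and using $2\langle\tilde\lambda_i,\tilde\lambda_j\rangle=\|\tilde\lambda_i\|^2+\|\tilde\lambda_j\|^2-\|\tilde\lambda_i-\tilde\lambda_j\|^2$ together with $\sum_{j\neq i}q_j=-q_i$. Setting $p_i=\kappa\bar\gamma(\mathcal{L}^{(n,\kappa)})q_i=\bar\gamma(\mathcal{L}_i^{(n,\kappa)})-\bar\gamma(\mathcal{L}^{(n,\kappa)})\mathbf{1}[i\in\ell]$, this yields $\hat\gamma_\ell=-\frac{1}{\kappa^2\bar\gamma(\mathcal{L}^{(n,\kappa)})^2}\sum_{i<j}p_ip_j c_{i,j}$. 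Expanding $p_ip_j=\bar\gamma_i\bar\gamma_j-\bar\gamma(\bar\gamma_j\mathbf{1}[i\in\ell]+\bar\gamma_i\mathbf{1}[j\in\ell])+\bar\gamma^2\mathbf{1}[i,j\in\ell]$ and rewriting the middle sum in ordered form as $\sum_{j\in\ell}\sum_{i\neq j}\bar\gamma_i c_{i,j}$ produces the three terms of (\ref{eq:gammacomplex}) in reverse order (the $\bar\gamma_i\bar\gamma_j$ part gives Term~3, the cross part gives Term~1, and the $\bar\gamma^2$ part gives Term~2). The claim that this reduces to (\ref{eq:hatgammak1}) and (\ref{eq:hatgammak2}) for $n=3,\,c_{i,j}=1$ is then a direct substitution check.

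The main obstacle is the first step: justifying that $\lambda_c$ is effectively located at the $\kappa$-specific centroid $a_\kappa$ rather than at some joint centroid aggregated across all $\kappa'\in\{1,\dots,n-1\}$. The argument must reconcile (i) the fact that the bipartite assignment minimizes a sum over $\kappa'$ in (\ref{eq:central_assignment}) and (ii) the scale separation guaranteed by Theorem~\ref{theo:separable}, which forces the $\kappa$-specific contribution in the cost decomposition of Theorem~\ref{theo:Jreduced} to dominate the leading-order behavior of each $\bar D_\ell$ individually. Once this reduction is in place, the subsequent algebra is mechanical.
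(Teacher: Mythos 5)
Your plan is correct and follows essentially the same route the paper takes: the paper's proof of Lemma~\ref{lem:c_ij} is literally ``insert the weights $c_{i,j}$ into the pairwise-distance expansion~(\ref{eq:sideperm})'', and that expansion is established by Proposition~\ref{prop:lclmean} (asymptotic replacement of $\lambda_c$ by the $\kappa$-specific centroid $a_\kappa$) followed by Lemma~\ref{lem:squard_sum_equiv} (the bilinear rewrite in terms of $\|\tilde\lambda_i-\tilde\lambda_j\|^2$) and Proposition~\ref{prop:squared_distances} (the common scale). Your only real departure is cosmetic but welcome: you replace the paper's chain of expansion lemmas (\ref{lem:sumjside}, \ref{lem:innerprod1}, \ref{lem:squard_sum_equiv}) by the single zero-sum identity $\|\sum_i q_i\tilde\lambda_i\|^2=-\sum_{i<j}q_iq_j\|\tilde\lambda_i-\tilde\lambda_j\|^2$ with $p_i=\bar\gamma(\mathcal{L}^{(n,\kappa)}_i)-\bar\gamma(\mathcal{L}^{(n,\kappa)})\mathbf{1}[i\in\ell]$; expanding $p_ip_j$ and rewriting the cross term as $\sum_{j\in\ell}\sum_{i\neq j}\bar\gamma(\mathcal{L}^{(n,\kappa)}_i)c_{i,j}$ reproduces exactly the three terms of~(\ref{eq:gammacomplex}), so the algebra checks out. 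One small mismatch to flag: your derivation correctly yields the middle term as a sum over unordered pairs \emph{within $\ell$}, i.e.\ $c_{\ell_i,\ell_j}$, which is consistent with the $\|\tilde\lambda_{l_i}-\tilde\lambda_{l_j}\|^2$ in~(\ref{eq:sideperm}); the printed statement of~(\ref{eq:gammacomplex}) writes this loosely as $\sum_{i=0}^{\kappa-2}\sum_{j=i+1}^{\kappa-1}c_{i,j}$ with indices $0,\dots,\kappa-1$ rather than the actual elements of $\ell$, so your version is the correct reading.
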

\begin{IEEEproof}
Follows by inserting the additional weights $\{c_{i,j}\}$ in~(\ref{eq:sideperm}).
\end{IEEEproof}

Notice also that, for any $n\geq 1$ and asymptotically as $\nu_c\rightarrow 0$, the expected central distortion is trivially given by 
\begin{equation}\label{eq:Dc}
\bar{D}_c =\mathbb{E}D_c = \mathbb{E}\|X - \mathcal{Q}_{\Lambda_c}(X)\|^2 = G(\Lambda_c)\nu_c^{2/L}.
\end{equation}

We end this section by establishing an interesting result for the $n$-channel IA based MD problem.
\begin{corollary}\label{theo:dist_ndesc}
Let $n>1$ and $1\leq L<\infty$. Given the set of distortion weights $\{\gamma_\ell \in \mathbb{R} : \ell\in \mathcal{L}^{(n,\kappa)}, 1\leq \kappa\leq n-1\}$, and set of description weights $\{0< \mu_{i} \in \mathbb{R} : i=0,\dotsc, n-1\}$ and an optimal labeling function $\alpha$. 
Then, for any $1\leq \kappa <n$, any $\ell \in \mathcal{L}^{(n,\kappa)}$, and asymptotically as $N_i\rightarrow\infty$ and $\nu_i\rightarrow 0$, the expected distortion $\bar{D}_\ell = \mathbb{E}\|X - \hat{X}_\ell\|^2$ where $\hat{X}_\ell = \sum_{i\in \ell}\mu_i\lambda_i$ is linearly proportional to $\bar{D}_{\ell'}$ where $\ell'\in \{ \mathcal{L}^{(n,\kappa)} : 1\leq \kappa < n\}$. In particular
\begin{equation}\label{eq:ndist}
\bar{D}_\ell = \hat{\gamma}_{\ell} c_\ell 2^{\frac{2}{L}h(X)} 2^{\frac{2}{n-1}(R_c - \sum_{i=0}^{n-1}R_i)}
\end{equation}
where $\hat{\gamma}_{\ell}$ is given by~(\ref{eq:gammacomplex}) and $0 < c_\ell < \infty, \forall \ell$.
\end{corollary}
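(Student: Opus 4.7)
The corollary amounts to a unified rate-domain expression for $\bar D_\ell$ valid for every $\ell$; its substantive new content extends Theorems~\ref{theo:dist_2desc} and~\ref{theo:dist_3desc} to arbitrary $n$. The plan is to first establish the geometric form
\[
\bar D_\ell = \hat\gamma_\ell\,c_\ell\,\nu_c^{2/L}\Bigl(\prod_{i=0}^{n-1}\mu_i N_i\Bigr)^{\!2/((n-1)L)}
\]
with $0<c_\ell<\infty$ depending only on $n$ and $L$, and then to recast it in rate form using (\ref{eq:Rc})--(\ref{eq:Ri}).

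To obtain the geometric form, fix $\ell\in\mathcal L^{(n,\kappa)}$ and apply Theorem~\ref{theo:Jreduced} to the single-weight assignment $\gamma_\ell=1$, $\gamma_{\ell'}=0$ for $\ell'\neq\ell$. This rewrites $D_\ell$ as a weighted sum of pairwise squared distances $\|\tilde\lambda_i-\tilde\lambda_j\|^2$ (with coefficients read off from (\ref{eq:gamma_ij})) plus a residual centroid-to-$\lambda_c$ term. Averaging over the central lattice points of $V_\pi(0)$ using the optimal assignment from Section~\ref{sec:assignment} and invoking Theorem~\ref{theo:separable}, the residual centroid term is of lower order in $\nu_c$ than the pairwise-distance term and can be dropped asymptotically.

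For the remaining pairwise sum, the $N_\pi$ optimal $n$-tuples lie, by the construction in Section~\ref{sec:construct_ntuples}, inside a sphere $\tilde V$ of volume $\tilde\nu=\psi_{n,L}^L\nu_c\prod_i N_i^{1/(n-1)}$. Under the high-resolution asymptotics, a counting argument of the same flavor as the one that produced $\psi_{n,L}$ and $\Phi_L$ in the $n=3$ case shows that the mean of $\|\tilde\lambda_i-\tilde\lambda_j\|^2$ over the optimal set equals $K_{n,L}\,\tilde\nu^{2/L}$ for a dimensionless constant $K_{n,L}\in(0,\infty)$ depending only on $n$ and $L$. Collapsing the pairwise coefficients by the algebraic simplification used in Lemma~\ref{lem:c_ij} produces $\hat\gamma_\ell$ multiplied by a finite positive number, and absorbing $K_{n,L}$, $G(S_L)$ and the geometric factors into $c_\ell$ yields the displayed geometric form; positivity of $c_\ell$ follows from strict positivity of pairwise distances among distinct $n$-tuples, and finiteness from the uniform bound $\tilde\nu^{2/L}$ on those distances.

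Converting to rates, (\ref{eq:Rc}) gives $\nu_c=2^{h(X)-LR_c}$ and (\ref{eq:Ri}) gives $\mu_i N_i=\nu_c^{-1}2^{h(X)-LR_i}$, so a short computation yields
\[
\nu_c^{2/L}\Bigl(\prod_{i=0}^{n-1}\mu_i N_i\Bigr)^{\!2/((n-1)L)}=2^{\frac{2}{L}h(X)}\,2^{\frac{2}{n-1}\bigl(R_c-\sum_{i=0}^{n-1}R_i\bigr)},
\]
which establishes (\ref{eq:ndist}). The principal obstacle is a rigorous evaluation of the geometric constant $K_{n,L}$ for arbitrary $n$: unlike $n=2,3$, one does not expect a clean closed form, but only the qualitative fact $K_{n,L}\in(0,\infty)$ is needed here, and this is supplied by the sphere-counting estimates of Section~\ref{sec:construct_ntuples} together with the asymptotic uniformity of lattice points in $\tilde V$ as $N_i\to\infty$.
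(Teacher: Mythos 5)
Your overall structure and conclusion match the paper's: the paper proves this corollary by simply noting that it ``follows from the proof of Theorem~\ref{theo:dist_3desc},'' i.e.\ the machinery of Appendix~\ref{app:theo_3desc} (Proposition~\ref{prop:lclmean} replacing $\lambda_c$ by the weighted centroid, Lemma~\ref{lem:squard_sum_equiv} expanding into pairwise distances, and the sphere-counting estimate for the pairwise sum) carries over to arbitrary $n$, with the closed-form constant $\Phi_L$ replaced by an unknown but finite positive $c_\ell$ when $n>3$. Your conversion to the rate domain via~(\ref{eq:Rc}) and~(\ref{eq:Ri}) is also correct.

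There is, however, a genuine flaw in the first step of your plan. Applying Theorem~\ref{theo:Jreduced} to the single-weight assignment $\gamma_\ell=1$, $\gamma_{\ell'}=0$ for $\ell'\neq\ell$ gives $\bar{\gamma}(\mathcal{L}^{(n,\kappa)})=1$, $\bar{\gamma}(\mathcal{L}_i^{(n,\kappa)})=1$ iff $i\in\ell$ else $0$, and $\bar{\gamma}(\mathcal{L}_{i,j}^{(n,\kappa)})=1$ iff $i,j\in\ell$ else $0$. Substituting into~(\ref{eq:gamma_ij}) makes \emph{every} pairwise coefficient $\hat{\gamma}_{i,j}^{(n,\kappa)}$ vanish, so Theorem~\ref{theo:Jreduced} collapses to the tautology $D_\ell=D_\ell$: the pairwise sum you intend to keep is identically zero, and the centroid term you intend to drop is the whole thing. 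Moreover, with only $\gamma_\ell$ nonzero, the weighted centroid $\bar{\lambda}=\frac{1}{\kappa\bar{\gamma}(\mathcal{L}^{(n,\kappa)})}\sum_{i}\bar{\gamma}(\mathcal{L}_i^{(n,\kappa)})\tilde{\lambda}_i$ reduces to $\hat{x}_\ell$ itself, so Proposition~\ref{prop:lclmean} degenerates to $0=0$ and cannot be used to drop the centroid term. The paper's mechanism requires the weights $\{\gamma_{\ell'}\}$ to remain general: the labeling was optimized so that $\lambda_c$ is close to the \emph{all-$n$} weighted centroid $\bar{\lambda}$, which is what makes Proposition~\ref{prop:lclmean} non-vacuous, and it is Lemma~\ref{lem:squard_sum_equiv} (not Theorem~\ref{theo:Jreduced}) that expands $\|\hat{x}_\ell-\bar{\lambda}\|^2$ into a nontrivial sum of pairwise distances whose coefficients combine, via Lemma~\ref{lem:c_ij}, into $\hat{\gamma}_\ell$. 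Once you substitute this mechanism for your single-weight step, the rest of your argument (the $K_{n,L}\tilde{\nu}^{2/L}$ estimate from the sphere construction of Section~\ref{sec:construct_ntuples} and the rate-domain rewriting) is sound and reproduces the paper's intended proof.
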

\begin{IEEEproof}
Follows from the proof of Theorem~\ref{theo:dist_3desc}.
\end{IEEEproof}

\begin{remark}
We have not been able to find the set of constants $\{c_\ell\}$ in~(\ref{eq:ndist}) for the case of $n>3$. However, since $0<c_\ell < \infty$ it follows that, for any $n>1$, the side distortions for different subsets of descriptions are linearly related, independently of the description rates. This observation has an interesting consequence. 
Let the growth of $N_\pi=\prod_i N_i$ as a function of the rates be given by $N_\pi = 2^{La(n-1)\sum_i R_i }$ where $0<a<1$. 
Moreover, since $R_i = R_c - \frac{1}{L}\log_2( N_i)$ we also have that $N_\pi = 2^{L(nR_c - \sum_i R_i)}$.  Equating the two expressions for $N_\pi$ and solving for $R_c$ yields $R_c = \frac{1}{n}\sum_i R_i ( a(n-1) +1)$. Inserting this into~(\ref{eq:ndist}) and~(\ref{eq:Dc}) leads to
\begin{equation}\label{eq:Dl_sim}
\lim_{\sum_i R_i \to \infty}  \bar{D}_\ell\,  2^{\frac{2}{n}(1-a)\sum_i R_i} = c'\, 2^{\frac{2}{L}h(X)}
\end{equation}
for any $\ell \in \mathcal{L}^{(n,\kappa)}$ and
\begin{equation}\label{eq:Dc_sim}
\lim_{\sum_i R_i \to \infty}  \bar{D}_c \, 2^{\frac{2}{n}(1+a(n-1))\sum_i R_i} = c\, 2^{\frac{2}{L}h(X)},
\end{equation}
where $c'\in\mathbb{R}$ depends on $\ell$, $c\in\mathbb{R}$ is independent of $\ell$ and $a$ controls the rate trade-offs between the central and the side descriptions.
Thus, the product of the central distortion $\bar{D}_c$~(\ref{eq:Dc_sim}) and an arbitrary set of $(n-1)$ side distortions $\bar{D}_\ell$ ~(\ref{eq:Dl_sim}) is independent of $a$. This observation agrees with the symmetric $n$-channel product considered in~\cite{zhang:2010}.
\end{remark}

\section{Comparison to Existing Schemes}\label{sec:comparison}
We first assess the two-channel performance. This is interesting partly because it is the only case where the complete achievable MD rate-distortion region is known and partly because it makes it possible to compare the performance to that of existing schemes. 

\subsection{Two-Channel Performance}
The side distortions $\bar{D}'_0$ and $\bar{D}'_1$ of the two-channel asymmetric MD-LVQ system presented in~\cite{diggavi:2002} satisfy (under identical asymptotical conditions as that of the proposed design)
\begin{equation}\label{eq:digd0}
\bar{D}'_i \approx \frac{\gamma_j^2}{(\gamma_0+\gamma_1)^2}G(\Lambda_\pi)2^{2h(X)}2^{-2(R_0+R_1-R_c)}
\end{equation}
where $i,j\in\{0,1\}$ and $i\neq j$ and the central distortion is given by $\bar{D}'_c\approx G(\Lambda_c)2^{2(h(X)-R_c)}$. 
Notice that the only difference between~(\ref{eq:digd0}) and~(\ref{eq:Di2}) is that the former depends on $G(\Lambda_\pi)$ and the latter on $G(S_L)$. 
For the two dimensional case it is known that $G(S_2)=1/(4\pi)$ whereas if $\Lambda_\pi$ is similar to $Z^2$ we have $G(\Lambda_\pi)=1/12$ which is approximately $0.2$ dB worse than $G(S_2)$. Fig.~\ref{fig:2chan_fix_z2} shows the performance when quantizing $2\cdot 10^6$ 
zero-mean unit-variance independent Gaussian vectors constructed by blocking an i.i.d.\ scalar Gaussian process into two-dimensional vectors and using the $Z^2$ quantizer for the design of~\cite{diggavi:2002} as well as for the proposed system. 
In this setup we have fixed $R_0=5$ bit/dim.\ but $R_1$ is varied in the range $5$ -- $5.45$ bit/dim. We have fixed the ratio $\gamma_0/\gamma_1=1.55$ and we keep the side distortions fixed and change the central distortion. Since the central distortion is the same for the two schemes we have not shown it. Notice that $\bar{D}_0$ (resp.\ $\bar{D}_1$) is strictly smaller (about $0.2$ dB) than $\bar{D}'_0$ (resp.\ $\bar{D}'_1$).
%
%
\begin{figure}[ht]
\psfrag{Theo: d0}{\scriptsize Theo: $\bar{D}_0$}
\psfrag{Theo: d1}{\scriptsize Theo: $\bar{D}_1$}
\psfrag{Num: d0}{\scriptsize Num: $\bar{D}_0$}
\psfrag{Num: d1}{\scriptsize Num: $\bar{D}_1$}
\psfrag{Theo: dddd0}{\scriptsize Theo: $\bar{D}'_0$}
\psfrag{Theo: dd1}{\scriptsize Theo: $\bar{D}'_1$}
\begin{center}
\includegraphics[width=8cm]{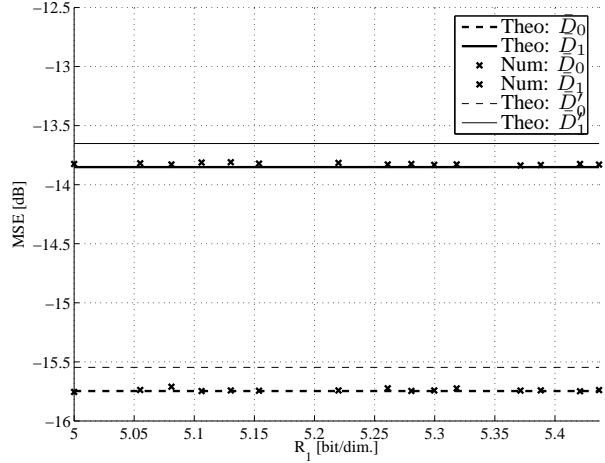}
\caption{The side distortions are here kept fixed as the rate is increased. Notice that the numerically obtained side distortions $\bar{D}_0$ and $\bar{D}_1$ are strictly smaller than the theoretical $\bar{D}'_0$ and $\bar{D}'_1$.}
\label{fig:2chan_fix_z2}
\vspace{-1cm}
\end{center}
\end{figure}

\subsection{Three Channel Performance}
In this section we compare the rate-distortion properties of the
proposed design to the inner bound provided
in~\cite{pradhan:2004,puri:2005}. Thus, we restrict attention to the
symmetric case. 
In order to do this, we first define an MD distortion product for the three channel case. 
Then, we show that by random binning one can further reduce the description rates. 
Finally, we assess the rate loss when finite-dimensional quantizers are used but no binning.

\subsubsection{Three Channel Distortion Product}
To assess the performance of the three channel design it is convenient to define the distortion product $D^{\pi}$ which in the symmetric distortion case (i.e.\ for $\bar{D}_0=\bar{D}_1=\bar{D}_2$ and $\bar{D}_{0,1}=\bar{D}_{0,2}=\bar{D}_{1,2}$) takes the form $D^{\pi}=\bar{D}_c\bar{D}_i\bar{D}_{i,j}$. This is similar in spirit to Vaishampayan's widely used symmetric two-channel distortion product~\cite{vaishampayan:1998}.

Let $n=3$ and consider the symmetric case where $\mu_i=1, \gamma_i=c_1$ and $\gamma_{i,j}=c_2$ for $i,j=0,1,2$ where $c_1,c_2$ are some constants. Moreover, $R_i=R$ and $N_i=N$ for $i=0,1,2$. 
It follows from~(\ref{eq:hatgammak1}) that $\hat{\gamma}_i=\frac{1}3$ and from~(\ref{eq:hatgammak2}) that $\hat{\gamma}_{i,j} = \frac{1}{12}$ so that by~(\ref{eq:DellRate}) we see that the one-channel distortion is given by
\begin{equation}
\bar{D}_i = \frac{1}3 \Phi_L G(S_L) 2^{\frac{2}{L}h(X)+ R_c - 3R}
\end{equation}
and the two-channel distortion is given by
\begin{equation}\label{eq:Dij}
\bar{D}_{i,j} = \frac{1}{12} \Phi_L G(S_L) 2^{\frac{2}{L}h(X)+ R_c - 3R}.
\end{equation}
We also recall that the central distortion is given by
\begin{equation}
\bar{D}_c = G(\Lambda_c)2^{\frac{2}{L}h(X)-2R_c}.
\end{equation}
This leads to the following distortion product
\begin{equation}\label{eq:3chan_distprod_L}
D^\pi = \frac{1}{36} \Phi_L^2 G(S_L)^2G(\Lambda_c) 2^{\frac{6}{L}h(X) - 6R}
\end{equation}
which is independent of $R_c$ and only depends upon the description rate  $R$.

Recall that in the Gaussian case, $h(X) = \frac{L}{2}\log_2(2\pi e \sigma_X^2)$ and for $L\rightarrow\infty$ we have $G(S_L)=G(\Lambda_c)=1/(2\pi e)$ and (by Corollary~\ref{cor:PhiL}) $\Phi_\infty^2 = \frac{4}3$ so that the distortion product reduces to
\begin{equation}\label{eq:3chan_distprod}
D^\pi = \frac{1}{27}\sigma_X^62^{- 6R}.
\end{equation}

The following lemma shows that the proposed design is able to achieve a distortion product based on the inner bound of~\cite{pradhan:2004,puri:2005}.
\begin{lemma}\label{lem:scec}
The high-resolution distortion product $D^\pi$ of the three-channel achievable quadratic Gaussian rate-distortion region of Pradhan et al.~\cite{pradhan:2004,puri:2005} is identical to~(\ref{eq:3chan_distprod}).
\end{lemma}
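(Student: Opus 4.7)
The plan is to evaluate the Pradhan et al.\ inner bound of~\cite{pradhan:2004,puri:2005} in the symmetric three-description quadratic Gaussian case and extract the leading-order behavior of $\bar{D}_c\bar{D}_i\bar{D}_{i,j}$ as the rate grows. I would begin by recalling that, for memoryless Gaussian sources under squared-error distortion, the auxiliary random variables in the Pradhan et al.\ scheme may be taken to be jointly Gaussian with $X$ without loss of (high-resolution) optimality. Symmetry then reduces the problem to a three-parameter covariance structure: the common variance $\sigma_U^2$ of each description, a common pairwise correlation $\rho$ between descriptions, and a common correlation $\tau$ between each description and the source.

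Next I would write down, as explicit functions of $(\sigma_U^2,\rho,\tau)$, the description rate $R$ and the three MMSE distortions $\bar{D}_c$, $\bar{D}_i$, $\bar{D}_{i,j}$ corresponding, respectively, to reconstructing $X$ from all three, one, and two descriptions. The rate expression takes the familiar form of a conditional mutual information involving a $3\times3$ description-covariance matrix, while each distortion is a standard Gaussian MMSE formula, i.e.\ a Schur complement of that matrix. In the high-resolution limit the auxiliaries become increasingly fine representations of $X$ so that $\tau\to 1$ and $\rho$ approaches a symmetry-fixed limit; I would perform the matching Taylor expansion of $R$, $\bar{D}_c$, $\bar{D}_i$, $\bar{D}_{i,j}$ around this limit to isolate the leading exponential behavior.

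From the high-resolution expansion I expect to recover the scaling $\bar{D}_c\asymp 2^{-2R_c}$, $\bar{D}_i\asymp 2^{-a R}$, $\bar{D}_{i,j}\asymp 2^{-b R}$ with exponents obeying the MD sum-rate identity $2R_c+aR+bR = 6R$ implicit in Pradhan's construction; this makes $R_c$ cancel in the product, in perfect analogy with the $R_c$-independence of~(\ref{eq:3chan_distprod_L}). Collecting the prefactors, the remaining constant is a simple function of the optimizing $\rho$, which I would verify equals $1/27$, so that $D^\pi = \tfrac{1}{27}\sigma_X^6\,2^{-6R}$, matching~(\ref{eq:3chan_distprod}).

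The main obstacle will be bookkeeping rather than conceptual depth: the Pradhan et al.\ bound contains an inner optimization over auxiliary random variables and, in the symmetric regime, splits into cases depending on whether the one-description and two-description distortion constraints are simultaneously tight. I would need to verify that the high-resolution regime we operate in corresponds to the case in which both constraints are active, so that the bound reduces to a clean closed form whose distortion product can be read off directly; once the correct regime is identified, the evaluation of the constant $1/27$ is a routine algebraic check.
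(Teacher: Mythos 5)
Your overall plan---evaluate the Pradhan et al.\ inner bound with symmetric jointly Gaussian auxiliary variables, compute the three MMSE distortions and the description rate, and take the high-resolution limit of the distortion product---is the same strategy the paper uses, and the bookkeeping you anticipate is indeed the main content. However, there are two concrete issues.

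First, the three-parameter covariance family $(\sigma_U^2,\rho,\tau)$ is more than you need. Pradhan et al.\ already specialize, for the symmetric Gaussian problem, to the additive-noise test channel $Y_i = X+Q_i$ with $\text{Var}(Q_i)=\sigma_q^2$ and a single noise-correlation $\rho\in(-\tfrac12,\tfrac12]$, and the paper works directly with this two-parameter form. This gives $\text{MMSE}_m = \tfrac{\sigma_q^2(1+(m-1)\rho)}{m+\sigma_q^2(1+(m-1)\rho)}$ and the rate
$R=\tfrac12\log_2\big(\tfrac{1+\sigma_q^2}{\sigma_q^2(1-\rho)}\big)\big(\tfrac{1-\rho}{1+2\rho}\big)^{1/3}$; at high resolution one inverts for $\sigma_q^2\approx(1-\rho)^{-2/3}(1+2\rho)^{-1/3}2^{-2R}$ and forms the product $D^\pi=\tfrac16(1+\rho)(1-\rho)^{-2}2^{-6R}$. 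Your ``inner optimization over cases'' worry goes away once you adopt this parametrization.

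Second, and more importantly, the reasoning around ``the MD sum-rate identity $2R_c+aR+bR=6R$ implicit in Pradhan's construction'' and the expectation that $R_c$ cancels does not match what actually happens. In the Pradhan et al.\ bound there is no separate central rate $R_c$; all three distortions scale as $2^{-2R}$ times $\rho$-dependent constants, and the resulting distortion product $\tfrac16(1+\rho)(1-\rho)^{-2}2^{-6R}$ \emph{does} depend on $\rho$ (unlike the MDLVQ product~(\ref{eq:3chan_distprod_L}), which is genuinely independent of $R_c$). The match with~(\ref{eq:3chan_distprod}) is obtained only in the boundary limit $\rho\to -\tfrac12$, which is the Pradhan-side analog of $N_i\to\infty$ (a high side-to-central distortion ratio). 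Your phrase ``the optimizing $\rho$'' could be read as gesturing at this, but you need to identify explicitly that the product is monotone in $\rho$ and that the relevant value is the boundary point $\rho\to-\tfrac12$; that is the step that produces the constant $\tfrac1{27}$, and without it the lemma is not established.
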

\begin{IEEEproof}
See Appendix~\ref{app:scec}.
\end{IEEEproof}

\begin{remark}
Thus, for any rate trade-offs between central and side descriptions,
the distortion product of the proposed MDLVQ achieves a distortion
product based on the inner bound
of~\cite{pradhan:2004,puri:2005}. This inner bound is not always tight
as shown in~\cite{tian:2008}. However, in the case where we are only
interested in the one-channel distortion $D_i$ and the central
distortion $D_c$, optimality was recently proven
in~\cite{zhang:2010}. In particular, independently of our work,~\cite{zhang:2010} proposed a distortion product based on the outer bound
of~\cite{venkataramani:2003}. Moreover, it was shown that in
the three-channel case, the product $D_i^2D_c$ of our MDLVQ
construction achieves the distortion product of~\cite{zhang:2010}.
We show next that in the case where we are only interested in the two-channel distortion $D_{i,j}$ and the central distortion $D_c$, we are in fact also optimal. 
\end{remark}

\subsubsection{Random Binning on the Labeling Function}
It was shown in~\cite{pradhan:2004,puri:2005} that the achievable rate region can be enlarged by using random binning arguments on the random codebooks. 
Interestingly, we can show that it also makes sense to apply random binning on the labeling function proposed in this work. 
For example, in the case of three descriptions, we can utilize the universality of random binning so that one can faithfully decode on reception of e.g.\ at least two of the three descriptions. With such a strategy, it is then possible to reduce the effective description rate, since the binning rate is smaller than the codebook rate. The price to pay is that one cannot faithfully decode if e.g.\ only a single description is received.

In order to understand how we apply random binning on the labeling function, recall that every $\lambda_i\in \Lambda_i$ is combined with the set of $\lambda_j$'s given by $\mathcal{T}_j(\lambda_i)\triangleq \{\lambda_j \in \Lambda_j : \lambda_i = \alpha_i(\lambda_c), \lambda_j=\alpha_j(\lambda_c), \lambda_c\in\Lambda_c\}$. The trick is now to randomly assign members of $\mathcal{T}_j(\lambda_i)$ to a set of bins in such a way that it is very unlikely that two or more members of $\mathcal{T}_j(\lambda_i)$ fall into the same bin. When encoding, we first apply the central quantizer $\mathcal{Q}_{\Lambda_c}$ on the source variable $X$ in order to obtain the central lattice point $\lambda_c=\mathcal{Q}_{\Lambda_c}(X)$. We then map the given $\lambda_c$ to the triplet $(\lambda_0,\lambda_1,\lambda_2)=\alpha(\lambda_c)$. We finally find and transmit the bin indices of $\lambda_i, i=0,1,2$, rather than their codebook indices. 
On reception of at least (any) two bin indices, we search through all the elements in the two bins in order to find a pair of sublattice points which are elements of the same $n$-tuple. If the binning rate is large enough, there will (with high probability) be only one such pair of sublattice points for any two bin indices.

\begin{theorem}
Let $n=3$ and let $\alpha$ be an optimal labeling function. Moreover, assume we apply random binning on the labeling function such that one can faithfully (and uniquely) decode on reception of any two descriptions. Then, asymptotically, as $N_i\rightarrow\infty, \nu_i\rightarrow 0,$ and $L\to \infty,$ the binning rate $R_b$ must satisfy
\begin{equation}\label{eq:Rb}
R_b > \frac{1}{2} R + \frac{1}{2}\log_2(\psi_{3,L}\sqrt{N'})
\end{equation}
where $R$ is the description rate.
\end{theorem}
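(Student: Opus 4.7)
The plan is to use a standard random-binning covering argument applied directly to the three side lattices. Assign each $\lambda_i\in\Lambda_i$ independently and uniformly at random to one of $2^{LR_b}$ bins. The encoder sends the bin index $B_i$ of $\alpha_i(\lambda_c)$ in place of the codeword index. On reception of any two bin indices, say $B_0$ and $B_1$, the decoder searches $B_0\times B_1$ for a pair $(\lambda_0,\lambda_1)$ with $\lambda_1\in\mathcal{T}_1(\lambda_0)$, i.e.\ a pair that appears jointly in some labeled $3$-tuple; from~(\ref{rec:formula}) the two-description reconstruction $\tfrac{1}{2}(\mu_0\lambda_0+\mu_1\lambda_1)$ depends only on the pair, so faithful and unique decoding amounts to demanding that, with probability tending to one, the intended pair is the only compatible pair inside the received bin product.

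The key quantity to control is therefore the total number of compatible $(\lambda_0,\lambda_1)$ pairs per unit of typical source support. By the construction of Section~\ref{sec:construct_ntuples}, every $3$-tuple containing $\lambda_0$ has its $\lambda_1$ component inside the sphere $\tilde V(\lambda_0)$, so
\begin{equation*}
|\mathcal{T}_1(\lambda_0)|\leq \tilde{N}_1=\tilde\nu/\nu_1=\psi_{3,L}^L\sqrt{N},
\end{equation*}
where I have used~(\ref{eq:nutilde}) with $n=3$ and $N_i=N$. Theorems~\ref{theo:nobias} and~\ref{theo:equivtuples} imply that, as $N\to\infty$, essentially every $\Lambda_1$ point inside $\tilde V(\lambda_0)$ appears in a selected tuple, so this bound is asymptotically tight. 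Multiplying by the $2^{h(X)}/\nu_0=2^{LR}$ codewords $\lambda_0$ that dominate the typical source support yields roughly $2^{LR}\psi_{3,L}^L\sqrt{N}$ compatible pairs in the typical set.

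Applying linearity of expectation under the independent uniform binning, any fixed compatible pair lies in a prescribed $(B_0,B_1)$ with probability $2^{-2LR_b}$, so
\begin{equation*}
\mathbb{E}\!\left[\,\#\text{compatible pairs in }B_0\times B_1\,\right]\;\approx\; 2^{LR}\,\psi_{3,L}^L\,\sqrt{N}\cdot 2^{-2LR_b}.
\end{equation*}
For the expected number of spurious compatible pairs to vanish in the limit $L,N\to\infty$, and hence for pairwise unique decoding to hold with probability tending to one, this expectation must tend to zero, forcing
\begin{equation*}
2LR_b > LR+L\log_2\psi_{3,L}+\tfrac{1}{2}\log_2 N.
\end{equation*}
Dividing by $2L$ and writing $N'=N^{1/L}$ produces the claimed bound $R_b>\tfrac{1}{2}R+\tfrac{1}{2}\log_2(\psi_{3,L}\sqrt{N'})$. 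Symmetry of the construction in the three indices gives the identical bound for the other two admissible pairs of descriptions, while full three-description reception is strictly easier.

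The hard part is in paragraph two, namely verifying that the trivial geometric upper bound $|\mathcal{T}_1(\lambda_0)|\leq\tilde N_1$ is asymptotically tight rather than loose. Upper boundedness is immediate from the inclusion $\lambda_1\in\tilde V(\lambda_0)$, but tightness requires that the greedy labeling essentially saturates the sphere in the sense that almost every $\Lambda_1$-point in $\tilde V(\lambda_0)$ is actually used in some selected $3$-tuple. This is precisely the content of the counting that fixes $\tilde\nu$ in~(\ref{eq:nutilde}), combined with Theorems~\ref{theo:nobias}--\ref{theo:equivtuples} which rule out edge effects and bias between the sublattices. Without this tightness one could only derive a one-sided bound of the wrong magnitude, so the sharpness of the construction is what converts a sufficient condition into the stated necessary condition on $R_b$.
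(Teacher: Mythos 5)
Your proposal follows exactly the random-binning counting argument that the paper references: the paper's own ``proof'' is a one-line pointer to Pradhan et al.\ and to {\O}stergaard's thesis, and your self-contained version is the natural way to fill in that pointer. The key ingredients — independent uniform binning, the decoder searching $B_0\times B_1$ for a pair that appears in a labeled triple, and the identification $|\mathcal{T}_1(\lambda_0)|\approx\tilde\nu/\nu_1=\psi_{3,L}^L\sqrt{N}$ via~(\ref{eq:nutilde}) — are the right ones, and the algebra from $2LR_b > LR + L\log_2\psi_{3,L} + \tfrac{1}{2}\log_2 N$ down to the stated bound is correct after dividing by $2L$ and writing $N'=N^{1/L}$.

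Two things are slightly loose but not fatal. First, the collision probability is not uniformly $2^{-2LR_b}$: spurious pairs that share the transmitted $\lambda_0^*$ (or the transmitted $\lambda_1^*$) collide with the true bin pair with probability $2^{-LR_b}$, since one coordinate is already in the right bin deterministically. This yields an additional necessary condition $R_b>\log_2(\psi_{3,L}\sqrt{N'})$, which is dominated by your bound precisely when $R_b<R$ — the only regime in which binning is useful — so the final inequality is unchanged, but the argument should acknowledge this extra family of events rather than lump all pairs at $2^{-2LR_b}$. Second, the saturation claim $|\mathcal{T}_1(\lambda_0)|\approx\tilde\nu/\nu_1$ does not really follow from Theorems~\ref{theo:nobias} and~\ref{theo:equivtuples}; those control edge effects at the boundary of $V_\pi(0)$ and the symmetry of the tuple set across sublattices, not the fact that every $\Lambda_1$-point of $\tilde V(\lambda_0)$ is used. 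The correct justification is that the construction grows $\tilde V$ until exactly $N_0$ triples are found while every $\lambda_1\in\tilde V(\lambda_0)$ has a nonempty (and, as $N\to\infty$, populated) intersection $\tilde V(\lambda_0)\cap\tilde V(\lambda_1)\cap\Lambda_2$ — this is the same counting that underlies the derivation of $\psi_{3,L}$ and Proposition~\ref{prop:squared_distances}. With those two points tightened, the proof stands.
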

\begin{IEEEproof}
The proof is essentially similar to the technique presented in~\cite{pradhan:2004}.\footnote{The complete proof for the asymmetric case can be found in~\cite{ostergaard:2007a}.}
\end{IEEEproof}

We can further show that the binning rate, as given by~(\ref{eq:Rb}), coincide with that of~\cite{pradhan:2004} for this particular case where we can only decode on reception of at least two out of three descriptions. 
To show this, note that when we get arbitrarily close to the binning rate in~(\ref{eq:Rb}), it follows that 
\begin{equation}\label{eq:Rs_bin}
R = 2R_b - \frac{1}{2}\log_2(\psi_{3,L}^2) - \frac{1}{2}\log_2(N').
\end{equation}
In this case, the two-channel distortion $\bar{D}_{i,j}$, as given by~(\ref{eq:Dij}), can be written as
\begin{align} \notag
\bar{D}_{i,j} &= \frac{1}{12} \Phi_L G(S_L) 2^{\frac{2}{L}h(X)}2^{R_c - 3R} \\ \notag
    &\overset{(a)}{=} \frac{1}{12} \psi_{3,\infty}^2 2^{Rc-3R} \\ \notag
    &\overset{(b)}{=} \frac{N'}{12} \psi_{3,\infty}^2 2^{-2R}  \\ 
    &\overset{(c)}{=} \frac{(N')^2}{12} \psi_{3,\infty}^4 2^{-4R_b}
\end{align}
where $(a)$ is valid for (unit-variance) Gaussian sources, in the limit as $L\rightarrow\infty$ so that $\Phi_\infty = \psi_{3,\infty}^2$ and $2^{\frac{2}{L}h(X)}=G(S_L)^{-1}$. $(b)$ follows since $R_c = R + \log_2(N')$ and $(c)$ follows by inserting~(\ref{eq:Rs_bin}). Similarly, in the limit as $L\rightarrow \infty$, the three-channel distortion (central distortion $D_c$) is given by
\begin{align}
D_c &= 2^{-Rc} \\
    &= \frac{1}{N'} \psi_{3,\infty}^2 2^{-4R_b}. \label{eq:Dcbin}
\end{align}

On the other hand, from~\cite{pradhan:2004}, see also Appendix~\ref{app:scec}, it follows that the two-channel distortion $D'_{i,j}$ of Pradhan et al., is given by
\begin{align}
D'_{i,j} &= \frac{1}{2} \sigma_q^2(1+\rho)
\end{align}
where $\rho$ is defined in Appendix~\ref{app:scec} and
\begin{equation}\label{eq:sigmaq}
\sigma_q^2 = 2(1-\rho)^{-1/3}(1+2\rho)^{-2/3}2^{-4R_b}.
\end{equation}
Moreover, the three-channel distortion $D'_{i,j,k}$ is given by
\begin{equation}\label{eq:D3}
D'_{i,j,k} = \frac{1}{3}\sigma_q^2(1+2\rho).
\end{equation}
Let us equate the pair of two-channel distortions, i.e.\ $\bar{D}_{i,j}=D'_{i,j}$, from which we obtain
\begin{equation}\label{eq:equalizeD2}
(1+2\rho)^{1/3} = \left(\frac{1}{12}\psi_{3,\infty}^4(N')^2\right)^{-1/2} (1+\rho)^{1/2}(1-\rho)^{-1/6}.
\end{equation}
Inserting~(\ref{eq:equalizeD2}) and~(\ref{eq:sigmaq}) into~(\ref{eq:D3}) yields
\begin{align}
D'_{i,j,k} &= \frac{2}{3}\!\! \left(\frac{1}{12}\psi_{3,\infty}^4(N')^2\right)^{-1/2}\!\!\!\! (1+\rho)^{1/2}(1-\rho)^{-1/2}2^{-4R_b}\\
&= \frac{1}{N'}\left(\frac{4}{3}\right)^{\frac{1}{2}}2^{-4R_b} \label{eq:3chandist}
\end{align}
where the last equality follows by inserting $\psi_{3,\infty}^2= \left(\frac{4}{3}\right)^{\frac{1}{2}}$ and letting $\rho\rightarrow -\frac{1}{2}$, which corresponds to the asymptotical case where $N'\rightarrow\infty$. It follows that the resulting two and three-channel distortions are identical (the ratio of~(\ref{eq:Dcbin}) and~(\ref{eq:3chandist}) is one) for the the proposed design and the bounds of Pradhan et al.~\cite{pradhan:2004}.

\subsubsection{Rate Loss}
Let us define a rate loss for the symmetric case as $R_{\text{loss}}  \triangleq R -
R^{\text{inn}}$ (per description), where $R^{\text{inn}}$ denotes the
rate obtained from the inner bound of Pradhan et al. With this, the
rate loss can easily be derived from the distortion product by isolating the rates in (\ref{eq:3chan_distprod_L}) and (\ref{eq:3chan_distprod}) and forming their difference, that is
\begin{equation}\label{eq:rateloss}
\begin{split}
R_{\text{loss}} &= \frac{1}{6}\log_2(\Phi_L^2) +
\frac{1}{6}\log_2\left(\frac{3}{4}\right) \\
&\quad+ \frac{1}{6}\log_2(G(S_L)^2G(\Lambda_c) (2\pi e)^3)
\end{split}
\end{equation}
which clearly goes to zero for large $L$ since
$\Phi_\infty^2=\frac{4}3$. With this definition of rate loss, 
the scalar rate loss (i.e.\ for $L=1$) is $R_{\text{loss}}= 0.2358$ bit/dim.\ whereas for $L=3$ and using the BCC lattice, the rate loss is $0.1681$ bit/dim. Furthermore, we have numerically evaluated the terms $\log_2(G(S_L)2\pi e)$ and $\log_2(\Phi_L^2\frac{3}4)$ for $1\leq L\leq 21$ (and $L$ odd) as shown in Fig.~\ref{fig:rateloss}. It may be noticed that $\log_2(\Phi_L^2\frac{3}4)$ is strictly smaller than $\log_2(G(S_L)2\pi e)$. It follows that, at least for this range of dimensions, the overall description rate loss, as given by~(\ref{eq:rateloss}), is less than the space-filling loss of the lattice in question. This is in contrast to, for example, the MD scheme presented in~\cite{chen:2006} where the description rate loss is larger than the space-filling loss of the lattices being used. At high dimensions, the rate loss vanishes for both schemes.

\begin{figure}[ht]
\psfrag{GGGSL2pie}{\scriptsize $\log_2(G(S_L)2\pi e)$}
\psfrag{Psi34}{\scriptsize $\log_2(\Phi_L^2\frac{3}4)$}
\begin{center}
\includegraphics[width=7cm]{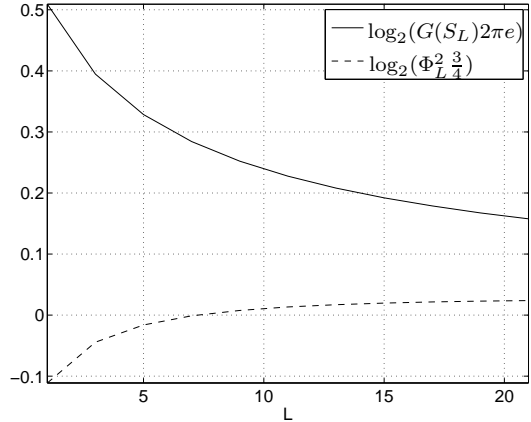}
\caption{The terms $\log_2(G(S_L)2\pi e)$ and $\log_2(\Phi_L^2\frac{3}4)$ as a function of the dimension $L$. Both terms converge to $0$ in the limit as $L\rightarrow\infty$. Notice that $\log_2(G(S_L)2\pi e)>\log_2(\Phi_L^2\frac{3}4)$ in the range shown.}
\label{fig:rateloss}
\end{center}
\end{figure}

\section{Conclusions}\label{sec:conclusion}
We proposed a simple method for constructing IA based $n$-channel asymmetric MD-LVQ schemes. 
For the class of IA based schemes using a single IA function and
averaging reconstruction rules, 
the design was shown to be asymptotical optimal  for any number of descriptions.
For two descriptions, the rate loss was smaller than that of existing
IA based designs whereas for three descriptions, the rate loss (when
compared to the inner-bound of Pradhan et al.\ and restricted to the
case of symmetric rates and distortions) was smaller than that of source splitting. 
It was finally shown that the rate-distortion performance achieves points on the inner bound proposed by Pradhan et al.

\appendices

\section{Proof of Theorem~\ref{theo:Jreduced}}
\label{app:proof_Jreduced}

To prove Theorem~\ref{theo:Jreduced} we need the following results.
\begin{lemma}\label{lem:sumLj}
For $1 \leq \kappa \leq n$ and any $i\in \{0,\dots, n-1\}$ we have
\begin{equation*}
\sum_{\substack{j=0\\j\neq i}}^{n-1}\bar{\gamma}(\mathcal{L}^{(n,\kappa)}_j)=\kappa \bar{\gamma}(\mathcal{L}^{(n,\kappa)})-\bar{\gamma}(\mathcal{L}^{(n,\kappa)}_i).
\end{equation*}
\end{lemma}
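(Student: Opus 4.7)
The plan is a one-line double-counting argument. For any fixed $\kappa$, swap the order of summation in $\sum_{j=0}^{n-1}\bar{\gamma}(\mathcal{L}^{(n,\kappa)}_j)$ and count how many times each weight $\gamma_\ell$ appears. Since $\mathcal{L}^{(n,\kappa)}_j=\{\ell\in\mathcal{L}^{(n,\kappa)}:j\in\ell\}$, the weight $\gamma_\ell$ appears once for each $j\in\ell$, i.e.\ exactly $|\ell|=\kappa$ times. Therefore
\begin{equation*}
\sum_{j=0}^{n-1}\bar{\gamma}(\mathcal{L}^{(n,\kappa)}_j)
=\sum_{\ell\in\mathcal{L}^{(n,\kappa)}}\gamma_\ell\sum_{j=0}^{n-1}\mathbf{1}\{j\in\ell\}
=\kappa\,\bar{\gamma}(\mathcal{L}^{(n,\kappa)}).
\end{equation*}

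Next, I would simply isolate the $j=i$ summand on the left-hand side and bring it to the right-hand side, which immediately gives
\begin{equation*}
\sum_{\substack{j=0\\j\neq i}}^{n-1}\bar{\gamma}(\mathcal{L}^{(n,\kappa)}_j)
=\kappa\,\bar{\gamma}(\mathcal{L}^{(n,\kappa)})-\bar{\gamma}(\mathcal{L}^{(n,\kappa)}_i),
\end{equation*}
as desired. The argument is valid for every $i\in\{0,\dots,n-1\}$ and every $1\le\kappa\le n$, and nothing beyond the definitions of $\mathcal{L}^{(n,\kappa)}$, $\mathcal{L}_j^{(n,\kappa)}$, and $\bar{\gamma}(\cdot)$ is required. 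There is no real obstacle here; the only thing to verify carefully is that the double-counting correctly produces the factor $\kappa$, which follows because every element of $\mathcal{L}^{(n,\kappa)}$ is a $\kappa$-subset of $\{0,\dots,n-1\}$ by definition.
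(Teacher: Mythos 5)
Your proof is correct and takes essentially the same double-counting route as the paper: establish $\sum_{j=0}^{n-1}\bar{\gamma}(\mathcal{L}^{(n,\kappa)}_j)=\kappa\,\bar{\gamma}(\mathcal{L}^{(n,\kappa)})$ and subtract the $j=i$ term. If anything, your Fubini-style swap (each $\gamma_\ell$ occurs exactly $|\ell|=\kappa$ times) is slightly tighter than the paper's argument, which divides the total term count $n\binom{n-1}{\kappa-1}$ by the number of distinct terms $\binom{n}{\kappa}$ and implicitly relies on the uniformity of the multiplicity that your indicator-sum makes explicit.
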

\begin{IEEEproof}
Since $|\mathcal{L}^{(n,\kappa)}_j| = \binom{n-1}{\kappa-1}$ the sum $\sum_{j=0}^{n-1}\bar{\gamma}(\mathcal{L}^{(n,\kappa)}_j)$ contains $n\binom{n-1}{\kappa-1}$ terms. However, the number of distinct terms is $|\mathcal{L}^{(n,\kappa)}|=\binom{n}{\kappa}$ and each term is then used $\kappa$ times, since
\begin{equation*}
\frac{n\binom{n-1}{\kappa-1}}{\binom{n}{\kappa}} = \kappa.
\end{equation*}
Subtracting the terms for $j=i$ proves the lemma.
\end{IEEEproof}

\begin{lemma}\label{lem:sumLij}
For $1 \leq \kappa \leq n$ and any $i,j\in \{0,\dots, n-1\}$ we have
\begin{equation*}
\sum_{j=0}^{n-1}\bar{\gamma}(\mathcal{L}^{(n,\kappa)}_{i,j})=\kappa \bar{\gamma}(\mathcal{L}^{(n,\kappa)}_i).
\end{equation*}
\end{lemma}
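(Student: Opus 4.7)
The plan is to prove the identity by swapping the order of summation, in the same spirit as the proof of Lemma~\ref{lem:sumLj}. The key observation is that the claim is essentially a double-counting identity: we are counting, with weights $\gamma_\ell$, the number of pairs $(\ell,j)$ with $j\in\{0,\dots,n-1\}$ such that both $i$ and $j$ lie in $\ell$.

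First I would expand the definition on the left-hand side,
\begin{equation*}
\sum_{j=0}^{n-1}\bar{\gamma}(\mathcal{L}^{(n,\kappa)}_{i,j})
= \sum_{j=0}^{n-1}\sum_{\ell \in \mathcal{L}^{(n,\kappa)}_{i,j}} \gamma_\ell
= \sum_{j=0}^{n-1}\sum_{\substack{\ell\in\mathcal{L}^{(n,\kappa)}\\ i,j\in\ell}} \gamma_\ell,
\end{equation*}
where in the case $i=j$ the condition $i,j\in \ell$ reduces to $i\in \ell$, consistent with the convention $\mathcal{L}^{(n,\kappa)}_{i,i}=\mathcal{L}^{(n,\kappa)}_i$. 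Next I would interchange the two summations and, for each fixed $\ell$ with $i\in\ell$, count the number of indices $j\in\{0,\dots,n-1\}$ for which $j\in\ell$. Since $|\ell|=\kappa$, this count is exactly $\kappa$, regardless of $\ell$. Hence
\begin{equation*}
\sum_{j=0}^{n-1}\bar{\gamma}(\mathcal{L}^{(n,\kappa)}_{i,j})
= \sum_{\ell\in\mathcal{L}^{(n,\kappa)}_i}\gamma_\ell \cdot \bigl|\{j: j\in\ell\}\bigr|
= \kappa\sum_{\ell\in\mathcal{L}^{(n,\kappa)}_i}\gamma_\ell
= \kappa\,\bar{\gamma}(\mathcal{L}^{(n,\kappa)}_i),
\end{equation*}
which is the desired identity.

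There is no real obstacle here: the argument is a one-line Fubini-type swap, and the only subtlety is confirming the convention for the diagonal term $j=i$ so that the sum over $j\in\{0,\dots,n-1\}$ (rather than $j\neq i$) is correctly interpreted. Alternatively, one could mimic the counting proof of Lemma~\ref{lem:sumLj}: the sum $\sum_{j=0}^{n-1}\bar{\gamma}(\mathcal{L}^{(n,\kappa)}_{i,j})$ comprises $n\binom{n-2}{\kappa-2}$ terms when $i\ne j$ plus the $\binom{n-1}{\kappa-1}$ terms coming from $j=i$, and grouping by $\ell$ recovers each $\gamma_\ell$ with multiplicity $(\kappa-1)+1=\kappa$, giving the same conclusion.
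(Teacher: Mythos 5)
Your main argument is correct and is, in substance, the same double-counting as in the paper, but carried out in a cleaner way. The paper's proof tallies the total number of terms in $\sum_{j=0}^{n-1}\bar{\gamma}(\mathcal{L}^{(n,\kappa)}_{i,j})$, namely $(n-1)\binom{n-2}{\kappa-2}+\binom{n-1}{\kappa-1}$, and divides by the number $\binom{n-1}{\kappa-1}$ of distinct $\ell\in\mathcal{L}^{(n,\kappa)}_i$ to obtain $\kappa$; this only directly gives the \emph{average} multiplicity, and one must implicitly invoke symmetry to conclude each $\gamma_\ell$ appears exactly $\kappa$ times. Your Fubini swap sidesteps this: for a fixed $\ell\ni i$ you count $\bigl|\{j\in\{0,\dots,n-1\}:j\in\ell\}\bigr|=|\ell|=\kappa$ directly, so the per-$\ell$ multiplicity is obtained without any averaging or appeal to uniformity. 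You also correctly flag and handle the convention $\mathcal{L}^{(n,\kappa)}_{i,i}=\mathcal{L}^{(n,\kappa)}_i$, which is the one place the statement could trip someone up. One small slip: in your ``alternatively'' remark you write $n\binom{n-2}{\kappa-2}$ for the off-diagonal term count, but since $j$ ranges over the $n-1$ values $j\neq i$ this should be $(n-1)\binom{n-2}{\kappa-2}$, matching the paper; the per-$\ell$ multiplicity $(\kappa-1)+1=\kappa$ you state there is nevertheless correct.
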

\begin{IEEEproof}
It is true that $\mathcal{L}^{(n,\kappa)}_{i,i}=\mathcal{L}^{(n,\kappa)}_i$ and since $|\mathcal{L}^{(n,\kappa)}_{i}|=\binom{n-1}{\kappa-1}$ and $|\mathcal{L}^{(n,\kappa)}_{i,j}|=\binom{n-2}{\kappa-2}$ the sum $\sum_{j=0}^{n-1}\bar{\gamma}(\mathcal{L}^{(n,\kappa)}_{i,j})$ contains $(n-1)\binom{n-2}{\kappa-2} + \binom{n-1}{\kappa-1}$ terms. However, the number of distinct $l\in \mathcal{L}^{(n,\kappa)}_i$ terms is $|\mathcal{L}^{(n,\kappa)}_i|=\binom{n-1}{\kappa-1}$ and each term is then used $\kappa$ times, since
\begin{equation*}
\frac{(n-1)\binom{n-2}{\kappa-2} + \binom{n-1}{\kappa-1}}{\binom{n-1}{\kappa-1}} = \kappa.
\end{equation*}
\rspace
\end{IEEEproof}

\begin{lemma}\label{lem:suminnerprod}
For $1 \leq \kappa \leq n$ we have
\begin{equation*}
\sum_{\ell\in\mathcal{L}^{(n,\kappa)}} \gamma_\ell\left\langle\lambda_c,\sum_{i\in \ell} \lambda_i \right\rangle = \left\langle \lambda_c, \sum_{i=0}^{n-1}\lambda_i \bar{\gamma}(\mathcal{L}^{(n,\kappa)}_i)\right\rangle.
\end{equation*}
\end{lemma}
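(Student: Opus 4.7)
The plan is to prove this identity by exploiting bilinearity of the inner product and swapping the order of the two summations. First I would pull the inner product out over the inner sum using linearity in the second argument, writing
\[
\sum_{\ell\in\mathcal{L}^{(n,\kappa)}} \gamma_\ell\left\langle\lambda_c,\sum_{i\in \ell} \lambda_i \right\rangle
= \sum_{\ell\in\mathcal{L}^{(n,\kappa)}} \sum_{i\in \ell}\gamma_\ell \left\langle \lambda_c, \lambda_i \right\rangle.
\]

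Next I would rewrite the double sum as an iterated sum indexed first by $i\in\{0,\dots,n-1\}$ and then by those $\ell$ that contain $i$. Concretely, using the indicator identity $\sum_{\ell} \sum_{i \in \ell}(\cdot) = \sum_{i=0}^{n-1} \sum_{\ell : i \in \ell}(\cdot)$ and recalling that $\mathcal{L}^{(n,\kappa)}_i = \{\ell \in \mathcal{L}^{(n,\kappa)} : i \in \ell\}$, the right-hand side becomes
\[
\sum_{i=0}^{n-1} \left\langle \lambda_c, \lambda_i \right\rangle \sum_{\ell \in \mathcal{L}^{(n,\kappa)}_i}\gamma_\ell
= \sum_{i=0}^{n-1} \bar{\gamma}(\mathcal{L}^{(n,\kappa)}_i)\left\langle \lambda_c, \lambda_i \right\rangle,
\]
where the inner sum is $\bar{\gamma}(\mathcal{L}^{(n,\kappa)}_i)$ by its definition in Section~\ref{sec:costfunction}.

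Finally, I would collapse the outer sum back into a single inner product by invoking bilinearity once more, obtaining
\[
\sum_{i=0}^{n-1} \bar{\gamma}(\mathcal{L}^{(n,\kappa)}_i)\left\langle \lambda_c, \lambda_i \right\rangle
= \left\langle \lambda_c, \sum_{i=0}^{n-1}\bar{\gamma}(\mathcal{L}^{(n,\kappa)}_i)\,\lambda_i \right\rangle,
\]
which is the desired identity. I do not anticipate any real obstacle: the argument is purely an interchange-of-summation/bilinearity manipulation, with no combinatorial counting needed (unlike Lemmas~\ref{lem:sumLj} and~\ref{lem:sumLij}).
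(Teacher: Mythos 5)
Your proof is correct and follows essentially the same route the paper takes: the paper's one-line proof (``Follows immediately since $\mathcal{L}^{(n,\kappa)}_i$ denotes the set of all $\ell$-terms that contains the index $i$'') is just a terse statement of the very interchange-of-summation/bilinearity argument you spell out in detail.
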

\begin{IEEEproof}
Follows immediately since $\mathcal{L}^{(n,\kappa)}_i$ denotes the set of all $\ell$-terms that contains the index $i$.
\end{IEEEproof}

\begin{lemma}\label{lem:normsum0}
For $1 \leq \kappa \leq n$ we have
\begin{equation*}
\begin{split}
\sum_{i=0}^{n-2}&\sum_{j=i+1}^{n-1}\bar{\gamma}(\mathcal{L}^{(n,\kappa)}_i)\bar{\gamma}(\mathcal{L}^{(n,\kappa)}_j)\|\lambda_i-\lambda_j\|^2
\\
&=
\sum_{i=0}^{n-1}\bar{\gamma}(\mathcal{L}^{(n,\kappa)}_i)\left(\kappa \bar{\gamma}(\mathcal{L}^{(n,\kappa)})-\bar{\gamma}(\mathcal{L}^{(n,\kappa)}_i)\right)\|\lambda_i\|^2 \\
&\quad - 2\sum_{i=0}^{n-2}\sum_{j=i+1}^{n-1}\bar{\gamma}(\mathcal{L}^{(n,\kappa)}_i)\bar{\gamma}(\mathcal{L}^{(n,\kappa)}_j)\langle \lambda_i,\lambda_j\rangle.
\end{split}
\end{equation*}
\end{lemma}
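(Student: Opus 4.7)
The plan is to recognize this as a purely algebraic identity, whose only non-trivial ingredient is the summation identity already supplied by Lemma~\ref{lem:sumLj}. To keep notation tidy during the sketch, abbreviate $a_i \triangleq \bar{\gamma}(\mathcal{L}^{(n,\kappa)}_i)$ and $\bar{\gamma} \triangleq \bar{\gamma}(\mathcal{L}^{(n,\kappa)})$, so the target identity reads
\begin{equation*}
\sum_{i=0}^{n-2}\sum_{j=i+1}^{n-1} a_i a_j \|\lambda_i-\lambda_j\|^2 = \sum_{i=0}^{n-1} a_i(\kappa\bar{\gamma}-a_i)\|\lambda_i\|^2 - 2\sum_{i=0}^{n-2}\sum_{j=i+1}^{n-1} a_i a_j \langle\lambda_i,\lambda_j\rangle.
\end{equation*}

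First I would expand the squared norm on the left via the bilinearity of $\langle\cdot,\cdot\rangle$, writing $\|\lambda_i-\lambda_j\|^2 = \|\lambda_i\|^2 + \|\lambda_j\|^2 - 2\langle\lambda_i,\lambda_j\rangle$. The inner-product contribution then matches the second term on the right-hand side literally, so that piece requires no further work. What remains is to verify
\begin{equation*}
\sum_{i<j} a_i a_j \bigl(\|\lambda_i\|^2 + \|\lambda_j\|^2\bigr) = \sum_{i=0}^{n-1} a_i(\kappa\bar{\gamma}-a_i)\|\lambda_i\|^2.
\end{equation*}

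Next, I would collect the coefficient of each $\|\lambda_i\|^2$ on the left-hand side. By symmetry, $\|\lambda_i\|^2$ appears once for every ordered pair involving index $i$, with weight $a_i a_j$ summed over $j \neq i$; that is, the coefficient of $\|\lambda_i\|^2$ is exactly $a_i \sum_{j\neq i} a_j$. Now I would invoke Lemma~\ref{lem:sumLj}, which asserts $\sum_{j\neq i} a_j = \kappa\bar{\gamma} - a_i$, and substitute to obtain $a_i(\kappa\bar{\gamma}-a_i)\|\lambda_i\|^2$, summed over $i$. This is precisely the first term on the right-hand side, completing the identity.

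There is no real obstacle here: the claim is a bookkeeping expansion of $\|\lambda_i-\lambda_j\|^2$ combined with the combinatorial fact that the weights on the off-diagonal indices telescope via Lemma~\ref{lem:sumLj}. The only thing to be careful about is maintaining the indexing convention that the double sum runs over $i<j$ (unordered pairs), so that when we reorganize it as a sum over ordered pairs $(i,j)$ with $j\neq i$, both $\|\lambda_i\|^2$ and $\|\lambda_j\|^2$ contribute and no factor of $2$ is lost or doubled.
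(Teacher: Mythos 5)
Your proof is correct and follows essentially the same route as the paper's: expand $\|\lambda_i-\lambda_j\|^2$ via bilinearity, observe that the cross-term matches the right-hand side directly, and reduce the diagonal terms by collecting the coefficient of each $\|\lambda_i\|^2$ and applying Lemma~\ref{lem:sumLj}. The paper writes the index reorganization out in more explicit summation steps, but the underlying argument is identical.
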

\begin{IEEEproof}
We have that
\begin{equation*}
\begin{split}
\sum_{i=0}^{n-2}&\sum_{j=i+1}^{n-1}\bar{\gamma}(\mathcal{L}^{(n,\kappa)}_i)\bar{\gamma}(\mathcal{L}^{(n,\kappa)}_j)\|\lambda_i-\lambda_j\|^2
\\
&= 
\sum_{i=0}^{n-2}\sum_{j=i+1}^{n-1}\bar{\gamma}(\mathcal{L}^{(n,\kappa)}_i)\bar{\gamma}(\mathcal{L}^{(n,\kappa)}_j)(\|\lambda_i\|^2 + \|\lambda_j\|^2) \\
&\quad - 2\sum_{i=0}^{n-2}\sum_{j=i+1}^{n-1}\bar{\gamma}(\mathcal{L}^{(n,\kappa)}_i)\bar{\gamma}(\mathcal{L}^{(n,\kappa)}_j)\langle \lambda_i,\lambda_j\rangle.
\end{split}
\end{equation*}
Furthermore, it follows that
{\allowdisplaybreaks
\begin{align*}
&\sum_{i=0}^{n-2}\sum_{j=i+1}^{n-1}\bar{\gamma}(\mathcal{L}^{(n,\kappa)}_i)\bar{\gamma}(\mathcal{L}^{(n,\kappa)}_j)(\|\lambda_i\|^2 + \|\lambda_j\|^2) \\ 
&=\sum_{i=0}^{n-2}\bar{\gamma}(\mathcal{L}^{(n,\kappa)}_i)\|\lambda_i\|^2\sum_{j=i+1}^{n-1}\bar{\gamma}(\mathcal{L}^{(n,\kappa)}_j)
\\
&\quad + \sum_{j=1}^{n-1}\bar{\gamma}(\mathcal{L}^{(n,\kappa)}_j)\|\lambda_j\|^2 \sum_{i=0}^{j-1}\bar{\gamma}(\mathcal{L}^{(n,\kappa)}_i) \\
&=
\sum_{i=0}^{n-1}\bar{\gamma}(\mathcal{L}^{(n,\kappa)}_i)\|\lambda_i\|^2\underbrace{\sum_{j=i+1}^{n-1}\bar{\gamma}(\mathcal{L}^{(n,\kappa)}_j)}_{0\
  \text{for}\ i=n-1} \\
&\quad+ \sum_{j=0}^{n-1}\bar{\gamma}(\mathcal{L}^{(n,\kappa)}_j)\|\lambda_j\|^2\underbrace{\sum_{i=0}^{j-1}\bar{\gamma}(\mathcal{L}^{(n,\kappa)}_i)}_{0\ \text{for}\ j=0} \\
&= \sum_{i=0}^{n-1}\bar{\gamma}(\mathcal{L}^{(n,\kappa)}_i)\|\lambda_i\|^2\left( \sum_{j=0}^{i-1}\bar{\gamma}(\mathcal{L}^{(n,\kappa)}_j)+\sum_{j=i+1}^{n-1}\bar{\gamma}(\mathcal{L}^{(n,\kappa)}_j)\right) \\
&= \sum_{i=0}^{n-1}\bar{\gamma}(\mathcal{L}^{(n,\kappa)}_i)\|\lambda_i\|^2\sum_{\substack{j=0\\j\neq i}}^{n-1}\bar{\gamma}(\mathcal{L}^{(n,\kappa)}_j) \\
&= \sum_{i=0}^{n-1}\bar{\gamma}(\mathcal{L}^{(n,\kappa)}_i)\|\lambda_i\|^2\left( \kappa \bar{\gamma}(\mathcal{L}^{(n,\kappa)}) - \bar{\gamma}(\mathcal{L}^{(n,\kappa)}_i)\right),
\end{align*}}
where the last equality follows by use of Lemma~\ref{lem:sumLj}.
\end{IEEEproof}

\begin{lemma}\label{lem:normsum}
For $1 \leq \kappa \leq n$ we have
\begin{equation*}
\begin{split}
\sum_{i=0}^{n-2}\sum_{j=i+1}^{n-1}\bar{\gamma}(\mathcal{L}^{(n,\kappa)}_{i,j})\|\lambda_i-\lambda_j&\|^2 =
(\kappa
-1)\sum_{i=0}^{n-1}\bar{\gamma}(\mathcal{L}^{(n,\kappa)}_i)\|\lambda_i\|^2
\\
&-
2\sum_{i=0}^{n-2}\sum_{j=i+1}^{n-1}\bar{\gamma}(\mathcal{L}^{(n,\kappa)}_{i,j})\langle \lambda_i,\lambda_j\rangle.
\end{split}
\end{equation*}
\end{lemma}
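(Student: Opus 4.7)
The plan is to mirror the proof of Lemma~\ref{lem:normsum0}, but using Lemma~\ref{lem:sumLij} in place of Lemma~\ref{lem:sumLj}, together with the identity $\mathcal{L}^{(n,\kappa)}_{i,i}=\mathcal{L}^{(n,\kappa)}_{i}$ that was already used in the proof of Lemma~\ref{lem:sumLij}.

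First I would expand the norm as $\|\lambda_i-\lambda_j\|^2=\|\lambda_i\|^2+\|\lambda_j\|^2-2\langle\lambda_i,\lambda_j\rangle$, so that the cross-term on the right-hand side is immediately accounted for. It remains to show
\begin{equation*}
\sum_{i=0}^{n-2}\sum_{j=i+1}^{n-1}\bar{\gamma}(\mathcal{L}^{(n,\kappa)}_{i,j})\bigl(\|\lambda_i\|^2+\|\lambda_j\|^2\bigr)
=(\kappa-1)\sum_{i=0}^{n-1}\bar{\gamma}(\mathcal{L}^{(n,\kappa)}_i)\|\lambda_i\|^2.
\end{equation*}

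Next I would rearrange the double sum exactly as done in Lemma~\ref{lem:normsum0}: split the pair of nested sums into two pieces (one for each of $\|\lambda_i\|^2$ and $\|\lambda_j\|^2$), then recombine using the symmetry $\bar{\gamma}(\mathcal{L}^{(n,\kappa)}_{i,j})=\bar{\gamma}(\mathcal{L}^{(n,\kappa)}_{j,i})$ to obtain
\begin{equation*}
\sum_{i=0}^{n-1}\|\lambda_i\|^2\sum_{\substack{j=0\\ j\neq i}}^{n-1}\bar{\gamma}(\mathcal{L}^{(n,\kappa)}_{i,j}).
\end{equation*}

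Finally, I would apply Lemma~\ref{lem:sumLij}, which gives $\sum_{j=0}^{n-1}\bar{\gamma}(\mathcal{L}^{(n,\kappa)}_{i,j})=\kappa\bar{\gamma}(\mathcal{L}^{(n,\kappa)}_i)$, and subtract the $j=i$ term. Using $\mathcal{L}^{(n,\kappa)}_{i,i}=\mathcal{L}^{(n,\kappa)}_i$ (the very identity exploited at the start of the proof of Lemma~\ref{lem:sumLij}), the $j=i$ term contributes exactly $\bar{\gamma}(\mathcal{L}^{(n,\kappa)}_i)$, so the inner sum equals $(\kappa-1)\bar{\gamma}(\mathcal{L}^{(n,\kappa)}_i)$, producing the claimed identity.

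No step is genuinely hard here; the only mild subtlety is recognizing that the coefficient $(\kappa-1)$ (rather than $\kappa$, as one might naively expect from Lemma~\ref{lem:sumLij}) arises precisely because the diagonal $j=i$ contribution must be removed and that diagonal term equals $\bar{\gamma}(\mathcal{L}^{(n,\kappa)}_i)$, not zero. Once this observation is made, the bookkeeping is essentially a verbatim adaptation of the argument used in Lemma~\ref{lem:normsum0}.
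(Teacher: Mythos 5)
Your proposal is correct and follows essentially the same route as the paper: expand the squared norm, reindex the double sum over the squared-norm terms to get $\sum_{i=0}^{n-1}\|\lambda_i\|^2\sum_{j\neq i}\bar{\gamma}(\mathcal{L}^{(n,\kappa)}_{i,j})$, then apply Lemma~\ref{lem:sumLij} and subtract the diagonal $j=i$ term using $\mathcal{L}^{(n,\kappa)}_{i,i}=\mathcal{L}^{(n,\kappa)}_i$ to obtain the coefficient $\kappa-1$. This is exactly the argument the paper gives.
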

\begin{IEEEproof}
We have that
\begin{equation*}
\begin{split}
\sum_{i=0}^{n-2}\sum_{j=i+1}^{n-1}\!\bar{\gamma}&(\mathcal{L}^{(n,\kappa)}_{i,j})\|\lambda_i-\lambda_j\|^2\! \\
&= 
\sum_{i=0}^{n-2}\sum_{j=i+1}^{n-1}\bar{\gamma}(\mathcal{L}^{(n,\kappa)}_{i,j})(\|\lambda_i\|^2
+ \|\lambda_j\|^2) \\
&\quad- 2\sum_{i=0}^{n-2}\sum_{j=i+1}^{n-1}\bar{\gamma}(\mathcal{L}^{(n,\kappa)}_{i,j})\langle \lambda_i,\lambda_j\rangle.
\end{split}
\end{equation*}
Furthermore, it follows that
{\allowdisplaybreaks
\begin{align*}
&\sum_{i=0}^{n-2}\sum_{j=i+1}^{n-1}\bar{\gamma}(\mathcal{L}^{(n,\kappa)}_{i,j})(\|\lambda_i\|^2
+ \|\lambda_j\|^2) \\
&=
\sum_{i=0}^{n-2}\sum_{j=i+1}^{n-1}\bar{\gamma}(\mathcal{L}^{(n,\kappa)}_{i,j})\|\lambda_i\|^2
+
\sum_{i=0}^{n-2}\sum_{j=i+1}^{n-1}\bar{\gamma}(\mathcal{L}^{(n,\kappa)}_{i,j})\|\lambda_j\|^2 \\
&= 
\sum_{i=0}^{n-2}\|\lambda_i\|^2\sum_{j=i+1}^{n-1}\bar{\gamma}(\mathcal{L}^{(n,\kappa)}_{i,j})
+\sum_{j=1}^{n-1}\sum_{i=0}^{j-1}\bar{\gamma}(\mathcal{L}^{(n,\kappa)}_{i,j})\|\lambda_j\|^2 \\
&= \sum_{i=0}^{n-1}\|\lambda_i\|^2\underbrace{\sum_{j=i+1}^{n-1}\bar{\gamma}(\mathcal{L}^{(n,\kappa)}_{i,j})}_{0\ \text{for}\ i=n-1}
+\sum_{j=0}^{n-1}\|\lambda_j\|^2\underbrace{\sum_{i=0}^{j-1}\bar{\gamma}(\mathcal{L}^{(n,\kappa)}_{i,j})}_{0\ \text{for}\ j=0} \\
&=\sum_{i=0}^{n-1}\|\lambda_i\|^2\left(\sum_{j=0}^{i-1}\bar{\gamma}(\mathcal{L}^{(n,\kappa)}_{i,j})+\sum_{j=i+1}^{n-1}\bar{\gamma}(\mathcal{L}^{(n,\kappa)}_{i,j})\right) \\
&=\sum_{i=0}^{n-1}\|\lambda_i\|^2\left(\sum_{j=0}^{n-1}\bar{\gamma}(\mathcal{L}^{(n,\kappa)}_{i,j})-\bar{\gamma}(\mathcal{L}^{(n,\kappa)}_{i})\right) \\
&\overset{(a)}{=}\sum_{i=0}^{n-1}\|\lambda_i\|^2\left(\kappa \bar{\gamma}(\mathcal{L}^{(n,\kappa)}_{i})-\bar{\gamma}(\mathcal{L}^{(n,\kappa)}_{i})\right) \\
&=(\kappa-1)\sum_{i=0}^{n-1}\|\lambda_i\|^2\bar{\gamma}(\mathcal{L}^{(n,\kappa)}_{i}),
\end{align*}}
where $(a)$ follows by use of Lemma~\ref{lem:sumLij}.
\end{IEEEproof}

\begin{lemma}\label{lem:normsum2}
For $1 \leq \kappa \leq n$ we have
\begin{equation*}
\begin{split}
\sum_{\ell\in\mathcal{L}^{(n,\kappa)}}\gamma_\ell\left\|\sum_{i\in \ell}\lambda_i\right\|^2 &=
\kappa
\sum_{i=0}^{n-1}\bar{\gamma}(\mathcal{L}^{(n,\kappa)}_i)\|\lambda_i\|^2 \\
&\quad- \sum_{i=0}^{n-2}\sum_{j=i+1}^{n-1}\bar{\gamma}(\mathcal{L}^{(n,\kappa)}_{i,j})\|\lambda_i-\lambda_j\|^2.
\end{split}
\end{equation*}
\end{lemma}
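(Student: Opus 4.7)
The plan is to expand both sides explicitly in terms of squared norms $\|\lambda_i\|^2$ and inner products $\langle \lambda_i, \lambda_j\rangle$, and then verify that the coefficients match by swapping the order of summation and invoking the already-established Lemmas~\ref{lem:sumLj} and~\ref{lem:sumLij}.

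First I would handle the left-hand side. Expanding the inner sum yields
\begin{equation*}
\left\|\sum_{i\in\ell}\lambda_i\right\|^2 = \sum_{i\in\ell}\|\lambda_i\|^2 + 2\sum_{\substack{i,j\in\ell\\ i<j}}\langle \lambda_i,\lambda_j\rangle,
\end{equation*}
so that, after swapping the order of summation and using the definitions $\bar{\gamma}(\mathcal{L}^{(n,\kappa)}_i)=\sum_{\ell\ni i}\gamma_\ell$ and $\bar{\gamma}(\mathcal{L}^{(n,\kappa)}_{i,j})=\sum_{\ell\supset\{i,j\}}\gamma_\ell$, the LHS collapses to
\begin{equation*}
\sum_{\ell\in\mathcal{L}^{(n,\kappa)}}\!\!\gamma_\ell\left\|\sum_{i\in\ell}\lambda_i\right\|^2 = \sum_{i=0}^{n-1}\bar{\gamma}(\mathcal{L}^{(n,\kappa)}_i)\|\lambda_i\|^2 + 2\sum_{i=0}^{n-2}\sum_{j=i+1}^{n-1}\bar{\gamma}(\mathcal{L}^{(n,\kappa)}_{i,j})\langle\lambda_i,\lambda_j\rangle.
\end{equation*}

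Next I would attack the right-hand side. Expanding $\|\lambda_i-\lambda_j\|^2 = \|\lambda_i\|^2+\|\lambda_j\|^2 - 2\langle\lambda_i,\lambda_j\rangle$ and applying the norm-squared manipulation already carried out in the proof of Lemma~\ref{lem:normsum} (which uses Lemma~\ref{lem:sumLij} to deduce $\sum_{i<j}\bar{\gamma}(\mathcal{L}^{(n,\kappa)}_{i,j})(\|\lambda_i\|^2+\|\lambda_j\|^2) = (\kappa-1)\sum_i \bar{\gamma}(\mathcal{L}^{(n,\kappa)}_i)\|\lambda_i\|^2$), the RHS becomes
\begin{equation*}
\kappa\sum_{i=0}^{n-1}\bar{\gamma}(\mathcal{L}^{(n,\kappa)}_i)\|\lambda_i\|^2 - (\kappa-1)\sum_{i=0}^{n-1}\bar{\gamma}(\mathcal{L}^{(n,\kappa)}_i)\|\lambda_i\|^2 + 2\sum_{i=0}^{n-2}\sum_{j=i+1}^{n-1}\bar{\gamma}(\mathcal{L}^{(n,\kappa)}_{i,j})\langle\lambda_i,\lambda_j\rangle,
\end{equation*}
which simplifies to exactly the expression obtained for the LHS. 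Comparing the two expressions term-by-term then completes the proof.

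There is no real obstacle here: the identity is essentially a bookkeeping exercise in combinatorial double counting. The only mildly delicate point is making sure the $\kappa$ factor on the first RHS term lines up correctly after the $(\kappa-1)$ contribution is subtracted off. All the heavy combinatorial lifting has already been done in Lemmas~\ref{lem:sumLj} and~\ref{lem:sumLij}, so the proof reduces to one pass of algebra once both sides are written in the common basis $\{\|\lambda_i\|^2,\langle\lambda_i,\lambda_j\rangle\}$.
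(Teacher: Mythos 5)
Your proof is correct and takes essentially the same route as the paper's: expand $\left\|\sum_{i\in\ell}\lambda_i\right\|^2$ into squared norms and inner products, swap the order of summation to introduce $\bar{\gamma}(\mathcal{L}^{(n,\kappa)}_i)$ and $\bar{\gamma}(\mathcal{L}^{(n,\kappa)}_{i,j})$, then invoke Lemma~\ref{lem:normsum} to trade the cross term for the norm-difference term. The only cosmetic difference is that the paper transforms the left-hand side directly into the right-hand side, whereas you expand both sides to a common form and compare; the underlying algebra is identical.
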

\begin{IEEEproof}
The set of all elements $\ell$ of $\mathcal{L}^{(n,\kappa)}$ that contains the index $i$ is denoted by $\mathcal{L}^{(n,\kappa)}_i$. Similar the set of all elements that contains the indices $i$ and $j$ is denoted by $\mathcal{L}^{(n,\kappa)}_{i,j}$. From this we see that
\begin{equation*}
\begin{split}
&\sum_{\ell\in\mathcal{L}^{(n,\kappa)}}\gamma_\ell\left\|\sum_{i\in
    \ell}\lambda_i\right\|^2 \\
&=
\sum_{\ell\in\mathcal{L}^{(n,\kappa)}}\gamma_\ell\left(\sum_{i\in \ell}\|\lambda_i\|^2 + 2\sum_{i=0}^{\kappa-2}\sum_{j=i+1}^{\kappa-1}\langle \lambda_{l_i},\lambda_{l_j}\rangle\right) \\
&= \sum_{i=0}^{n-1}\bar{\gamma}(\mathcal{L}^{(n,\kappa)}_i)\|\lambda_i\|^2 + 2\sum_{i=0}^{n-2}\sum_{j=i+1}^{n-1}\bar{\gamma}(\mathcal{L}^{(n,\kappa)}_{i,j})\langle \lambda_{i},\lambda_{j}\rangle.
\end{split}
\end{equation*}
By use of Lemma~\ref{lem:normsum} it follows that
\begin{equation*}
\begin{split}
&\sum_{\ell\in\mathcal{L}^{(n,\kappa)}}\gamma_\ell\left\|\sum_{i\in \ell}\lambda_i\right\|^2 =
\sum_{i=0}^{n-1}\bar{\gamma}(\mathcal{L}^{(n,\kappa)}_i)\|\lambda_i\|^2\\
&\!+\! (\kappa-1)\!\sum_{i=0}^{n-1}\!\bar{\gamma}(\mathcal{L}^{(n,\kappa)}_i)\|\lambda_i\|^2\! -\! \sum_{i=0}^{n-2}\sum_{j=i+1}^{n-1}\bar{\gamma}(\mathcal{L}^{(n,\kappa)}_{i,j})\| \lambda_{i}-\lambda_{j}\|^2 \\
&=\kappa\sum_{i=0}^{n-1}\bar{\gamma}(\mathcal{L}^{(n,\kappa)}_i)\|\lambda_i\|^2 - \sum_{i=0}^{n-2}\sum_{j=i+1}^{n-1}\bar{\gamma}(\mathcal{L}^{(n,\kappa)}_{i,j})\| \lambda_{i}-\lambda_{j}\|^2
\end{split}
\end{equation*}
\rspace
\end{IEEEproof}

We are now in a position to prove the following result:
\begin{lemma}\label{lem:lcmean}
For $1 \leq \kappa \leq n$ we have
\begin{equation}\label{eq:lcmean0}
\begin{split}
&\sum_{\ell\in\mathcal{L}^{(n,\kappa)}}\gamma_\ell\left\|\lambda_c-\frac{1}{\kappa}\sum_{i\in
    \ell}\lambda_i\right\|^2 \\
&=
\bar{\gamma}(\mathcal{L}^{(n,\kappa)})\left\|\lambda_c-\frac{1}{\kappa
    \bar{\gamma}(\mathcal{L}^{(n,\kappa)})}\sum_{i=0}^{n-1}\bar{\gamma}(\mathcal{L}^{(n,\kappa)}_i)
  \lambda_i\right\|^2 \\ 
&\!+\!\frac{1}{\kappa^2}\!\!\sum_{i=0}^{n-2}\sum_{j=i+1}^{n-1}\!\!\bigg(\frac{\bar{\gamma}(\mathcal{L}^{(n,\kappa)}_i)\bar{\gamma}(\mathcal{L}^{(n,\kappa)}_j)}{\bar{\gamma}(\mathcal{L}^{(n,\kappa)})}-\bar{\gamma}(\mathcal{L}^{(n,\kappa)}_{i,j})\!\!\bigg)\!\|\lambda_i-\lambda_j\|^2.
\end{split}
\end{equation}
\end{lemma}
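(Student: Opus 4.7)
The plan is to recognize this identity as a weighted parallel-axis-theorem / variance decomposition: the weighted sum of squared distances from $\lambda_c$ to the various subset-means $\frac{1}{\kappa}\sum_{i\in\ell}\lambda_i$ decomposes into the (weighted) squared distance from $\lambda_c$ to a single overall centroid, plus a term that only involves pairwise distances $\|\lambda_i-\lambda_j\|^2$. Every ingredient is already available in Lemmas~\ref{lem:sumLj}--\ref{lem:normsum2}, so the proof is essentially algebraic bookkeeping.

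First I would expand the left-hand side of~(\ref{eq:lcmean0}) using $\|a-b\|^2=\|a\|^2-2\langle a,b\rangle+\|b\|^2$, which yields three terms:
\begin{equation*}
\bar{\gamma}(\mathcal{L}^{(n,\kappa)})\|\lambda_c\|^2
-\tfrac{2}{\kappa}\sum_{\ell}\gamma_\ell\Big\langle\lambda_c,\sum_{i\in\ell}\lambda_i\Big\rangle
+\tfrac{1}{\kappa^2}\sum_{\ell}\gamma_\ell\Big\|\sum_{i\in\ell}\lambda_i\Big\|^2.
\end{equation*}
Lemma~\ref{lem:suminnerprod} collapses the middle term to $-\frac{2}{\kappa}\langle\lambda_c,\sum_i\bar{\gamma}(\mathcal{L}_i^{(n,\kappa)})\lambda_i\rangle$, and Lemma~\ref{lem:normsum2} rewrites the last term as $\frac{1}{\kappa}\sum_i\bar{\gamma}(\mathcal{L}_i^{(n,\kappa)})\|\lambda_i\|^2-\frac{1}{\kappa^2}\sum_{i<j}\bar{\gamma}(\mathcal{L}_{i,j}^{(n,\kappa)})\|\lambda_i-\lambda_j\|^2$. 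So the LHS reduces to a clean expression involving only $\|\lambda_c\|^2$, $\langle\lambda_c,\sum_i\bar{\gamma}(\mathcal{L}_i^{(n,\kappa)})\lambda_i\rangle$, $\sum_i\bar{\gamma}(\mathcal{L}_i^{(n,\kappa)})\|\lambda_i\|^2$, and a $\|\lambda_i-\lambda_j\|^2$ sum with coefficient $-\frac{1}{\kappa^2}\bar{\gamma}(\mathcal{L}_{i,j}^{(n,\kappa)})$.

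Next I would expand the first term of the right-hand side in the same way. The cross term reproduces $-\frac{2}{\kappa}\langle\lambda_c,\sum_i\bar{\gamma}(\mathcal{L}_i^{(n,\kappa)})\lambda_i\rangle$ exactly, so the only nontrivial piece is the squared norm of the centroid:
\begin{equation*}
\frac{1}{\kappa^2\bar{\gamma}(\mathcal{L}^{(n,\kappa)})}\Big\|\sum_{i=0}^{n-1}\bar{\gamma}(\mathcal{L}_i^{(n,\kappa)})\lambda_i\Big\|^2.
\end{equation*}
Expanding this norm and using the polarization identity $2\langle\lambda_i,\lambda_j\rangle=\|\lambda_i\|^2+\|\lambda_j\|^2-\|\lambda_i-\lambda_j\|^2$ together with Lemma~\ref{lem:sumLj} (in the form used in Lemma~\ref{lem:normsum0}) yields
\begin{equation*}
\Big\|\sum_i\bar{\gamma}(\mathcal{L}_i^{(n,\kappa)})\lambda_i\Big\|^2
=\kappa\bar{\gamma}(\mathcal{L}^{(n,\kappa)})\sum_i\bar{\gamma}(\mathcal{L}_i^{(n,\kappa)})\|\lambda_i\|^2
-\sum_{i<j}\bar{\gamma}(\mathcal{L}_i^{(n,\kappa)})\bar{\gamma}(\mathcal{L}_j^{(n,\kappa)})\|\lambda_i-\lambda_j\|^2.
\end{equation*}
Dividing by $\kappa^2\bar{\gamma}(\mathcal{L}^{(n,\kappa)})$ gives a $\frac{1}{\kappa}\sum_i\bar{\gamma}(\mathcal{L}_i^{(n,\kappa)})\|\lambda_i\|^2$ contribution that matches the LHS, and a $-\frac{1}{\kappa^2\bar{\gamma}(\mathcal{L}^{(n,\kappa)})}\sum_{i<j}\bar{\gamma}(\mathcal{L}_i^{(n,\kappa)})\bar{\gamma}(\mathcal{L}_j^{(n,\kappa)})\|\lambda_i-\lambda_j\|^2$ contribution.

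Finally I would add the second term of the RHS, which is exactly built to cancel the centroid residual and replace it with the $-\frac{1}{\kappa^2}\bar{\gamma}(\mathcal{L}_{i,j}^{(n,\kappa)})\|\lambda_i-\lambda_j\|^2$ coefficient coming from the LHS; equality then follows term by term. The only real obstacle is correctly identifying which previously proved lemma trims each combinatorial sum, in particular ensuring that the application of Lemma~\ref{lem:normsum0}-style bookkeeping inside the centroid-norm expansion matches the coefficient $\bar{\gamma}(\mathcal{L}_i^{(n,\kappa)})(\kappa\bar{\gamma}(\mathcal{L}^{(n,\kappa)})-\bar{\gamma}(\mathcal{L}_i^{(n,\kappa)}))$ that collapses together with the squared-$\bar{\gamma}(\mathcal{L}_i^{(n,\kappa)})$ diagonal to yield the clean $\kappa\bar{\gamma}(\mathcal{L}^{(n,\kappa)})\bar{\gamma}(\mathcal{L}_i^{(n,\kappa)})$ coefficient above. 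Once that cancellation is verified, the lemma is immediate.
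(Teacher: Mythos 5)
Your proof is correct and uses essentially the same ingredients as the paper (Lemmas~\ref{lem:suminnerprod}, \ref{lem:normsum0}, \ref{lem:normsum2}, all resting on Lemma~\ref{lem:sumLj}); the only difference is organizational — the paper runs a one-directional chain of equalities from LHS to RHS, completing the square on the $\bar{\gamma}(\mathcal{L}^{(n,\kappa)})\|\lambda_c\|^2-2\langle\lambda_c,\cdot\rangle$ piece midstream and then cleaning up the residual centroid norm, whereas you reduce both sides to a common canonical form and match coefficients on $\|\lambda_c\|^2$, $\langle\lambda_c,\sum_i\bar{\gamma}(\mathcal{L}_i^{(n,\kappa)})\lambda_i\rangle$, $\sum_i\bar{\gamma}(\mathcal{L}_i^{(n,\kappa)})\|\lambda_i\|^2$, and $\|\lambda_i-\lambda_j\|^2$. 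Your expansion $\|\sum_i\bar{\gamma}(\mathcal{L}_i^{(n,\kappa)})\lambda_i\|^2=\kappa\bar{\gamma}(\mathcal{L}^{(n,\kappa)})\sum_i\bar{\gamma}(\mathcal{L}_i^{(n,\kappa)})\|\lambda_i\|^2-\sum_{i<j}\bar{\gamma}(\mathcal{L}_i^{(n,\kappa)})\bar{\gamma}(\mathcal{L}_j^{(n,\kappa)})\|\lambda_i-\lambda_j\|^2$ via polarization plus Lemma~\ref{lem:sumLj} is exactly the content the paper extracts from Lemma~\ref{lem:normsum0}, and the final cancellation you describe goes through as claimed.
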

\begin{IEEEproof}
Expansion of the norm on the left-hand-side in~(\ref{eq:lcmean0}) leads to 
{\allowdisplaybreaks
\begin{align*}
&\sum_{\ell\in\mathcal{L}^{(n,\kappa)}}\gamma_\ell\bigg\|\lambda_c-\frac{1}{\kappa}\sum_{i\in
  \ell}\lambda_i\bigg\|^2 \\
&= \sum_{\ell\in\mathcal{L}^{(n,\kappa)}}\gamma_\ell\left(
\|\lambda_c\|^2 - 2\left\langle\lambda_c,\frac{1}{\kappa}\sum_{i\in \ell}\lambda_i\right\rangle + \frac{1}{\kappa^2}\left\|\sum_{i\in \ell}\lambda_i\right\|^2\right) \\
&\overset{(a)}{=}
\bar{\gamma}(\mathcal{L}^{(n,\kappa)})\|\lambda_c\|^2 - 2\left\langle
  \lambda_c ,
  \frac{1}{\kappa}\sum_{i=0}^{n-1}\bar{\gamma}(\mathcal{L}^{(n,\kappa)}_i)\lambda_i\right\rangle
\\
&\quad
+ \frac{1}{\kappa^2}\sum_{\ell\in \mathcal{L}^{(n,\kappa)}}\gamma_\ell\left\|\sum_{i\in \ell}\lambda_i\right\|^2\\
&= \bar{\gamma}(\mathcal{L}^{(n,\kappa)})\left\| \lambda_c -
  \frac{1}{\kappa
    \bar{\gamma}(\mathcal{L}^{(n,\kappa)})}\sum_{i=0}^{n-1}\bar{\gamma}(\mathcal{L}^{(n,\kappa)}_i)\lambda_i\right\|^2
\\
&\quad - \frac{1}{\kappa^2\bar{\gamma}(\mathcal{L}^{(n,\kappa)})}\left\|\sum_{i=0}^{n-1}\bar{\gamma}(\mathcal{L}^{(n,\kappa)}_i)\lambda_i\right\|^2 \!+\!\!\! \sum_{\ell\in \mathcal{L}^{(n,\kappa)}}\!\!\frac{\gamma_\ell}{\kappa^2}\left\|\sum_{i\in \ell}\lambda_i\right\|^2\\
&=\bar{\gamma}(\mathcal{L}^{(n,\kappa)})\left\| \lambda_c -
  \frac{1}{\kappa
    \bar{\gamma}(\mathcal{L}^{(n,\kappa)})}\sum_{i=0}^{n-1}\bar{\gamma}(\mathcal{L}^{(n,\kappa)}_i)\lambda_i\right\|^2 \\
&+ \!\!\!\sum_{\ell\in \mathcal{L}^{(n,\kappa)}}\!
\frac{\gamma_\ell}{\kappa^2}\left\|\sum_{i\in \ell}\lambda_i\right\|^2
\!\!-\!
\frac{1}{\kappa^2\bar{\gamma}(\mathcal{L}^{(n,\kappa)})}\bigg(
\sum_{i=0}^{n-1}\bar{\gamma}(\mathcal{L}^{(n,\kappa)}_i)^2\|\lambda_i\|^2
\\
&\quad+ 2\sum_{i=0}^{n-2}\sum_{j=i+1}^{n-1}\bar{\gamma}(\mathcal{L}^{(n,\kappa)}_i)\bar{\gamma}(\mathcal{L}^{(n,\kappa)}_j)\langle \lambda_i,\lambda_j\rangle\bigg)\\
&\overset{(b)}{=}\bar{\gamma}(\mathcal{L}^{(n,\kappa)})\left\|
  \lambda_c - \frac{1}{\kappa
    \bar{\gamma}(\mathcal{L}^{(n,\kappa)})}\sum_{i=0}^{n-1}\bar{\gamma}(\mathcal{L}^{(n,\kappa)}_i)\lambda_i\right\|^2
\\
&\!+\! 
\frac{1}{\kappa}\sum_{i=0}^{n-1}\bar{\gamma}(\mathcal{L}^{(n,\kappa)}_i)\|\lambda_i\|^2
\!-\!
\frac{1}{\kappa^2}\sum_{i=0}^{n-2}\sum_{j=i+1}^{n-1}\bar{\gamma}(\mathcal{L}^{(n,\kappa)}_{i,j})\|\lambda_i-\lambda_j\|^2
\\
&-\frac{1}{\kappa^2\bar{\gamma}(\mathcal{L}^{(n,\kappa)})}\sum_{i=0}^{n-1}\bar{\gamma}(\mathcal{L}^{(n,\kappa)}_i)^2\|\lambda_i\|^2 \\
& + 
\sum_{i=0}^{n-2}\sum_{j=i+1}^{n-1}\frac{\bar{\gamma}(\mathcal{L}^{(n,\kappa)}_i)\bar{\gamma}(\mathcal{L}^{(n,\kappa)}_j)}{\kappa^2 \bar{\gamma}(\mathcal{L}^{(n,\kappa)})}
\| \lambda_i-\lambda_j\|^2\\
&
-\frac{1}{\kappa^2 \bar{\gamma}(\mathcal{L}^{(n,\kappa)})}\sum_{i=0}^{n-1}\bar{\gamma}(\mathcal{L}^{(n,\kappa)}_i)(\kappa \bar{\gamma}(\mathcal{L}^{(n,\kappa)})-\bar{\gamma}(\mathcal{L}^{(n,\kappa)}_i))\|\lambda_i\|^2 \\
&=\bar{\gamma}(\mathcal{L}^{(n,\kappa)})\left\| \lambda_c -
  \frac{1}{\kappa
    \bar{\gamma}(\mathcal{L}^{(n,\kappa)})}\sum_{i=0}^{n-1}\bar{\gamma}(\mathcal{L}^{(n,\kappa)}_i)\lambda_i\right\|^2
\\ 
&+\frac{1}{\kappa^2}\sum_{i=0}^{n-2}\sum_{j=i+1}^{n-1}\!\!\!\bigg(\!\frac{\bar{\gamma}(\mathcal{L}^{(n,\kappa)}_i)\bar{\gamma}(\mathcal{L}^{(n,\kappa)}_j)}{\bar{\gamma}(\mathcal{L}^{(n,\kappa)})}- \bar{\gamma}(\mathcal{L}^{(n,\kappa)}_{i,j})\!\!\bigg)\!\|\lambda_i-\lambda_j\|^2,
\end{align*}
where $(a)$ follows by use of Lemma~\ref{lem:suminnerprod} and $(b)$ by use of Lemmas~\ref{lem:normsum0} and~\ref{lem:normsum2}.}
\end{IEEEproof}

\section{Proof of Theorem~\ref{theo:separable}}\label{app:separable}
Without loss of generality, let $\mu_i=1, \forall i$. Furthermore, let 
\begin{equation}\label{eq:f}
f=\sum_{\lambda_c\in V_\pi(0)}\sum_{\kappa=1}^{n-1}\sum_{i=0}^{n-2}\sum_{j=i+1}^{n-1}\hat{\gamma}_{i,j}^{(n,\kappa)}\bigg\|\lambda_i - \lambda_j\bigg\|^2
\end{equation}
and
\begin{equation}\label{eq:g}
g=\sum_{\lambda_c\in V_\pi(0)}\sum_{\kappa=1}^{n-1}\bigg\| \lambda_c - \frac{1}{\kappa \bar{\gamma}(\mathcal{L}^{(n,\kappa)})}\sum_{i=0}^{n-1}\bar{\gamma}(\mathcal{L}^{(n,\kappa)}_i)\lambda_i\bigg\|^2.
\end{equation}
We prove the theorem by constructing a labeling function which lower bounds $f$ independently of $g$. 
We then show that with this choice of $f$ we have $f\rightarrow\infty$ and $g\rightarrow\infty$ but $g/f\rightarrow 0$ as $N_i\rightarrow\infty,\nu_i\rightarrow 0, \forall i$. Furthermore, we show that this holds for any admissible choice of $g$.
Since $\mathcal{J}^n = c_0f + c_1g$ for some constants $c_0,c_1\in \mathbb{R}$ it follows that in order to minimize $\mathcal{J}^n$ an optimal labeling function must jointly minimize $f$ and $g$. 
However, a jointly optimal labeling function can never improve upon the lower bound on $f$ which occur when $f$ is independently minimized. 
Furthermore, $g$ can only be reduced if taking into account during the optimization. Thus, for any optimal labeling function we must have $g/f\rightarrow 0$. It follows that $f$ is asymptotically dominating and therefore must be minimized in order to minimize $\mathcal{J}^n$.

Let $\mathcal{T}$ denote the set of $n$-tuples assigned to central lattice points in $V_\pi(0)$ so that $|\mathcal{T}|=N_\pi$ and let $\mathcal{T}_i$ be the set of $i$th elements (i.e.\ a set of sublattice points all from $\Lambda_i$). Moreover, let $\mathcal{T}(\lambda_i)$ be the set of $n$-tuples containing a specific $\lambda_i$ as the $i$th element. Finally, let $\mathcal{T}_j(\lambda_i)$ be the set of $\lambda_j\in\Lambda_j$ sublattice points which are the $j$th elements in the $n$-tuples that has the specific $\lambda_i$ as the $i$th element.
With this, for any fixed $\lambda_0\in \mathcal{T}_0$, the sum $\sum_{\lambda_1 \in \mathcal{T}_1(\lambda_0)}\|\lambda_0 - \lambda_1\|^2$ runs over the set of $\lambda_1$ points which are in the same $n$-tuples as the given $\lambda_0$. 
Notice that this sum can be written as $\sum_{\lambda_1 \in \mathcal{T}^u_1(\lambda_0)}\#_{\lambda_1}\|\lambda_0 - \lambda_1\|^2$ where the superscript ${}^u$ denotes the unique $\lambda_1$ elements of $\mathcal{T}_1(\lambda_0)$ and $\#_{\lambda_1}$ denotes the number of times the given $\lambda_1$ is used.
Clearly, this sum is minimized if the unique $\lambda_1$ points are as close as possible to the given $\lambda_0$. In other words, for any given ``distribution'' $\{\#_{\lambda_1}\}$, the sum is minimized if the $\lambda_1$'s are contained within the smallest possible sphere around $\lambda_0$. 
In fact, this holds for any $\lambda_0\in \mathcal{T}_0$. On the other hand, keeping the set of $\lambda_1$'s fixed we can also seek the minimizing distribution $\{\#_{\lambda_j}\}$. A good choice appears to be that the $\lambda_j$ points that are closer to the given $\lambda_0$ should be used more frequently than those further way. 

We pause to make the following observation. Due to the shift-invariance property of the labeling function, we can restrict attention to the $n$-tuples which are assigned to central lattice points within $\Lambda_\pi(0)$. Thus, we have a total of $N_\pi$ $n$-tuples. Recall that we guarantee the shift-invariance property by restricting $\lambda_0$ to be inside $V_\pi(0)$ (a restriction which we later relax by considering cosets). 
Furthermore, to avoid possible bias towards any $\lambda_0\in V_\pi(0)$, we require that each $\lambda_0$ is used an equal amount of times. Since there are $N_\pi/N_0$ distinct $\lambda_0$ points in $V_\pi(0)$ it follows that each $\lambda_0$ must be used $N_0$ times.

Let us for the moment being consider the case of $n=3$, i.e.\ we need to construct a set of $N_\pi$ triplets $\mathcal{T}=\{(\lambda_0,\lambda_1,\lambda_2)\}$. If we fix some $\lambda_0$, we can construct a set of pairs of sublattice points by centering a sphere $\tilde{V}$ at $\lambda_0$ and forming the set of distinct pairs $\mathcal{S}=\{(\lambda_0,\lambda_1) : \lambda_1 \in \tilde{V}(\lambda_0) \cap \Lambda_1\}$. For each pair $(\lambda_0',\lambda_1')\in \mathcal{S}$ we can form a triplet $(\lambda_0',\lambda_1',\lambda_2)$ by combining the given pair with some $\lambda_2$. It is important that $\lambda_2$ is close to $\lambda_0'$ as well as $\lambda_1'$ in order to reduce the distances $\|\lambda_0'-\lambda_2\|^2$ and $\|\lambda_1'-\lambda_2\|^2$. 
This can be done by guaranteeing that $\lambda_2 \in \tilde{V}(\lambda_0)$ and $\lambda_2 \in \tilde{V}(\lambda_1)$. In other words, $\lambda_2 \in \tilde{V}(\lambda_0) \cap \tilde{V}(\lambda_1)$. With this strategy, fix some $\lambda_0\in V_\pi(0)$ and start by using some ``small'' $\tilde{V}$ in order to construct the set of pairs $\tilde{S}=\{(\lambda_0,\lambda_1) : \lambda_1 \in \tilde{V}(\lambda_0) \cap \Lambda_1\}$. Then for each pair $s\in \tilde{S}$ we construct the set of triplets $\{ (s,\lambda_2) : \lambda_2 \in \tilde{V}(\lambda_0) \cap \tilde{V}(\lambda_1)\}$. Recall that we need $N_0$ triplets for each $\lambda_0$. However, since $\tilde{V}$ was chosen ``small'' we end up with too few triplets. The trick is now to increase the volume of $\tilde{V}$ in small steps until we end up with exactly $N_0$ triplets (keep in mind that for large $N_0$ we work with large volumes). 

In the $n$-description case, we require that $\lambda_k \in \tilde{V}(\lambda_0) \cap \tilde{V}(\lambda_1) \cap \cdots \cap \tilde{V}(\lambda_{k-1})$. If we let $r$ be the radius of $\tilde{V}$, then with the above procedure it is guaranteed that $\|\lambda_i-\lambda_j\|^2 \leq r^2/L$ for all $(i,j)$ where $i,j=0,\dotsc,n-1$.

Notice that $f$, i.e. the expression to be minimized as given by~(\ref{eq:f}), includes weights $\hat{\gamma}_{i,j}^{(n,\kappa)}$ (which might not be equal) for every pair of sublattices. In otherwords, we might use spheres $\tilde{V}_{i,j}$ of different sizes to guarantee that $\|\lambda_i-\lambda_j\|^2\leq r_{i,j}^2/L$ where the radius $r_{i,j}$ now depends on the particular pair of sublattices under consideration. This is illustrated in Fig.~\ref{fig:3spheres} where $r_{i,j}$ denotes the radius of the sphere $\tilde{V}_{i,j}$. 
Here we center $\tilde{V}_{0,1}$ at some $\lambda_0 \in V_\pi(0)$ as illustrated in Fig.~\ref{fig:3spheres} by the solid circle. Then, for any $n$-tuples having this $\lambda_0$ point as first element, we only include $\lambda_1$ points which are inside $\tilde{V}_{0,1}(\lambda_0)$. This guarantees that $\|\lambda_0 - \lambda_1\|^2 \leq r_{0,1}/L$. 
Let us now center a sphere $\tilde{V}_{1,2}$ at some $\lambda_1$ which is inside $\tilde{V}_{0,1}(\lambda_0)$. This is illustrated by the dotted sphere of radius $r_{1,2}$ in the figure. We then only include $\lambda_2$ points which are in the intersection of $\tilde{V}_{1,2}(\lambda_1)$ and $\tilde{V}_{0,2}(\lambda_0)$. This guarantees that $\|\lambda_i - \lambda_j\|^2 \leq r_{i,j}/L$ for all $(i,j)$ pairs. 
Interestingly, from Fig.~\ref{fig:3spheres} we see that $r_{0,2}$ cannot be greater than $r_{0,1} + r_{1,2}$. Thus, the radius $r_{i,j}$ must grow at the same rate for any pair $(i,j)$ so that, without loss of generality, $r_{0,1}=a_2r_{0,2}=a_1r_{1,2}$ for some fixed $a_1,a_2\in \mathbb{R}$. 
\begin{figure}
\psfrag{V01}{$\tilde{V}_{0,1}$}
\psfrag{V02}{$\tilde{V}_{0,2}$}
\psfrag{V12}{$\tilde{V}_{1,2}$}
\psfrag{r01}{$r_{0,1}$}
\psfrag{r02}{$r_{0,2}$}
\psfrag{r12}{$r_{1,2}$}
\psfrag{l0}{$\lambda_0$}
\psfrag{l1}{$\lambda_1$}
\begin{center}
\includegraphics[width=5cm]{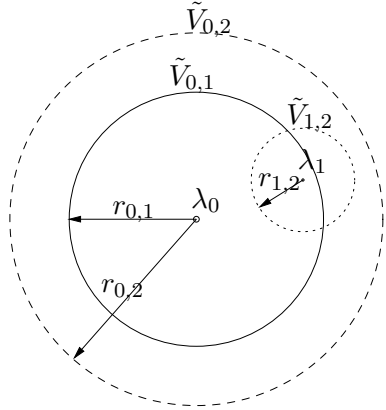}
\caption{Different sizes of the spheres.}
\label{fig:3spheres}
\vspace{-1cm}
\end{center}
\end{figure}


Recall that, the number $\tilde{N}_i$ of lattice points of $\Lambda_i$ within a connected region $\tilde{V}$ of $\mathbb{R}^L$ may be approximated by $\tilde{N}_i\approx \tilde{\nu}/\nu_i$ where $\tilde{\nu}$ is the volume of $\tilde{V}$. 
Moreover, the number of $\lambda_0$ points within $V_\pi(0)$ is given by $\#_{\lambda_0}\approx \text{Vol}(V_\pi(0))/\nu_0=\nu_cN_\pi/\nu_0$. Since we need to construct a total of $N_\pi$ $n$-tuples to label the $N_\pi$ central lattice points, it follows that each $\lambda_0$ is used $N_\pi/\#_{\lambda_0} = N_0$ times. 
Let us now center a sphere $\tilde{V}$ of volume $\tilde{\nu}$ at some $\lambda_0\in \Lambda_0$. The number of $\lambda_i$ points inside this sphere is asymptotically given by $\tilde{\nu}/\nu_i$. Thus, the number of distinct $n$-tuples we can construct by forming all combinations of sublattice points from $\Lambda_i, i=1,\dotsc,n-1$ within $\tilde{V}$ and using $\lambda_0$ as first element is given by $\tilde{\nu}^{n-1}/(\prod_{i=1}^{n-1}\nu_i)$. Recall that we need $N_0$ $n$-tuples for each $\lambda_0$. Thus, we obtain $N_0 = \tilde{\nu}^{n-1}/(\prod_{i=1}^{n-1}\nu_i)$ from which we see that the volume of the sphere $\tilde{V}$ must satisfy
\begin{equation}\label{eq:tildenu_lb}
\tilde{\nu} \geq \nu_c \prod_{i=0}^{n-1}N_i^{\frac{1}{n-1}}.
\end{equation}
We previously argued that we need to make $\tilde{V}$ large enough so as to be able to create exactly $N_0$ $n$-tuples for each $\lambda_0$ which satisfy $\|\lambda_i - \lambda_j\|^2\leq r^2/L$. Having equality in~(\ref{eq:tildenu_lb}) guarantees that $\|\lambda_0 - \lambda_j\|^2\leq r^2/L$ for $j=1,\dotsc,n-1$ but then we must have $\|\lambda_i - \lambda_j\|^2 > r^2/L$ for some $i\neq 0$. However, since we are aiming at lower bounding $f$ we may indeed proceed by assuming that $\|\lambda_i - \lambda_j\|^2 \leq r^2/L$ for all $i$. Furthermore, the different radii $r_{i,j}$ are related through a multiplicative constant which will not affect the rate of growth of the volumes of $\tilde{V}_{i,j}$ as $N_i\rightarrow\infty$. Thus, we proceed by assuming $r_{i,j}=r$ so that $\tilde{V}_{i,j}=\tilde{V}$. 

We are now in a position to evaluate the following sum
\begin{align}
\sum_{\lambda_j \in \mathcal{T}_j(\lambda_i)}\|\lambda_i - \lambda_j\|^2
&\overset{(a)}{=} 
\sum_{\lambda_j \in \mathcal{T}_j(0)}\|\lambda_j\|^2   \\ \label{eq:sumdist1}
&\overset{}{=} 
\sum_{\lambda_j \in \mathcal{T}^u_j(0)}  \#\lambda_{j}\|\lambda_j\|^2.
\end{align}
The volume of the sphere $\tilde{V}$ is independent of which sublattice point it is centered at, so we may take $\lambda_i=0$ from which $(a)$ follows. Notice that for a fixed $\lambda_0$ and for $n>2$, the set $\mathcal{T}_j(\lambda_i)$ contains several identical $\lambda_j$ elements. 
We therefore use the notation $\mathcal{T}^u_j(\lambda_i)$ to indicate the unique set of $\lambda_j$ elements. 
Furthermore, we use the notation $\#_{\lambda_j}$ to indicate the number of times the given $\lambda_j$ is used. 
Since $\sum_{\lambda_j\in T_j^u(0)} \#_{\lambda_j} = N_0$ it follows that $\sum_{\lambda_j\in T_j^u(0)} \min_j\{\#_{\lambda_j}\} \leq N_0$ so that 
$\min_j\{\#_{\lambda_j}\} \leq N_0/\sum_{\lambda_j\in T_j^u(0)}$. Moreover, $|\lambda_j\in T_j^u(0)| = \tilde{\nu}/\nu_j$ which implies that $\sum_{\lambda_j\in T_j^u(0)} = \tilde{\nu}/\nu_j$ and we therefore have that $\min_j\{\#_{\lambda_j}\} \leq N_0 \nu_j/\tilde{\nu}$. By similar reasoning it is easy to show that $\max_j\{\#_{\lambda_j}\} \geq N_0 \nu_j/\tilde{\nu}$.

We have previously shown that the intersection of any number of (large) spheres of equal radii $r$ which are distanced no further apart than $r$, is positively bounded away from zero~\cite{ostergaard:2004b}. In fact, the volume of the smallest intersection can be lower bounded by the volume of a regular $L$-simplex having side lengths $r$~\cite{ostergaard:2004b}. Recall that the volume $\text{Vol}(\mathfrak{S})$ of a regular $L$-simplex $\mathfrak{S}$ with side length $r$ is given by~\cite{buchholz:1992}
\begin{equation}
\text{Vol}(\mathfrak{S}) = \frac{r^L}{L!}\sqrt{\frac{L+1}{2^L}} = c_Lr^L
\end{equation}
where $c_L$ is constant for fixed $L$. 
It follows that, in the three channel case, $\#_{\lambda_j}$ is lower bounded by $c_L r^L/\nu_k$ where $\nu_k$ is the volume of sublattice with the largest index value. Moreover, for $n\geq 3$ we have that $\#_{\lambda_j}$ is lower bounded by $(c_L r^L/\nu_k)^{n-2}$. 

Interestingly, $\#_{\lambda_j}$ is obviously upper bounded by $(\omega_L r^L/\nu_{k'})^{n-2}$, i.e.\ ratio of the volume of an $L$-dimensional sphere of radius $r$ and the volume of a Voronoi cell of $\Lambda_{k'}$, where $k'$ denotes the sublattice with the smallest index value. Notice that the lower bound is proportional to the upper bound and we have the following sandwhich
\begin{equation}\label{eq:sandwhich}
\begin{split}
\left(\frac{\omega_L r^L}{\nu_{k'}} \right)^{n-2}
&\geq\max_j\{\#_{\lambda_j}\} \geq N_0\nu_j/\tilde{\nu} \\
&
\geq \min_j\{\#_{\lambda_j}\} \geq \left(\frac{c_L r^L}{\nu_{k}} \right)^{n-2}
\end{split}
\end{equation}
where the left and right hand sides of~(\ref{eq:sandwhich}) differ by a constant for any $n$ which implies that there exists a positive constant $c>0$ such that $\min_j\{\#_{\lambda_j}\} \geq c N_0\nu_j/\tilde{\nu}$.

Using the above in~(\ref{eq:sumdist1}) leads to 
{\allowdisplaybreaks
\begin{align*}
\sum_{\lambda_j \in \mathcal{T}_j(\lambda_i)}\|\lambda_i - \lambda_j\|^2 \nu_j
&\overset{}{>} 
\frac{cN_i\nu_j}{\tilde{\nu}}\sum_{\lambda_j \in \mathcal{T}^u_j(0)}\|\lambda_j\|^2 \nu_j  \\
&\overset{(a)}{\approx} \frac{cN_i\nu_j}{\tilde{\nu}}\int_{x\in \tilde{V}}\|x\|^2 dx  \\
&= \frac{cN_i\nu_j}{\tilde{\nu}} LG(S_L)\tilde{\nu}^{1+2/L} \\
&= cN_i\nu_j LG(S_L)\nu_c^{2/L}\prod_{i=0}^{n-1}N_i^{\frac{2}{(n-1)L}}
\end{align*}}
where $G(S_L)$ is the dimensionless normalized second moment of an $L$-dimensional hypersphere 
and $(a)$ follows by replacing the sum by an integral (standard Riemann approximation). This approximations becomes exact asymptotically as $N_i\rightarrow\infty$ and $\nu_i\rightarrow 0$.

We finally see that 
{\allowdisplaybreaks
\begin{align*}
\frac{1}{L}\sum_{\lambda_c\in V_\pi(0)}\bigg\|\lambda_i - \lambda_j\bigg\|^2
&\overset{}{=} \frac{1}{L}\sum_{\lambda_i \in \mathcal{T}_i}\sum_{\lambda_j \in \mathcal{T}_j(\lambda_i)}\|\lambda_i - \lambda_j\|^2 \\
&= \frac{1}{L}\frac{N_\pi}{N_i}\sum_{\lambda_j \in \mathcal{T}_j(\lambda_i)}\|\lambda_i - \lambda_j\|^2 \\
&> cG(S_L)N_\pi \nu_c^{2/L}\prod_{i=0}^{n-1}N_i^{\frac{2}{(n-1)L}}
\end{align*}}
so that
\begin{equation}\label{eq:growthoff}
f=\Omega\left(N_\pi\nu_c^{2/L}\prod_{i=0}^{n-1}N_i^{\frac{2}{(n-1)L}}\right).
\end{equation}

We now upper bound $g$. Notice that $g$ describes the sum of distances between the central lattice points and the weighted average of their associated $n$-tuples. 
By construction, these weighted averages will be distributed evenly through-out $V_\pi(0)$. Thus, the distance of a central lattice point and the weighted average of its associated $n$-tuple can be upper bounded by the covering radius of the sublattice with the largest index value, say $N_k$. This is a conservative upper bound but will do for the proof.\footnote{The worst case situation occur if the weighted centroids are distributed such that the minimal distance between any two centroids is maximized. Notice that the weighted centroids form convex combinations of the sublattice points. Since the weights are less than one, the worst case situation occurs if the weighted centroids are distributed on a lattice with an index value equal to the sublattice with the maximum index value (and therefore also the maximum covering radius). Thus, the bound is indeed valid for an arbitrary set of $n$-tuples and not tied to the specific construction of $n$-tuples used so far in the proof.} The rate of growth of the covering radius of the $k$th sublattice is proportional to $\nu_k^{1/L} = (N_k\nu_c)^{1/L}$. Thus
\begin{equation}
g=\mathcal{O}\left( N_\pi \nu_c^{2/L}N_k^{2/L}\right).
\end{equation}
It follows that 
\begin{equation}
\frac{g}{f} = \Theta\left( \frac{N_k^{2/L}}{\prod_{i=0}^{n-1}N_i^{\frac{2}{(n-1)L}}} \right)
= \Theta\left( N_k^{-\frac{2}{L(n-1)}} \right)
\end{equation}
where the last equality follows since the index values are growing at the same rate so that $N_i=N_k/b_i$ for some constant $b_i\in \mathbb{R}$. The theorem is proved by noting that $N_k^{-\frac{2}{L(n-1)}}\rightarrow 0$ as $N_k\rightarrow \infty$. \hfill\IEEEQEDclosed

\section{Proof of Theorem~\ref{theo:nobias} }\label{app:proof_nobias}
We restrict attention to the case where $V_\pi(0)$ is the Voronoi cell of a product lattice generated by the approach outlined in Section~\ref{sec:existence}.
In this case, the shape of $V_\pi(0)$ can be either hyper cubic, or as the dimension increases, the shape can become more and more spherical. 

\begin{figure}
\psfrag{Vpi}{$V_\pi(0)$}
\psfrag{V}{$\tilde{V}$}
\psfrag{s}{$s$}
\psfrag{s'}{$s'$}
\psfrag{A}{$\mathcal{A}$}
\psfrag{B}{$\mathcal{B}$}
\begin{center}
\includegraphics[width=4cm]{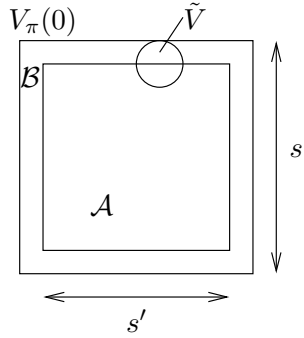}
\caption{$V_\pi(0)$ is a Voronoi cell of $\Lambda_\pi$. $\mathcal{A}$ is a scaled and centered version of $V_\pi(0)$ and $\mathcal{B}$ is the ``strip'' surrounding $\mathcal{A}$, i.e., $\mathcal{B}=V_\pi(0)\backslash\mathcal{A}$. }
\label{fig:Vpi}
\vspace{-1cm}
\end{center}
\end{figure}
Let us first assume that that $V_\pi(0)$ forms a hyper cube having side lengths $s$ as shown in Fig.~\ref{fig:Vpi}. 
The $n$-tuples are constructed by centering a sphere $\tilde{V}$ of volume $\tilde{\nu}$ around each $\lambda_0 \in V_\pi(0)$ and taking all combinations of lattice points within this region (keeping $\lambda_0$ as first coordinate). From Fig.~\ref{fig:Vpi} it may be seen that any $\lambda_0$ which is contained in the region denoted $\mathcal{A}$ will always be combined with sublattice points that are also contained in $V_\pi(0)$. 
On the other hand, any $\lambda_0$ which is contained in region $\mathcal{B}$ will occasionally be combined with points outside $V_\pi(0)$. Therefore, we need to show that the volume $V_\mathcal{A}$ of $\mathcal{A}$ approaches the volume of $V_\pi(0)$ as $N_\pi\rightarrow \infty$ or similarly that the ratio of $V_\mathcal{B}/V_\mathcal{A}\rightarrow 0$ as $N_\pi\rightarrow \infty$, where $V_\mathcal{B}$ denotes the volume of the region $\mathcal{B}$.

Let $\mathcal{A}$ be the centered hyper cube having side lengths  $s' = s - 2\tilde{r}$ where $\tilde{r}$ is the radius of $\tilde{V}$, see Fig.~\ref{fig:Vpi}. 
Since the volume of $V_\pi(0)$ is $\nu_\pi=\nu N_\pi$ it follows that $s = \nu_\pi^{1/L}=(\nu N_\pi)^{1/L}$. Moreover, the volume $V_\mathcal{A}$ of $\mathcal{A}$ is 
{\allowdisplaybreaks
\begin{align*}
V_\mathcal{A} &= (s')^L \\
&= \bigg( s-2\tilde{r} \bigg)^L \\
&= \bigg( \nu_\pi^{1/L}-2 \left(\frac{\tilde{\nu}}{\omega_L}\right)^{1/L} \bigg)^L \\
&= \bigg( (\nu N_\pi)^{1/L}-2 \left(\frac{\psi_{n,L}^L\nu N_\pi^{1/(n-1)}}{\omega_L}\right)^{1/L} \bigg)^L \\
&= \nu\bigg( N_\pi^{1/L}-2 \left(\frac{\psi_{n,L}^L}{\omega_L} \right)^{1/L} N_\pi^{1/L(n-1)} \bigg)^L,
\end{align*}
}%
where $\tilde{\nu} = \psi_{L,n}^L\nu N_\pi^{1/(n-1)}$ is the volume of $\tilde{V}$ and $\tilde{r}=(\tilde{\nu}/\omega_L)^{1/L}$, 
where $\omega_L$ is the volume of an $L$-dimensional unit sphere. 
Since the volume $V_\mathcal{B}$ of $\mathcal{B}$ is given by $V_\mathcal{B}= \nu_\pi -  V_\mathcal{A}$ we find the ratio 
\begin{align*}
&\lim_{N_\pi\to\infty}\frac{ V_\mathcal{B}}{V_\mathcal{A}} \\
&= \lim_{N_\pi\to\infty}\frac{N_\pi}{ \bigg( N_\pi^{1/L}-2 \left(\frac{\psi_{n,L}^L}{\omega_L} \right)^{1/L} N_\pi^{1/L(n-1)} \bigg)^L} -1 \\
&= 0,
\end{align*}
where the second equality follows since $n>2$. 

At this point, we note that the hyper cubic region as used above is actually the worst shape to consider. Specifically, it is the one that yields the minimum $V_\mathcal{A}$ and thus the maximum $V_\mathcal{B}$, since $\nu_\pi$ is constant. To see this, note that we can always pick the region $\mathcal{A}$ to be a centered scaled version of $V_\pi(0)$. Thus, since the boundary of the inscribed region $\mathcal{A}$ will be farthest away from the boundary of $V_\pi(0)$ at corner points it follows that the more spherical $V_\pi(0)$, the larger $V_\mathcal{A}$ compared to $\nu_\pi$. This proves the claim.\hfill\IEEEQEDclosed

\section{Proof of Theorem~\ref{theo:equivtuples}}\label{app:proof_equivtuples}
We only prove it for $\Lambda_0$ and $\Lambda_1$. Then by symmetry it must hold for any pair.
Define the set $\mathcal{S}_{\lambda_0}$ 
as the set of $n$-tuples constructed by centering $\tilde{V}$ at some $\lambda_0\in V_\pi(0)\cap\Lambda_0$.
Hence, $s\in \mathcal{S}_{\lambda_0}$ has $\lambda_0$ as first coordinate and the distance between any two elements of $s$ is less than $r$, the radius of $\tilde{V}$. We will assume\footnote{This is always the case if $r \geq \max_i r(\Lambda_i)$ where $r(\Lambda_i)$ is the covering radius of the $i$th sublattice. The covering radius depends on the lattice and is maximized if $\Lambda_i$ is geometrically similar to $Z^L$, in which case we have\cite{conway:1999}
\begin{equation*}
r(\Lambda_i)=\frac{1}2\sqrt{2}\nu^{1/L}N_i^{1/L}.
\end{equation*}
Since $r=\psi_{n,L} \nu^{1/L} N_\pi^{1/L(n-1)}/\omega_L^{1/L}$ it follows that in order to make sure that $\mathcal{S}_{\lambda_0}\neq\emptyset$ the index values must satisfy
\begin{equation*}\label{eq:Niupperbound}\tag{*}
N_i\leq (\sqrt{2}\psi_{n,L})^L\omega_L N_\pi^{1/(n-1)},\quad i=0,\dots, n-1.
\end{equation*}
Through-out this work we therefore require (and implicitly assume) that~(\ref{eq:Niupperbound}) is satisfied.%
}
that $S_{\lambda_0}\neq \emptyset, \forall \lambda_0$. 

Similarly, define the set $\mathcal{S}_{\lambda_1}\neq \emptyset$ by centering $\tilde{V}$ at some $\lambda_1\in V_\pi(0)\cap\Lambda_1$. 
Recall from Theorem~\ref{theo:nobias} that, asymptotically as $N_i\rightarrow\infty, \forall i$, all elements of the $n$-tuples are in $V_\pi(0)$. 
Then it must hold that for any $s\in \mathcal{S}_{\lambda_1}$ we have $s\in \bigcup_{\lambda_0\in V_\pi(0)\cap\Lambda_0}\mathcal{S}_{\lambda_0}$. But it is also true that for any $s'\in \mathcal{S}_{\lambda_0}$ we have $s'\in \bigcup_{\lambda_1\in V_\pi(0)\cap\Lambda_1}\mathcal{S}_{\lambda_1}$. 
Hence, since the $n$-tuples in $\bigcup_{\lambda_0\in V_\pi(0)\cap\Lambda_0}\mathcal{S}_{\lambda_0}$ are distinct and the $n$-tuples in $\bigcup_{\lambda_1\in V_\pi(0)\cap\Lambda_1}\mathcal{S}_{\lambda_1}$ are also distinct, it follows that the two sets 
$\bigcup_{\lambda_0\in V_\pi(0)\cap\Lambda_0}\mathcal{S}_{\lambda_0}$ and 
$\bigcup_{\lambda_1\in V_\pi(0)\cap\Lambda_1}\mathcal{S}_{\lambda_1}$ must be equivalent.
\hfill\IEEEQEDclosed

\section{Proof of Theorem~\ref{theo:dist_3desc}}
\label{app:theo_3desc}
We notice from Lemma~\ref{lem:expdist_split} and~(\ref{eq:shiftinvJn}) that $\bar{D}_\ell = \mathbb{E}\|X-\hat{X}_\ell\|^2$ can be written as
\begin{equation}\label{eq:barDL_1}
\begin{split}
\bar{D}_\ell &= \frac{1}{N_\pi}\sum_{\lambda_c\in
  V_\pi(0)}\left\|\lambda_c -
  \frac{1}{\kappa}\sum_{i\in\ell}\mu_i\alpha_i(\lambda_c)\right\|^2 \\
&\quad + 
\left\{\sum_{\lambda_c\in \Lambda_c}\int_{V_c(\lambda_c)}f_X(x)\|X-\lambda_c\|^2\, dx\right\}
\end{split}
\end{equation}
where from~(\ref{eq:Dc}) we know that the last term is $G(\Lambda_c)\nu_c^{2/L}$. In the following we therefore focus on finding a closed-form solution to the first term in~(\ref{eq:barDL_1}). This we do by taking the following three steps (which are valid in the usual asymptotical sense):
\begin{enumerate}
\item  We first show, by Proposition~\ref{prop:lclmean}, that 
\begin{equation*}
\begin{split}
&\sum_{\lambda_c\in V_\pi(0)}\left\| \lambda_c -
  \frac{1}{\kappa}\sum_{j\in \ell}\tilde{\lambda}_j\right\|^2 \\
&=
\!\!\!\! \sum_{\lambda_c\in V_\pi(0)}\!\left\| \frac{1}{\kappa}\sum_{j\in \ell}\tilde{\lambda}_j - \frac{1}{\bar{\gamma}(\mathcal{L}^{(n,\kappa)})\kappa}\sum_{i=0}^{n-1} \bar{\gamma}(\mathcal{L}_i^{(n,\kappa)})\tilde{\lambda}_i\right\|^2.
\end{split}
\end{equation*}
\item Then, by Lemma~\ref{lem:squard_sum_equiv}, we show that 
\begin{equation*}
\begin{split}
&\sum_{\lambda_c\in V_\pi(0)}\left\| \frac{1}{\kappa}\sum_{j\in
    \ell}\tilde{\lambda}_j -
  \frac{1}{\bar{\gamma}(\mathcal{L}^{(n,\kappa)})\kappa}\sum_{i=0}^{n-1}
  \bar{\gamma}(\mathcal{L}_i^{(n,\kappa)})\tilde{\lambda}_i\right\|^2\\
&=  \sum_{\lambda_c\in V_\pi(0)}\sum_{k} \sum_{i=0}^{n_k-2}\sum_{i=0}^{n_k-1}c_k \| \tilde{\lambda}_i - \tilde{\lambda_j} \|^2
\end{split}
\end{equation*}
for some $c_k\in\mathbb{R}$ and $n_k\leq n$.
\item Finally, we show by Proposition~\ref{prop:squared_distances} that for the case of $n=3$, we have
\begin{equation}
\sum_{\lambda_c\in V_\pi(0)} \| \tilde{\lambda}_i - \tilde{\lambda_j} \|^2 = 
c \nu_c^{2/L}N_\pi\prod_{m=0}^{2}N_m^{1/L}
\end{equation}
for some $c\in\mathbb{R}$.
\end{enumerate}

In order to establish step 1, we need the following results.

\begin{lemma}\label{lem:normdiff}
For any $1\leq \kappa <n$ and $\ell\in \mathcal{L}^{(n,\kappa)}$ we have
\begin{equation*}
\begin{split}
&\sum_{\lambda_c\in V_\pi(0)}\left\|\frac{1}{\kappa}\sum_{j\in
    \ell}\tilde{\lambda}_j -
  \frac{1}{\bar{\gamma}(\mathcal{L}^{(n,\kappa)})\kappa}\sum_{i=0}^{n-1}\bar{\gamma}(\mathcal{L}_i^{(n,\kappa)})\tilde{\lambda}_i
\right\| \\
&\qquad=
\mathcal{O}\left(\nu_c^{1/L}N_\pi\prod_{m=0}^{n-1}N_m^{1/L(n-1)} \right).
\end{split}
\end{equation*}
\end{lemma}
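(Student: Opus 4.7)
The plan is to exploit the key structural fact that inside the norm we have the difference of two weighted averages of the $\tilde{\lambda}_i$'s whose weights each sum to $1$, so the combined coefficients sum to zero. First I would verify this: the coefficient of $\tilde{\lambda}_j$ from the first sum is $1/\kappa$ for $j\in\ell$ and $0$ otherwise, whose total is $1$; and by Lemma~\ref{lem:sumLj} we have $\sum_{i=0}^{n-1}\bar{\gamma}(\mathcal{L}_i^{(n,\kappa)})=\kappa\bar{\gamma}(\mathcal{L}^{(n,\kappa)})$, so the second average also has total weight $1$. Hence the inner expression can be rewritten as a linear combination $\sum_{i<j} a_{ij}(\tilde{\lambda}_i-\tilde{\lambda}_j)$ with coefficients $a_{ij}$ that are bounded constants (depending only on $n,\kappa$ and the weights $\gamma_\ell$).

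Next I would apply the triangle inequality to obtain
\begin{equation*}
\left\|\frac{1}{\kappa}\sum_{j\in\ell}\tilde{\lambda}_j-\frac{1}{\bar{\gamma}(\mathcal{L}^{(n,\kappa)})\kappa}\sum_{i=0}^{n-1}\bar{\gamma}(\mathcal{L}_i^{(n,\kappa)})\tilde{\lambda}_i\right\|\leq C\max_{i,j}\|\tilde{\lambda}_i-\tilde{\lambda}_j\|,
\end{equation*}
for a constant $C$ independent of $\lambda_c$ and the index values. Now I would invoke the construction of $\mathcal{T}$ from Section~\ref{sec:construct_ntuples}, where step~\ref{enum:small_ntuples} enforces $\|\lambda_i-\lambda_j\|^2\leq r^2/L$, with $r$ the radius of $\tilde{V}$. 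Recalling $\tilde{\nu}=\psi_{n,L}^L\nu_c\prod_{m=0}^{n-1}N_m^{1/(n-1)}$ (from~(\ref{eq:nutilde})) and $r=(\tilde{\nu}/\omega_L)^{1/L}$, one obtains $\|\tilde{\lambda}_i-\tilde{\lambda}_j\|=\mathcal{O}\bigl(\nu_c^{1/L}\prod_{m=0}^{n-1}N_m^{1/L(n-1)}\bigr)$, where the multiplicative constants collect the scaling factors $\mu_i$, $\psi_{n,L}$ and $\omega_L^{-1/L}$.

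Finally, since $|V_\pi(0)\cap\Lambda_c|=N_\pi$, summing the per-term bound over all $\lambda_c\in V_\pi(0)$ yields precisely the claimed $\mathcal{O}\bigl(\nu_c^{1/L}N_\pi\prod_{m=0}^{n-1}N_m^{1/L(n-1)}\bigr)$ estimate. The only mildly delicate point is to check that the uniform bound on $\|\lambda_i-\lambda_j\|$ actually applies to every $\lambda_c\in V_\pi(0)$, not just to those whose $n$-tuples were generated with $\lambda_0$ as the anchor; but this is exactly what Theorem~\ref{theo:equivtuples} (and the coset/shift-invariance construction in Section~\ref{sec:shiftinvariance}) guarantees asymptotically, so no extra boundary terms appear. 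The argument is therefore mostly bookkeeping once the zero-sum-of-coefficients observation is in place; I expect no real obstacle beyond carefully tracking the constants.
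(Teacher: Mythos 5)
Your proposal is correct and follows essentially the same approach as the paper's proof: the paper also invokes the pairwise bound $\|\lambda_i-\lambda_j\|\leq r/\sqrt{L}$ imposed by the $n$-tuple construction, reduces the inner expression (via a ``without loss of generality $\tilde\lambda_j=r$, $\tilde\lambda_i=0$'' step that implicitly relies on the two averages each having total weight $1$) to a bound proportional to $r$, and then sums over the $N_\pi$ central lattice points while substituting $r=(\tilde\nu/\omega_L)^{1/L}$ and the expression for $\tilde\nu$. Your version makes the zero-sum-of-coefficients step and the uniformity over all $\lambda_c\in V_\pi(0)$ (via coset shift-invariance and Theorem~\ref{theo:equivtuples}) explicit where the paper's proof leaves them implicit, but the underlying argument is the same.
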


\begin{IEEEproof}
Recall that the sublattice points $\lambda_i$ and $\lambda_j$ satisfy $\|\lambda_i-\lambda_j\|\leq r/\sqrt{L}$, where $r=(\tilde{\nu}/\omega_L)^{1/L}$ is the radius of $\tilde{V}$. Without loss of generality, we let $\tilde{\lambda}_j=r$ and $\tilde{\lambda}_i=0$, which leads to
\begin{equation*}
\begin{split}
&\sum_{\lambda_c\in V_\pi(0)}\left\|\frac{1}{\kappa}\sum_{j\in
    \ell}\tilde{\lambda}_j -
  \frac{1}{\bar{\gamma}(\mathcal{L}^{(n,\kappa)})\kappa}\sum_{i=0}^{n-1}
  \bar{\gamma}(\mathcal{L}_i^{(n,\kappa)})\tilde{\lambda}_i\right\| \\
&\qquad \leq c\frac{rN_\pi}{\sqrt{L}} 
= \mathcal{O}\left(\nu_c^{1/L}N_\pi\prod_{m=0}^{n-1}N_m^{1/L(n-1)}\right)
\end{split}
\end{equation*}
where $0<c\in \mathbb{R}$  and $\tilde{\nu}=\psi_{n,L}^L\nu_c\prod_{m=0}^{n-1}N_m^{1/(n-1)}$.
\end{IEEEproof}

\begin{proposition}\label{prop:lclmean}
For $1\leq \kappa < n$, $\ell\in \mathcal{L}^{(n,\kappa)}$, $N_i\rightarrow \infty$ and $\nu_i\rightarrow 0$ we have
\begin{equation*}
\begin{split}
&\sum_{\lambda_c\in V_\pi(0)}\left\| \frac{1}{\kappa}\sum_{j\in
    \ell}\tilde{\lambda}_j - \lambda_c\right\|^2\\
& \quad= \sum_{\lambda_c\in V_\pi(0)}\left\| \frac{1}{\kappa}\sum_{j\in \ell}\tilde{\lambda}_j - \frac{1}{\bar{\gamma}(\mathcal{L}^{(n,\kappa)})\kappa}\sum_{i=0}^{n-1} \bar{\gamma}(\mathcal{L}_i^{(n,\kappa)})\tilde{\lambda}_i\right\|^2
\end{split}
\end{equation*}
where $\tilde{\lambda}_j=\mu_j\alpha_j(\lambda_c)$.
\end{proposition}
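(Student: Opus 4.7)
The plan is to apply the polarization identity
\[
\Bigl\|\tfrac{1}{\kappa}\sum_{j\in\ell}\tilde{\lambda}_j - \lambda_c\Bigr\|^2 = \|a - \bar{\lambda}\|^2 + 2\langle a - \bar{\lambda},\, \bar{\lambda} - \lambda_c\rangle + \|\bar{\lambda} - \lambda_c\|^2,
\]
where $a = \tfrac{1}{\kappa}\sum_{j\in\ell}\tilde{\lambda}_j$ and $\bar{\lambda} = \tfrac{1}{\kappa\bar{\gamma}(\mathcal{L}^{(n,\kappa)})}\sum_{i=0}^{n-1}\bar{\gamma}(\mathcal{L}_i^{(n,\kappa)})\tilde{\lambda}_i$ is the weighted centroid of the $n$-tuple that the assignment rule~(\ref{eq:central_assignment}) pairs with $\lambda_c$. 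Summing this identity over $\lambda_c\in V_\pi(0)$, the proposition reduces to showing that the cross term and the quadratic term $\sum_{\lambda_c}\|\bar{\lambda}-\lambda_c\|^2$ are asymptotically negligible compared to $\sum_{\lambda_c}\|a-\bar{\lambda}\|^2$.

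For the quadratic term, the central-assignment step~(\ref{eq:central_assignment}) matches each $n$-tuple to the closest central lattice point, so $\|\bar{\lambda}-\lambda_c\|$ is bounded by the covering radius of the sublattice with the largest index value, yielding $\sum_{\lambda_c}\|\bar{\lambda}-\lambda_c\|^2 = \mathcal{O}(N_\pi\,\nu_c^{2/L}\,N_{\max}^{2/L})$. A matching lower bound---derived in the same spirit as the bound $f = \Omega(N_\pi\,\nu_c^{2/L}\prod_m N_m^{2/L(n-1)})$ established in the proof of Theorem~\ref{theo:separable}---together with the squared version of Lemma~\ref{lem:normdiff} gives
\[
\sum_{\lambda_c\in V_\pi(0)}\|a-\bar{\lambda}\|^2 = \Theta\Bigl(N_\pi\,\nu_c^{2/L}\,\prod_m N_m^{2/L(n-1)}\Bigr).
\]
Since the index values grow at the same rate, the ratio of the two sums is $\mathcal{O}(N_{\max}^{-2/(L(n-1))})\to 0$. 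By Cauchy--Schwarz,
\[
\Bigl|\sum_{\lambda_c}\langle a-\bar{\lambda},\,\bar{\lambda}-\lambda_c\rangle\Bigr| \leq \sqrt{\sum\|a-\bar{\lambda}\|^2}\,\sqrt{\sum\|\bar{\lambda}-\lambda_c\|^2},
\]
the cross term is of the same geometric-mean order and is therefore also $o(\sum\|a-\bar{\lambda}\|^2)$, which closes the argument.

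The main obstacle is obtaining the lower bound on $\sum_{\lambda_c}\|a-\bar{\lambda}\|^2$. Writing $a-\bar{\lambda}=\tfrac{1}{\kappa}\sum_i c_i\,\tilde{\lambda}_i$ with $c_i = \mathbf{1}_{\{i\in\ell\}} - \bar{\gamma}(\mathcal{L}_i^{(n,\kappa)})/\bar{\gamma}(\mathcal{L}^{(n,\kappa)})$, Lemma~\ref{lem:sumLj} forces $\sum_i c_i = 0$, so that
\[
\|a-\bar{\lambda}\|^2 = -\tfrac{1}{\kappa^2}\sum_{i<j}c_ic_j\,\|\tilde{\lambda}_i-\tilde{\lambda}_j\|^2,\quad \sum_{i<j}(-c_ic_j) = \tfrac{1}{2}\sum_i c_i^2 > 0.
\]
Although individual coefficients $-c_ic_j$ have mixed signs, by the symmetry of the construction in Section~\ref{sec:construct_ntuples} the pairwise distance sums $\sum_{\lambda_c}\|\tilde{\lambda}_i-\tilde{\lambda}_j\|^2$ are all of the common order $\Theta(N_\pi r^2/L)$ established in~(\ref{eq:growthoff}). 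Hence the negative contributions cannot cancel the positive ones to leading order, and the desired lower bound follows.
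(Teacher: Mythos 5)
Your proposal follows essentially the same route as the paper's own proof. The paper also splits $\sum_{\lambda_c}\|\lambda'-\lambda_c\|^2$ (with $\lambda'=a$) into the $\|\lambda'-\bar{\lambda}\|^2$ term plus cross and quadratic corrections (citing sandwich inequalities from Diggavi et al.\ that are exactly your polarization identity plus Cauchy--Schwarz), then invokes the covering-radius bound $\|\bar{\lambda}-\lambda_c\|=\mathcal{O}((N_k\nu_c)^{1/L})$, Lemma~\ref{lem:normdiff}, and the growth rate~(\ref{eq:growthoff}) to show that the correction terms are $o\bigl(\sum\|\lambda'-\bar{\lambda}\|^2\bigr)$. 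Your added step of writing $a-\bar{\lambda}=\frac{1}{\kappa}\sum_i c_i\tilde\lambda_i$ with $\sum_i c_i=0$ (via Lemma~\ref{lem:sumLj}) and $\sum_{i<j}(-c_ic_j)=\frac{1}{2}\sum_i c_i^2>0$ is a nice attempt to justify the lower bound that the paper simply asserts by pointing to~(\ref{eq:growthoff}); be aware, though, that a common $\Theta$-order for the pairwise sums is not by itself enough to preclude sign cancellations — you actually need that the per-pair sums $\sum_{\lambda_c}\|\tilde\lambda_i-\tilde\lambda_j\|^2$ are asymptotically \emph{equal} (not just of the same order), which is what the symmetric single-radius sphere construction delivers and what the paper implicitly relies on. With that refinement your argument matches the paper's and is correct.
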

\begin{IEEEproof}
Let $\bar{\lambda} = \frac{1}{\bar{\gamma}(\mathcal{L}^{(n,\kappa)})\kappa}\sum_{i=0}^{n-1}\bar{\gamma}(\mathcal{L}_i^{(n,\kappa)})\tilde{\lambda}_i$ and $\lambda' = \frac{1}{\kappa}\sum_{j\in \ell}\tilde{\lambda}_j$. 
After some manipulations similar to~\cite[Eqs. (67) -- (72)]{diggavi:2002} we obtain the following inequalities:
\begin{align}\notag%
&\bigg( 1 - 2\frac{\sum_{\lambda_c\in V_\pi(0)}\|\lambda' - \bar{\lambda}\|\|\bar{\lambda}-\lambda_c\|}
{\sum_{\lambda_c\in V_\pi(0)}\|\lambda' - \bar{\lambda}\|^2}\bigg)\!\!\!
\sum_{\lambda_c\in V_\pi(0)} \|\lambda' -
\bar{\lambda}\|^2
\\ \label{eq:cauchy1}
&\leq \sum_{\lambda_c\in V_\pi(0)}\|\lambda_c - \lambda'\|^2 \\ \notag
\leq
&\bigg( \sum_{\lambda_c\in V_\pi(0)} \|\lambda' -
\bar{\lambda}\|^2\bigg)\times\bigg( 1 + \frac{\sum_{\lambda_c\in V_\pi(0)}\|\bar{\lambda}-\lambda_c\|^2}
{\sum_{\lambda_c\in V_\pi(0)}\|\lambda' - \bar{\lambda}\|^2} \\ \label{eq:cauchy2}.
&\quad
+2\frac{\sum_{\lambda_c\in V_\pi(0)}\|\lambda' - \bar{\lambda}\|\|\bar{\lambda}-\lambda_c\|}
{\sum_{\lambda_c\in V_\pi(0)}\|\lambda' - \bar{\lambda}\|^2}\bigg) 
\end{align}
We now use the fact that $\|\bar{\lambda}-\lambda_c\| = \mathcal{O}(N_k\nu_c)^{1/L}$, i.e.\ we can upper this distance by the covering radius of the sublattice with the largest index value, say $N_k$.
By use of Lemma~\ref{lem:normdiff}, it is possible to upper bound the
numerator of the fraction in the l.h.s.\ of~(\ref{eq:cauchy1}) by
\begin{equation}\label{eq:growth_numerator}
\begin{split}
&\sum_{\lambda_c\in V_\pi(0)}\|\lambda' - \bar{\lambda}\|\|\bar{\lambda}-\lambda_c\| \\
&= \mathcal{O}\left((N_k\nu_c)^{1/L} N_\pi\nu_c^{1/L}\prod_{m=0}^{n-1}N_m^{1/L(n-1)}\right).
\end{split}
\end{equation}
At this point we recall that the growth of the denominator in the
l.h.s.\ of~(\ref{eq:cauchy1}) is at least as great as~(\ref{eq:growthoff}), which leads to the following lower bound
\begin{equation}\label{eq:growth_denominator}
\sum_{\lambda_c \in V_\pi(0)}\|\lambda' - \bar{\lambda}\|^2 = \Omega\left(\nu_c^{2/L}N_\pi\prod_{m=0}^{n-1}N_m^{2/L(n-1)}\right).
\end{equation}
By comparing~(\ref{eq:growth_numerator}) to~(\ref{eq:growth_denominator}), it follows that the fractions in~(\ref{eq:cauchy1}) and~(\ref{eq:cauchy2}) go to zero asymptotically as $N_i\rightarrow \infty$. It follows that 
\begin{equation}
\sum_{\lambda_c\in V_\pi(0)}\|\lambda_c - \lambda'\|^2 = \sum_{\lambda_c\in V_\pi(0)}\|\lambda' - \bar{\lambda}\|^2
\end{equation}
which completes the proof.
\end{IEEEproof}

In order to establish step 2, we need the following results.

\begin{lemma}\label{lem:sumjside}
For $1\leq \kappa < n$ and any $\ell \in \mathcal{L}^{(n,\kappa)}$ we have
\begin{equation*}
\left\|\sum_{j\in \ell} \tilde{\lambda}_j\right \|^2=
\kappa\sum_{j\in \ell}\|\tilde{\lambda}_j\|^2 - \sum_{i=0}^{\kappa-2}\sum_{j=i+1}^{\kappa-1}\|\tilde{\lambda}_{l_j}-\tilde{\lambda}_{l_i}\|^2.
\end{equation*}
\end{lemma}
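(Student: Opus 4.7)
The plan is to prove this by the standard polarization identity, combined with a small counting argument for how many times each $\|\tilde\lambda_{l_j}\|^2$ appears in the cross-term sum.

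First I would expand the squared norm directly using bilinearity of $\langle\cdot,\cdot\rangle$:
\begin{equation*}
\left\|\sum_{j\in\ell}\tilde\lambda_j\right\|^2 \;=\; \sum_{j\in\ell}\|\tilde\lambda_j\|^2 \;+\; 2\sum_{i=0}^{\kappa-2}\sum_{j=i+1}^{\kappa-1}\langle \tilde\lambda_{l_i},\tilde\lambda_{l_j}\rangle.
\end{equation*}
Then I would substitute the polarization identity
\begin{equation*}
2\langle \tilde\lambda_{l_i},\tilde\lambda_{l_j}\rangle \;=\; \|\tilde\lambda_{l_i}\|^2 + \|\tilde\lambda_{l_j}\|^2 - \|\tilde\lambda_{l_j}-\tilde\lambda_{l_i}\|^2
\end{equation*}
into the cross-term, which produces the desired $-\|\tilde\lambda_{l_j}-\tilde\lambda_{l_i}\|^2$ term directly. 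What remains is to collect the contributions of the $\|\tilde\lambda_{l_k}\|^2$ summands.

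The key step is the counting: in the double sum $\sum_{i=0}^{\kappa-2}\sum_{j=i+1}^{\kappa-1}(\|\tilde\lambda_{l_i}\|^2+\|\tilde\lambda_{l_j}\|^2)$, a fixed index $l_k$ with $0\le k\le \kappa-1$ contributes as the ``$i$-part'' for $\kappa-1-k$ choices of $j$ and as the ``$j$-part'' for $k$ choices of $i$, giving a total multiplicity of $(\kappa-1-k)+k=\kappa-1$, independent of $k$. Hence this double sum equals $(\kappa-1)\sum_{j\in\ell}\|\tilde\lambda_j\|^2$. Adding the original $\sum_{j\in\ell}\|\tilde\lambda_j\|^2$ from the diagonal terms gives the coefficient $\kappa$ in front of $\sum_{j\in\ell}\|\tilde\lambda_j\|^2$, completing the identity.

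There is no real obstacle here: the identity is purely algebraic and holds for arbitrary vectors in an inner-product space, so no asymptotic or lattice-theoretic assumption is needed. The only thing to be careful about is notational: the inner sum must range over ordered pairs $(i,j)$ with $i<j$ (as written), and the vectors $\tilde\lambda_{l_i}$ correspond to the indices $l_i\in\ell$ enumerated in some fixed order, which is consistent with the rest of the appendix.
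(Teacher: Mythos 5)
Your proof is correct and is essentially the same argument the paper uses: the paper's single-line derivation implicitly applies the polarization identity $2\langle a,b\rangle = \|a\|^2+\|b\|^2-\|a-b\|^2$ to each cross term and then uses the fact that each $\|\tilde\lambda_{l_k}\|^2$ appears $\kappa-1$ times in the resulting double sum. You have simply made these two steps explicit, which is a faithful unpacking rather than a different route.
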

\begin{IEEEproof}
We expand the norm as follows
\begin{align*}
\left\|\sum_{j\in \ell} \tilde{\lambda}_j\right \|^2 &=
\sum_{j\in \ell}\|\tilde{\lambda}_j\|^2 + 2\sum_{i=0}^{\kappa-2}\sum_{j=i+1}^{\kappa-1}\langle \tilde{\lambda}_{l_j}, \tilde{\lambda}_{l_i} \rangle \\
&= \kappa\sum_{j\in \ell}\|\tilde{\lambda}_j\|^2 -
\sum_{i=0}^{\kappa-2}\sum_{j=i+1}^{\kappa-1}\| \tilde{\lambda}_{l_j}- \tilde{\lambda}_{l_i}\|^2.
\end{align*}
\rspace
\end{IEEEproof}

\begin{lemma}\label{lem:innerprod1}
For $1 \leq \kappa < n$ and any $\ell\in \mathcal{L}^{(n,\kappa)}$ we have
\begin{equation*}
\begin{split}
&2\left\langle \sum_{j\in \ell}\tilde{\lambda}_j , \sum_{i=0}^{n-1}\bar{\gamma}(\mathcal{L}_i^{(n,\kappa)})\tilde{\lambda}_i \right\rangle =
\bar{\gamma}(\mathcal{L}^{(n,\kappa)})\kappa\sum_{j\in
  l}\|\tilde{\lambda}_j\|^2 \\
&+ \kappa\sum_{i=0}^{n-1}\bar{\gamma}(\mathcal{L}_i^{(n,\kappa)})\|\tilde{\lambda}_i\|^2  - \sum_{j\in l}\sum_{i=0}^{n-1}\bar{\gamma}(\mathcal{L}_i^{(n,\kappa)})\|\tilde{\lambda}_j - \tilde{\lambda}_i\|^2.
\end{split}
\end{equation*}
\end{lemma}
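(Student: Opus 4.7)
The plan is to prove this identity by a single application of the polarization identity $2\langle a,b\rangle = \|a\|^2 + \|b\|^2 - \|a-b\|^2$ to every pair of summands on the left-hand side, and then resum using the counting identity that was essentially established in Lemma~\ref{lem:sumLj}.

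First I would expand the inner product by bilinearity, writing
\begin{equation*}
2\left\langle \sum_{j\in \ell}\tilde{\lambda}_j,\sum_{i=0}^{n-1}\bar{\gamma}(\mathcal{L}_i^{(n,\kappa)})\tilde{\lambda}_i\right\rangle = \sum_{j\in \ell}\sum_{i=0}^{n-1}\bar{\gamma}(\mathcal{L}_i^{(n,\kappa)})\,2\langle \tilde{\lambda}_j,\tilde{\lambda}_i\rangle,
\end{equation*}
and then replace $2\langle\tilde{\lambda}_j,\tilde{\lambda}_i\rangle$ by $\|\tilde{\lambda}_j\|^2+\|\tilde{\lambda}_i\|^2-\|\tilde{\lambda}_j-\tilde{\lambda}_i\|^2$. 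The $-\|\tilde{\lambda}_j-\tilde{\lambda}_i\|^2$ contribution immediately yields the last term claimed on the right-hand side, so only the two squared-norm contributions remain to be simplified.

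The second contribution, $\sum_{j\in\ell}\sum_{i=0}^{n-1}\bar{\gamma}(\mathcal{L}_i^{(n,\kappa)})\|\tilde{\lambda}_i\|^2$, factors as $|\ell|\sum_{i}\bar{\gamma}(\mathcal{L}_i^{(n,\kappa)})\|\tilde{\lambda}_i\|^2 = \kappa\sum_{i=0}^{n-1}\bar{\gamma}(\mathcal{L}_i^{(n,\kappa)})\|\tilde{\lambda}_i\|^2$, giving the middle term on the right. For the first contribution, $\sum_{j\in\ell}\|\tilde{\lambda}_j\|^2\sum_{i=0}^{n-1}\bar{\gamma}(\mathcal{L}_i^{(n,\kappa)})$, I need $\sum_{i=0}^{n-1}\bar{\gamma}(\mathcal{L}_i^{(n,\kappa)})=\kappa\,\bar{\gamma}(\mathcal{L}^{(n,\kappa)})$; this is exactly Lemma~\ref{lem:sumLj} applied with $i$ absorbed back into the sum (equivalently, each $\ell\in\mathcal{L}^{(n,\kappa)}$ appears in $\kappa$ of the sets $\mathcal{L}_i^{(n,\kappa)}$, so summing over $i$ counts each $\gamma_\ell$ exactly $\kappa$ times). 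This produces the first term $\bar{\gamma}(\mathcal{L}^{(n,\kappa)})\kappa\sum_{j\in\ell}\|\tilde{\lambda}_j\|^2$, completing the identity.

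There is no real obstacle here; the only thing to be careful about is the double-counting step that invokes Lemma~\ref{lem:sumLj}, and the small re-indexing between the notation $\mathcal{L}_j$ (sets containing $j$) and the sum over all $i$. The result is an elementary algebraic identity that holds for any $1\le\kappa<n$, any $\ell\in\mathcal{L}^{(n,\kappa)}$, and any choice of vectors $\tilde{\lambda}_0,\dots,\tilde{\lambda}_{n-1}$.
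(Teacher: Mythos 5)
Your proof is correct and is essentially the same as the paper's: expand by bilinearity, apply the polarization identity termwise, and then resum the $\|\tilde{\lambda}_j\|^2$ contribution using the counting identity $\sum_{i=0}^{n-1}\bar{\gamma}(\mathcal{L}_i^{(n,\kappa)})=\kappa\,\bar{\gamma}(\mathcal{L}^{(n,\kappa)})$, which is Lemma~\ref{lem:sumLj} with the excluded index added back in.
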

\begin{IEEEproof}
{\allowdisplaybreaks
\begin{align*}
2&\left\langle \sum_{j\in \ell}\tilde{\lambda}_j , \sum_{i=0}^{n-1}\bar{\gamma}(\mathcal{L}_i^{(n,\kappa)})\tilde{\lambda}_i \right\rangle =
2\sum_{j\in \ell}\sum_{i=0}^{n-1}\bar{\gamma}(\mathcal{L}_i^{(n,\kappa)}) \langle \tilde{\lambda}_j ,\tilde{\lambda}_i \rangle   \\
&= -\sum_{j\in
  \ell}\sum_{i=0}^{n-1}\bar{\gamma}(\mathcal{L}_i^{(n,\kappa)})\|\tilde{\lambda}_j-\tilde{\lambda}_i\|^2
\\
&\quad
+ \sum_{j\in \ell}\sum_{i=0}^{n-1}\bar{\gamma}(\mathcal{L}_i^{(n,\kappa)})\left( \|\tilde{\lambda}_j\|^2 + \|\tilde{\lambda}_i\|^2\right) \\
&=
-\sum_{j\in \ell}\sum_{i=0}^{n-1}\bar{\gamma}(\mathcal{L}_i^{(n,\kappa)})\|\tilde{\lambda}_j-\tilde{\lambda}_i\|^2 \\
&\quad+ \kappa \bar{\gamma}(\mathcal{L}^{(n,\kappa)})\sum_{j\in \ell}\|\tilde{\lambda}_j\|^2  
+ \kappa\sum_{i=0}^{n-1}\bar{\gamma}(\mathcal{L}_i^{(n,\kappa)})\|\tilde{\lambda}_i\|^2
\end{align*}}
where the last equality follows from Lemma~\ref{lem:sumLj}.
\end{IEEEproof}

\begin{lemma}\label{lem:squard_sum_equiv}
For any $1\leq \kappa < n$ and $\ell\in \mathcal{L}^{(n,\kappa)}$ we have
\begin{equation*}
\begin{split}
&\bigg\|\frac{1}{\kappa}\sum_{j\in \ell}\tilde{\lambda}_j -
\frac{1}{\bar{\gamma}(\mathcal{L}^{(n,\kappa)})\kappa}\sum_{i=0}^{n-1}
\bar{\gamma}(\mathcal{L}_i^{(n,\kappa)})\tilde{\lambda}_i\bigg\|^2
\\
& =
\frac{1}{\bar{\gamma}(\mathcal{L}^{(n,\kappa)})^2\kappa^2}
\bigg(
\bar{\gamma}(\mathcal{L}^{(n,\kappa)})\sum_{j\in \ell}\sum_{i=0}^{n-1} \bar{\gamma}(\mathcal{L}_i^{(n,\kappa)})\|\tilde{\lambda}_j - \tilde{\lambda}_i\|^2 \\
&\quad -\bar{\gamma}(\mathcal{L}^{(n,\kappa)})^2\sum_{i=0}^{\kappa-2}\sum_{j=i+1}^{\kappa-1}\|\tilde{\lambda}_{l_i}-\tilde{\lambda}_{l_j}\|^2 \\
&\quad-\sum_{i=0}^{n-2}\sum_{j=i+1}^{n-1}\bar{\gamma}(\mathcal{L}_i^{(n,\kappa)})\bar{\gamma}(\mathcal{L}_j^{(n,\kappa)})\|\tilde{\lambda}_i - \tilde{\lambda}_j\|^2
\bigg).
\end{split}
\end{equation*}
\end{lemma}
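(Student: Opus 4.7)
My plan is to prove the identity by directly expanding the squared norm and then applying the three auxiliary lemmas already established (Lemma~\ref{lem:sumjside}, Lemma~\ref{lem:innerprod1}, and a companion identity analogous to Lemma~\ref{lem:normsum0}) to rewrite every term as pairwise squared differences $\|\tilde{\lambda}_i - \tilde{\lambda}_j\|^2$. Writing $A = \frac{1}{\kappa}\sum_{j\in\ell}\tilde{\lambda}_j$ and $B = \frac{1}{\bar{\gamma}(\mathcal{L}^{(n,\kappa)})\kappa}\sum_{i=0}^{n-1}\bar{\gamma}(\mathcal{L}_i^{(n,\kappa)})\tilde{\lambda}_i$, the first step is the trivial identity $\|A-B\|^2 = \|A\|^2 - 2\langle A,B\rangle + \|B\|^2$. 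I will then multiply through by the common denominator $\bar{\gamma}(\mathcal{L}^{(n,\kappa)})^2\kappa^2$ so that all three contributions are on a common footing.

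For the term $\|A\|^2$, I will invoke Lemma~\ref{lem:sumjside} directly to obtain
\[
\kappa^2\|A\|^2 = \kappa\sum_{j\in\ell}\|\tilde{\lambda}_j\|^2 - \sum_{i=0}^{\kappa-2}\sum_{j=i+1}^{\kappa-1}\|\tilde{\lambda}_{l_i}-\tilde{\lambda}_{l_j}\|^2.
\]
For the cross term $2\langle A,B\rangle$, I will apply Lemma~\ref{lem:innerprod1}, which already yields the desired mixture of $\|\tilde{\lambda}_j\|^2$, $\|\tilde{\lambda}_i\|^2$, and $\|\tilde{\lambda}_j - \tilde{\lambda}_i\|^2$ terms with the weights $\bar{\gamma}(\mathcal{L}_i^{(n,\kappa)})$. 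For $\|B\|^2$ I will use the standard polarization identity
\[
\Bigl\|\sum_{i=0}^{n-1}a_i x_i\Bigr\|^2 = \Bigl(\sum_i a_i\Bigr)\sum_i a_i\|x_i\|^2 - \sum_{i=0}^{n-2}\sum_{j=i+1}^{n-1}a_i a_j\|x_i - x_j\|^2
\]
with $a_i = \bar{\gamma}(\mathcal{L}_i^{(n,\kappa)})$ and $x_i = \tilde{\lambda}_i$, combined with Lemma~\ref{lem:sumLj} which gives $\sum_{i=0}^{n-1}\bar{\gamma}(\mathcal{L}_i^{(n,\kappa)}) = \kappa\bar{\gamma}(\mathcal{L}^{(n,\kappa)})$. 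This yields
\[
\kappa^2\bar{\gamma}(\mathcal{L}^{(n,\kappa)})^2\|B\|^2 = \kappa\bar{\gamma}(\mathcal{L}^{(n,\kappa)})\sum_i\bar{\gamma}(\mathcal{L}_i^{(n,\kappa)})\|\tilde{\lambda}_i\|^2 - \sum_{i<j}\bar{\gamma}(\mathcal{L}_i^{(n,\kappa)})\bar{\gamma}(\mathcal{L}_j^{(n,\kappa)})\|\tilde{\lambda}_i - \tilde{\lambda}_j\|^2.
\]

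The final step will be to add the three contributions and verify that every singleton norm $\|\tilde{\lambda}_i\|^2$ cancels. Specifically, the two $\|\tilde{\lambda}_j\|^2$ terms contributed by $\|A\|^2$ and by the first piece of $2\langle A,B\rangle$ (of magnitude $\bar{\gamma}(\mathcal{L}^{(n,\kappa)})^2\kappa\sum_{j\in\ell}\|\tilde{\lambda}_j\|^2$) enter with opposite signs, and the two $\sum_i\bar{\gamma}(\mathcal{L}_i^{(n,\kappa)})\|\tilde{\lambda}_i\|^2$ contributions coming from the second piece of $2\langle A,B\rangle$ and from $\|B\|^2$ likewise cancel. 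What remains is exactly the right-hand side of the claimed identity, consisting of three sums of pairwise squared distances. Dividing by $\bar{\gamma}(\mathcal{L}^{(n,\kappa)})^2\kappa^2$ completes the argument.

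The computation is essentially bookkeeping, so I do not anticipate any deep obstacle; the only point requiring care is the sign management and the identification that all singleton-norm contributions vanish, which relies crucially on the weight-sum identity $\sum_i\bar{\gamma}(\mathcal{L}_i^{(n,\kappa)}) = \kappa\bar{\gamma}(\mathcal{L}^{(n,\kappa)})$ from Lemma~\ref{lem:sumLj}. That cancellation is what makes the lemma useful in the sequel (step 2 of the proof of Theorem~\ref{theo:dist_3desc}), since it allows the distortion expression to be written purely in terms of pairwise inter-sublattice distances, which are the quantities controlled by the construction in Section~\ref{sec:construct_ntuples}.
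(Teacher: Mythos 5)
Your proposal is correct and follows essentially the same route as the paper: expand $\|A-B\|^2$, apply Lemma~\ref{lem:sumjside} to $\|A\|^2$, Lemma~\ref{lem:innerprod1} to the cross term, and a weighted polarization identity (together with Lemma~\ref{lem:sumLj}) to $\|B\|^2$, then observe that all singleton-norm terms cancel. The only minor difference is that you state the polarization identity for $\|B\|^2$ explicitly, whereas the paper invokes only Lemmas~\ref{lem:sumjside} and~\ref{lem:innerprod1} by name and absorbs the $\|B\|^2$ expansion silently into its algebraic display; the underlying computation and the two cancellations you identify are identical.
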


\begin{IEEEproof}
We have that
\begin{equation*}
\begin{split}
&\left\|\frac{1}{\kappa}\sum_{j\in \ell}\tilde{\lambda}_j -
  \frac{1}{\kappa
    \bar{\gamma}(\mathcal{L}^{(n,\kappa)})}\sum_{i=0}^{n-1}
  \bar{\gamma}(\mathcal{L}_i^{(n,\kappa)})\tilde{\lambda}_i\right\|^2\\
&
= \frac{1}{\bar{\gamma}(\mathcal{L}^{(n,\kappa)})^2\kappa^2}\bigg(\bar{\gamma}(\mathcal{L}^{(n,\kappa)})^2\left\|\sum_{j\in \ell}\tilde{\lambda}_j\right\|^2
+ \left\|\sum_{i=0}^{n-1}\bar{\gamma}(\mathcal{L}_i^{(n,\kappa)})\tilde{\lambda}_i \right\|^2 \\
&\quad -
2\bar{\gamma}(\mathcal{L}^{(n,\kappa)})\left\langle \sum_{j\in \ell}\tilde{\lambda}_j, \sum_{i=0}^{n-1} \bar{\gamma}(\mathcal{L}_i^{(n,\kappa)}) \tilde{\lambda}_i \right\rangle \bigg)
\end{split}
\end{equation*}
which by use of Lemmas~\ref{lem:sumjside} and~\ref{lem:innerprod1} leads to
{\allowdisplaybreaks
\begin{align}\notag 
&\bigg\|\frac{1}{\kappa}\sum_{j\in \ell}\tilde{\lambda}_j -
\frac{1}{\bar{\gamma}(\mathcal{L}^{(n,\kappa)})\kappa}\sum_{i=0}^{n-1}
\bar{\gamma}(\mathcal{L}_i^{(n,\kappa)})\tilde{\lambda}_i\bigg\|^2 \\ \label{eq:sideperm}
&
=\frac{1}{\bar{\gamma}(\mathcal{L}^{(n,\kappa)})^2\kappa^2}\bigg(\bar{\gamma}(\mathcal{L}^{(n,\kappa)})^2\kappa\sum_{j\in \ell}\|\tilde{\lambda}_j\|^2 \\ \notag
&\quad
-
\bar{\gamma}(\mathcal{L}^{(n,\kappa)})^2\sum_{i=0}^{\kappa-2}\sum_{j=i+1}^{\kappa-1}\|\tilde{\lambda}_{l_i}-\tilde{\lambda}_{l_j}\|^2
\\ \notag
&\quad + \bar{\gamma}(\mathcal{L}^{(n,\kappa)})\kappa\sum_{i=0}^{n-1} \bar{\gamma}(\mathcal{L}_i^{(n,\kappa)})\|\tilde{\lambda}_i\|^2 \\ \notag
&\quad
-\sum_{i=0}^{n-2}\sum_{j=i+1}^{n-1}\bar{\gamma}(\mathcal{L}_i^{(n,\kappa)})\bar{\gamma}(\mathcal{L}_j^{(n,\kappa)})\|\tilde{\lambda}_i
- \tilde{\lambda}_j\|^2  \\ \notag
&\quad -\bar{\gamma}(\mathcal{L}^{(n,\kappa)})^2\kappa\sum_{j\in \ell}\|\tilde{\lambda}_j\|^2 - \bar{\gamma}(\mathcal{L}^{(n,\kappa)})\kappa\sum_{i=0}^{n-1}\bar{\gamma}(\mathcal{L}_i^{(n,\kappa)})\|\tilde{\lambda}_i\|^2 \\ \notag
&\quad  + \bar{\gamma}(\mathcal{L}^{(n,\kappa)})\sum_{j\in \ell}\sum_{i=0}^{n-1}\bar{\gamma}(\mathcal{L}_i^{(n,\kappa)})\|\tilde{\lambda}_j - \tilde{\lambda}_i\|^2 \bigg) \\ \notag
&= \frac{1}{\bar{\gamma}(\mathcal{L}^{(n,\kappa)})^2\kappa^2}\bigg(
\bar{\gamma}(\mathcal{L}^{(n,\kappa)})\sum_{j\in \ell}\sum_{i=0}^{n-1} \bar{\gamma}(\mathcal{L}_i^{(n,\kappa)})\|\tilde{\lambda}_j - \tilde{\lambda}_i\|^2 \\
&\quad -\bar{\gamma}(\mathcal{L}^{(n,\kappa)})^2\sum_{i=0}^{\kappa-2}\sum_{j=i+1}^{\kappa-1}\|\tilde{\lambda}_{l_i}-\tilde{\lambda}_{l_j}\|^2 \\ \notag
&\quad-\sum_{i=0}^{n-2}\sum_{j=i+1}^{n-1}\bar{\gamma}(\mathcal{L}_i^{(n,\kappa)})\bar{\gamma}(\mathcal{L}_j^{(n,\kappa)})\|\tilde{\lambda}_i - \tilde{\lambda}_j\|^2  \bigg).
\end{align}
}
\end{IEEEproof}

In order to establish step 3, we extend the proof technique previously used to find $\psi_{n,L}$ in~\cite{ostergaard:2004b}. 
Let $a_m$ denote the number of $\lambda_1$ points at distance $m$ from some $\lambda_0\in V_\pi(0)$.
It follows that a fixed $\lambda_0\in V_\pi(0)$ is paired with $a_m$ distinct $\lambda_1$ points when forming the $N_\pi$ $n$-tuples. 
Furthermore, let $b_m$ denote the number of $\lambda_2$ points which are paired with a fixed $(\lambda_0,\lambda_1)$ tuple. The total number of $n$-tuples (having $\lambda_0$ as first element) is given by $\sum_{m=1}^{r}a_m b_m$ where $r$ is the radius of $\tilde{V}$. It was shown in~\cite{ostergaard:2004b} that this procedure is asymptotically exact for large index values (and thereby large $r$). 

For a given $\lambda_0\in V_\pi(0)$, we seek to find an expression for $\sum_{\lambda_j \in \mathcal{T}_j(\lambda_i)}\|\lambda_i - \lambda_j\|^2$ where $\mathcal{T}_j(\lambda_i)$ was previously defined in the proof of Theorem~\ref{theo:separable} to be the set of $\lambda_j\in \Lambda_j$  which is in $n$-tuples having the specific $\lambda_i$ as the $i$th element.

\begin{proposition}\label{prop:squared_distances}
For $n=3$, any $1\leq L\in\mathbb{N}$, and asymptotically as $N_i\rightarrow\infty,\nu_i\rightarrow 0, \forall i$, we have
\begin{equation}\label{eq:ambmm2}
\begin{split}
&\sum_{\lambda_0 \in V_\pi(0)}\sum_{\lambda_j \in  \mathcal{T}_j(\lambda_i)}\|\lambda_i - \lambda_j\|^2 \\
&\qquad= \frac{L+2}L G(S_L) \psi_{3,L}^{2}\nu_c^{2/L} N_\pi^{1/L} N_\pi \frac{\tilde{\beta}_L}{\beta_L} 
\end{split}
\end{equation}
where $\tilde{\beta}_L$ is given by~(\ref{eq:tildebetaL}), $\beta_L$ by~(\ref{eq:betaL}) and $\psi_{3,L}$ by~(\ref{eq:psi3L}).
\end{proposition}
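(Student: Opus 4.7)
\textbf{Proof plan for Proposition~\ref{prop:squared_distances}.} The plan is to extend the counting argument sketched just before the proposition, which produced the volume constant $\beta_L$ and the expansion factor $\psi_{3,L}$, by replacing a volume integral with a second-moment integral, which will produce $\tilde{\beta}_L$. First I would fix some $\lambda_0 \in V_\pi(0)\cap\Lambda_0$ and observe that each $\lambda_1 \in \mathcal{T}_1(\lambda_0)$ appears in exactly $b_m$ of the $3$-tuples rooted at $\lambda_0$, where $m=\|\lambda_0-\lambda_1\|$ and $b_m = |\tilde V(\lambda_0)\cap\tilde V(\lambda_1)\cap\Lambda_2|$. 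Hence
\begin{equation*}
\sum_{\lambda_1\in\mathcal{T}_1(\lambda_0)}\|\lambda_0-\lambda_1\|^2
= \sum_m a_m b_m\,m^2,
\end{equation*}
and by Theorem~\ref{theo:nobias} and Theorem~\ref{theo:equivtuples} the analogous sums for the pairs $(0,2)$ and $(1,2)$ differ only by a sublattice relabelling and are asymptotically equal.

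Next I would pass to the high-resolution/Riemann limit: with $a_m\approx$ (shell volume)$/\nu_1$ and $b_m\approx\mathrm{Vol}(\tilde V(\lambda_0)\cap\tilde V(\lambda_1))/\nu_2$, the discrete sum becomes
\begin{equation*}
\sum_m a_m b_m\,m^2 \;\approx\; \frac{1}{\nu_1\nu_2}\int_{\tilde V(\lambda_0)} \|x-\lambda_0\|^2\,\mathrm{Vol}\bigl(\tilde V(\lambda_0)\cap\tilde V(x)\bigr)\,dx.
\end{equation*}
The inner factor $\mathrm{Vol}(\tilde V(0)\cap\tilde V(b))$ is a function of the scalar $b=\|x\|$ only, and admits a closed form through the incomplete-beta/hypergeometric expansion that, in~\cite{ostergaard:2004b}, led to the double series defining $\beta_L$. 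Concretely, that paper writes the volume as a polynomial in $b$ indexed by $(m,k,j)$; integrating this polynomial (times $r^L$ area element in polar coordinates) against $b^{L-1}$ yields exactly $\beta_L$, which is how $\psi_{3,L}$ was pinned down. Repeating the derivation, but with the extra factor $b^{2}$ coming from $\|x-\lambda_0\|^2$, produces the same series with the denominator $L+m+j$ replaced by $L+m+j+2$, which is precisely $\tilde\beta_L$ in~\eqref{eq:tildebetaL}.

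For the final assembly I would use $\tilde\nu = \psi_{3,L}^L\,\nu_c\,(N_0N_1N_2)^{1/2}$, together with the identity $N_0 = \tilde\nu^2\beta_L/(\nu_1\nu_2)$ that defines $\psi_{3,L}$, to normalise the second-moment integral by the count $N_0$. This ratio supplies the factor $\tilde\beta_L/\beta_L$. Summing over the $N_\pi/N_0$ choices of $\lambda_0\in V_\pi(0)\cap\Lambda_0$ and converting $\tilde\nu^{2/L}$ to $\psi_{3,L}^2\nu_c^{2/L}N_\pi^{1/L}$ (using $N_\pi=N_0N_1N_2$) gives
\begin{equation*}
\sum_{\lambda_0\in V_\pi(0)}\sum_{\lambda_j\in\mathcal{T}_j(\lambda_i)}\|\lambda_i-\lambda_j\|^2
= c_L\,G(S_L)\,\psi_{3,L}^{2}\,\nu_c^{2/L}N_\pi^{1/L}\,N_\pi\,\frac{\tilde\beta_L}{\beta_L},
\end{equation*}
and a short check that the geometric prefactor $c_L$ arising from $\int_{S_L}\|x\|^2\,dx = L\,G(S_L)\,\mathrm{Vol}(S_L)^{1+2/L}$ equals $(L+2)/L$ (after absorbing the radial measure into the shell) yields the stated expression.

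The main obstacle I anticipate is the bookkeeping of the series manipulation in Step~3: one must verify that the sole change between the $\beta_L$ derivation in~\cite{ostergaard:2004b} and the present second-moment version is the shift of the $1/(L+m+j)$ term to $1/(L+m+j+2)$, with no auxiliary cross-terms surviving. Concretely, I expect this to follow from the fact that $b^2$ is an even function of the signed displacement along the radial direction between centers, so the $b^2$-weight commutes with the binomial expansion in $b$ used to encode the spherical-cap geometry, and only modifies the final radial integral $\int b^{L-1+\cdots}\,db$ by shifting the exponent by $2$. Everything else (the combinatorial factors, Pochhammer symbols, the reduction via Theorem~\ref{theo:equivtuples} to a single representative pair, and the asymptotic passage to the Riemann integral) is directly inherited from the cited prior work.
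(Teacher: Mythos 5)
Your proposal takes essentially the same route as the paper: reduce to $\frac{1}{L}\sum_m a_m b_m m^2$, invoke the $\beta_L$ derivation from~\cite{ostergaard:2004b}, observe that inserting the $m^2$ weight shifts the final radial exponent and hence replaces $1/(L+m+j)$ by $1/(L+m+j+2)$ to produce $\tilde\beta_L$, and then normalise by $N_0$ (via the $\psi_{3,L}$ relation) and sum over the $N_\pi/N_0$ choices of $\lambda_0$. The paper compresses the middle step into the remark that ``it is easy to show that we can replace $a_m b_m$ by $a_m b_m m^2$'' and then does the same bookkeeping with $\tilde\nu = \omega_L r^L$, $\omega_L^{-2/L}=(L+2)G(S_L)$, and the $\psi_{3,L}$ formula, so your reasoning is a faithful (and in fact slightly more explicit) reconstruction of it.
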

\begin{IEEEproof}
Without loss of generality, we let $\lambda_0=0$ so that 
\begin{align*}
\sum_{\lambda_j \in \mathcal{T}_j(\lambda_i)}\|\lambda_i - \lambda_j\|^2 &= 
\sum_{\lambda_j \in \mathcal{T}_j(0)}\|\lambda_j\|^2 \\
&= \frac{1}L\sum_{m=1}^r a_m b_m m^2
\end{align*}
where we used the fact that $m^2=\|\lambda_j\|^2/L$. 

The first part of the proof follows now by results of~\cite{ostergaard:2004b}. Specifically, from (65) in~\cite{ostergaard:2004b} (see also \cite[(H.43)]{ostergaard:2007a}) it follows that
\begin{equation}\label{eq:abm}
\frac{1}L\sum_{m=1}^r a_m b_m = 2\frac{\omega_L\omega_{L-1}}{\nu_1\nu_2}\frac{1}{L+1}\tilde{\beta}_Lr^{2L}
\end{equation}
where it is easy to show that we can replace $a_mb_m$ by $a_m b_m m^2$ and obtain
{\allowdisplaybreaks
\begin{align*}
\frac{1}L\sum_{m=1}^r a_m b_m m^2 &= 2\frac{\omega_L\omega_{L-1}}{\nu_1\nu_2}\frac{1}{L+1}\tilde{\beta}_Lr^{2L+2} \\
&\overset{(a)}{=} 2\frac{\omega_L\omega_{L-1}}{\nu_1\nu_2}\frac{1}{L+1}\tilde{\beta}_L 
\tilde{\nu}^{2/L} \tilde{\nu}^2 \frac{1}{\omega_L^{2+2/L}} \\
&\overset{(b)}{=}\frac{1}{L}\frac{1}{\omega_L^{2/L}} \tilde{\nu}^{2/L} N_0 \frac{\tilde{\beta}_L}{\beta_L} \\
&\overset{(c)}{=} \frac{L+2}{L} G(S_L) \tilde{\nu}^{2/L} N_0 \frac{\tilde{\beta}_L}{\beta_L} \\
&= \frac{L+2}L G(S_L) \psi_{n,L}^{2}\nu_c^{2/L} N_\pi^{1/L} N_0 \frac{\tilde{\beta}_L}{\beta_L} 
\end{align*}}
where $(a)$ follows by use of~(\ref{eq:nutilde}), i.e.\ $\tilde{\nu} = \omega_L r^L = \psi_{n,L}^L \nu_c \sqrt{N_\pi}$ and $(b)$ follows by use of~(\ref{eq:psi3L}). Finally, $(c)$ follows since $\omega_L^{-2/L} = (L+2)G(S_L)$.

The proof now follows by using the fact that~(\ref{eq:abm}) is independent of $\lambda_0$ so that, since there are $N_\pi/N_0$ distinct $\lambda_0$'s in $V_\pi(0)$, we get 
\begin{equation*}
\begin{split}
&\frac{1}L\sum_{\lambda_0\in V_\pi(0)}\sum_{m=1}^r a_m b_m m^2 \\
&\qquad= \frac{L+2}L G(S_L) \psi_{n,L}^{2}\nu_c^{2/L} N_\pi^{1/L} N_\pi \frac{\tilde{\beta}_L}{\beta_L}.
\end{split}
\end{equation*}
\end{IEEEproof}

We are now in a position to prove the theorem.

\begin{IEEEproof}[Proof of Theorem \ref{theo:dist_3desc}]
Let $\tilde{\lambda}_i=\mu_i\alpha_i(\lambda_c)$, then asymptotically as $N_i\rightarrow\infty, \nu_i\rightarrow 0, \forall i$, we have that
{\allowdisplaybreaks
\begin{align*}
&\frac{1}{N_\pi}\sum_{\lambda_c\in V_\pi(0)}\left\|\lambda_c - \frac{1}{\kappa}\sum_{i\in\ell}\tilde{\lambda}_i\right\|^2 \\
&\overset{(a)}{=}\frac{1}{N_\pi}
\sum_{\lambda_c\in V_\pi(0)}\left\| \frac{1}{\kappa}\sum_{j\in \ell}\tilde{\lambda}_j - \frac{1}{\bar{\gamma}(\mathcal{L}^{(n,\kappa)})\kappa}\sum_{i=0}^{n-1} \bar{\gamma}(\mathcal{L}_i^{(n,\kappa)})\tilde{\lambda}_i\right\|^2 \\
&\overset{(b)}{=} 
\frac{1}{N_\pi}\sum_{\lambda_c\in V_\pi(0)}\frac{1}{\bar{\gamma}(\mathcal{L}^{(n,\kappa)})^2\kappa^2}
\bigg(\\
&\qquad
\bar{\gamma}(\mathcal{L}^{(n,\kappa)})\sum_{j\in \ell}\sum_{i=0}^{n-1} \bar{\gamma}(\mathcal{L}_i^{(n,\kappa)})\|\tilde{\lambda}_j - \tilde{\lambda}_i\|^2 \\
&\quad
-\bar{\gamma}(\mathcal{L}^{(n,\kappa)})^2\sum_{i=0}^{\kappa-2}\sum_{j=i+1}^{\kappa-1}\|\tilde{\lambda}_{l_i}-\tilde{\lambda}_{l_j}\|^2
\\
&\quad
-\sum_{i=0}^{n-2}\sum_{j=i+1}^{n-1}\bar{\gamma}(\mathcal{L}_i^{(n,\kappa)})\bar{\gamma}(\mathcal{L}_j^{(n,\kappa)})\|\tilde{\lambda}_i - \tilde{\lambda}_j\|^2
\bigg) \\
&\overset{(c)}{=} 
\frac{1}{\bar{\gamma}(\mathcal{L}^{(n,\kappa)})^2\kappa^2}
\bigg(
\bar{\gamma}(\mathcal{L}^{(n,\kappa)})\sum_{j\in
  \ell}\sum_{\substack{i=0\\ i\neq j}}^{n-1}
\bar{\gamma}(\mathcal{L}_i^{(n,\kappa)}) \\
&\quad
-\bar{\gamma}(\mathcal{L}^{(n,\kappa)})^2 \binom{\kappa}{2} 
-\sum_{i=0}^{n-2}\sum_{j=i+1}^{n-1}\bar{\gamma}(\mathcal{L}_i^{(n,\kappa)})\bar{\gamma}(\mathcal{L}_j^{(n,\kappa)})
\bigg) \\
&\quad\times
\frac{L+2}L G(S_L) \psi_{3,L}^{2}\nu_c^{2/L} N_\pi^{1/L} \frac{\tilde{\beta}_L}{\beta_L} \\
&= 
\frac{1}{\bar{\gamma}(\mathcal{L}^{(n,\kappa)})^2\kappa^2}
\bigg(
\kappa^2\bar{\gamma}(\mathcal{L}^{(n,\kappa)})^2 - \bar{\gamma}(\mathcal{L}^{(n,\kappa)})\sum_{j\in \ell}\bar{\gamma}(\mathcal{L}_j^{(n,\kappa)}) \\
&-\bar{\gamma}(\mathcal{L}^{(n,\kappa)})^2 \binom{\kappa}{2} -\sum_{i=0}^{n-2}\sum_{j=i+1}^{n-1}\bar{\gamma}(\mathcal{L}_i^{(n,\kappa)})\bar{\gamma}(\mathcal{L}_j^{(n,\kappa)})
\bigg) \\
&\quad\times\frac{L+2}L G(S_L) \psi_{3,L}^{2}\nu_c^{2/L} N_\pi^{1/L} \frac{\tilde{\beta}_L}{\beta_L} 
\end{align*}
where $(a)$ follows by Proposition~\ref{prop:lclmean}, $(b)$ follows by Lemma~\ref{lem:squard_sum_equiv}, and $(c)$ follows by Proposition~\ref{prop:squared_distances}. The proof now follows by observing that the second term of~(\ref{eq:barDL_1}) is negligible compared to the first term of~(\ref{eq:barDL_1}).}
\end{IEEEproof}

\section{Proof of Lemma~\ref{lem:scec}}\label{app:scec}
We consider a zero-mean and unit-variance Gaussian source $X$ and define three random variables $Y_i \triangleq X + Q_i, i=0,1,2$ where the $Q_i$'s are identically distributed jointly Gaussian random variables (independent of $X$) with variance $\sigma_q^2$ and covariance matrix $Q$ given by 
\begin{equation*}
Q =  \sigma_q^2 
\begin{bmatrix}
1 & \rho & \rho \\
\rho & 1 & \rho \\
\rho & \rho & 1 
\end{bmatrix}
\end{equation*}
where the correlation coefficient satisfies $-\frac{1}2 < \rho \leq \frac{1}{2}$. 
It is easy to show that the MMSE when estimating $X$ from any set of $m$ $Y_i$'s is given by~\cite{pradhan:2004,venkataramani:2003} 
\begin{equation*}
\text{MMSE}_m = \frac{ \sigma_q^2( 1+ (m-1)\rho) }{m + \sigma_q^2(1+(m-1)\rho)}.
\end{equation*}
In the high-resolution case where $\sigma_q^2\ll 1$ it follows that we have
\begin{equation*}
\text{MMSE}_1 = \sigma_q^2
\end{equation*}
\begin{equation*}
\text{MMSE}_2 = \frac{1}{2}\sigma_q^2(1+\rho)
\end{equation*}
and
\begin{equation*}
\text{MMSE}_3 = \frac{1}{3}\sigma_q^2(1+2\rho).
\end{equation*}
It was shown in~\cite{pradhan:2004} that, the description rate $R$ is given by
\begin{equation*}
R = \frac{1}{2}\log_2\left( \frac{1+\sigma_q^2}{\sigma_q^2(1-\rho)}\right) \left( \frac{1-\rho}{1+2\rho}\right)^{\frac{1}{3}}
\end{equation*}
so that 
\begin{align*}
\sigma_q^2 &= \left( (1-\rho)2^{2R} \left( \frac{ 1+2\rho}{1-\rho} \right)^{\frac{1}3} - 1\right)^{-1} \\
&\approx (1-\rho)^{-\frac{2}3}(1+2\rho)^{-\frac{1}3}2^{-2R}
\end{align*}
where the approximation becomes exact at high resolution (i.e.\ for $R\gg 1$).
We can now form the high-resolution distortion product 
\begin{align*}
D^{\pi} &= \frac{\sigma_q^6}{6}(1+\rho)(1+2\rho) \\
&= \frac{1}6(1+\rho)(1-\rho)^{-2}2^{-6R} \\
&= \frac{1}{27} 2^{-6R}
\end{align*}
where the last inequality follows by inserting $\rho\rightarrow -\frac{1}{2}$ which corresponds to having a high side-to-central distortion ratio, i.e.\ it resembles the asymptotical condition of letting $N_i\rightarrow\infty$ in the IA based approach. This proves the lemma. \hfill\IEEEQEDclosed

\section*{Acknowledgment}
The authors are extremely grateful to the referees and the associate editor for providing 
numerous and invaluable critical comments and suggestions, which substantially improved the quality and the presentation of the manuscript.

\bibliographystyle{ieeetr}


\begin{IEEEbiographynophoto}{Jan {\O}stergaard}
received the M.Sc. degree in electrical engineering from Aalborg University, Aalborg, Denmark, in 1999 and the Ph.D. degree (with cum laude) in electrical engineering from Delft University of Technology, Delft, The Netherlands, in 2007. From 1999 to 2002, he worked as an R\&D engineer at ETI A/S, Aalborg, Denmark, and from 2002 to 2003, he worked as an R\&D engineer at ETI Inc., Virginia, United States. Between September 2007 and June 2008, he worked as a post-doctoral researcher in the Centre for Complex Dynamic Systems and Control, School of Electrical Engineering and Computer Science, The University of Newcastle, NSW, Australia. He has also been a visiting researcher at Tel Aviv University, Tel Aviv, Israel, and at Universidad Technica Federica Santa Maria in Valparaiso, Chile.
Jan {\O}stergaard is currently a post-doctoral researcher at Aalborg University, Aalborg, Denmark. He has received a Danish Independent Research Council's Young Researcher's Award and a fellowship from the Danish Research Council for Technology and Production Sciences.
His current research interest include rate-distortion theory, joint source-channel coding, multiple-description coding, distributed source coding, lattice vector quantization, speech, audio, image, and video coding. 
\end{IEEEbiographynophoto}

\newpage

\begin{IEEEbiographynophoto}{Richard Heusdens}
received his M.Sc. and Ph.D. degree from the Delft 
University of Technology, the Netherlands, in 1992 and 1997, respectively. In the spring of 1992 he joined the digital signal processing group at the Philips Research Laboratories in Eindhoven, the Netherlands, where he worked on various topics in the field of signal processing, such as image/video/audio/speech compression, and VLSI architectures for image-processing algorithms. In 1997 he joined the Circuits and Systems Group of the Delft University of Technology, where he was a postdoctoral researcher. In 2000 he moved to the Information and Communication Theory (ICT) Group where he became an assistant professor, responsible for the audio and speech processing activities within the ICT group. Since 2002 he is an associate professor. He held visiting positions at KTH (Royal Institute of Technology) and Philips Research Laboratories. He is involved in research projects that cover subjects such as audio and speech coding, speech enhancement, digital watermarking of audio and acoustical echo cancellation. He is the associate editor of the EURASIP Journal of Applied Signal Processing and the EURASIP Journal on Audio, Speech and Music Processing.
\end{IEEEbiographynophoto}

\begin{IEEEbiographynophoto}{Jesper Jensen}
received the M.Sc degree in electrical engineering and the
Ph.D degree in signal processing from Aalborg University, Aalborg,
Denmark, in 1996 and 2000, respectively.
From 1996 to 2000 he was with the Center for Person Kommunikation (CPK),
Aalborg University, as a Ph.D student and assistant research professor.
From 2000 to 2007 he was a post-doctoral researcher and assistant
professor with Delft University of Technology, The Netherlands, and an
external associate professor with Aalborg University, Denmark.
Currently, he is with Oticon A/S, Denmark. His main research interests
are in the area of acoustical signal processing, including signal
retrieval from noisy observations, coding, speech and audio modification
and synthesis, intelligibility enhancement of speech signals, and
perceptual aspects of signal processing.
\end{IEEEbiographynophoto}

\end{document}